\numberwithin{equation}{section}
\numberwithin{equation}{section}
\theoremstyle{plain}
\newtheorem{thm}{Theorem}[section]
\newtheorem{propn}[thm]{Proposition}
\newtheorem{lemma}[thm]{Lemma}
\newtheorem{cor}[thm]{Corollary}
\newtheorem{conj}[thm]{Conjecture}
\theoremstyle{definition}
\newtheorem{defn}[thm]{Definition}
\theoremstyle{remark}
\newtheorem{rk}[thm]{Remark}
\newcommand{\RR}{\mathbb{R}}
\newcommand{\ZZ}{\mathbb{Z}}
\newcommand{\eps}{\varepsilon}
\newcommand{\dd}{\mathrm{d}}
\newcommand{\UU}{\mathcal U}
\newcommand{\Qq}{\mathcal Q}
\newcommand{\MM}{\mathcal M}
\newcommand{\HH}{\mathcal H}
\newcommand{\Nn}{\mathcal N}
\newcommand{\Rr}{\mathcal R}
\newcommand{\II}{\mathcal I}
\newcommand{\IIa}{\II_{\mathrm{aux}}}
\newcommand{\LL}{\mathcal L}
\newcommand{\Ss}{\mathcal S}
\newcommand{\Tt}{\mathcal T}
\newcommand{\OO}{\mathcal O}
\newcommand{\del}{\partial}
\renewcommand{\epsilon}{\varepsilon}
\renewcommand{\phi}{\varphi}
\renewcommand{\theta}{\vartheta}
\renewcommand{\rho}{\varrho}
\newcommand*\Laplace{\mathop{}\!\mathbin\bigtriangleup}
\newcommand{\SLaplace}{\Laplace \kern-3mm /\kern+1mm}
\newcommand{\SGrad}{\nabla \kern-2.9mm /\kern+1mm}
\newcommand{\rrr}{\tilde r}
\newcommand{\ovt}[1]{\overline{\tilde{#1}}}
\newcommand{\ov}[1]{\overline{#1}}
\newcommand{\e}{\operatorname{e}}
\newcommand{\spur}{\mathrm{tr}}
\author{Dominic Dold}
\title{Global dynamics\\ of asymptotically locally AdS spacetimes\\ with negative mass}
\date{}
\author{Dominic Dold\footnote{\,\,\,Cambridge Centre for Analysis, Department of Pure Mathematics and Mathematical Statistics, University of Cambridge, Wilberforce Road, Cambridge CB3 0AG, United Kingdom\\Email address: 
	\href{mailto:D.Dold@maths.cam.ac.uk}{{D.Dold@maths.cam.ac.uk}}}}
\date{}
\begin{document}

\maketitle
\vspace{-1cm}
\begin{abstract}
The Einstein vacuum equations in five dimensions with negative cosmological constant are studied in biaxial Bianchi IX symmetry. We show that if initial data of Eguchi-Hanson type, modelled after the four-dimensional Riemannian Eguchi-Hanson space, have negative mass, the future maximal development does not contain horizons, i.\,e. the complement of the causal past of null infinity is empty. In particular, perturbations of Eguchi-Hanson-AdS spacetimes within the biaxial Bianchi IX symmetry class cannot form horizons, suggesting that such spacetimes are potential candidates for a naked singularity  to form.
The proof  relies on an extension principle proven for this system and a priori estimates following from the monotonicity of the Hawking mass.
\end{abstract}
\vspace{.2cm}
\tableofcontents

\section{Introduction}

\subsection{The Einstein vacuum equations with negative cosmological constant}

The Einstein vacuum equations in $n$ dimensions ($n>2$)
\begin{align}
\label{eqn:EVE_general_dimension}
\mathrm{Ric}(g)=\frac{2}{n-2}\Lambda g
\end{align}
with cosmological constant $\Lambda\in \RR$ can be understood as a system of second-order partial differential equations for the metric tensor $g$ of an $n$-dimensional spacetime $\left(\MM,g\right)$. Solutions with negative cosmological constant $\Lambda=-(n-1)(n-2)/(2\ell^2)<0$  have drawn considerable attention in recent years, mainly due to the conjectured instability of these spacetimes. For more details, see \citep{Anderson_uniqueness}, \citep{Dafermos-Holzegel_EH}, \citep{Bizon}, \citep{DiasGravitational}, \citep{DiasHorowitzMarolfSantos}, \citep{HolzegelLukSmuleviciWarnick} and references therein.

The system (\ref{eqn:EVE_general_dimension}) is of hyperbolic nature, and studying the dynamic evolution of initial data  is very difficult in general, leading us to take recourse to settings with high degrees of symmetry. In particular, it is desirable to reduce the dimension of the dynamical problem to the simplest case of $1+1$ dimensions. This approach has a longer history for $\Lambda=0$. There, for $n=4$, the only symmetry group achieving the reduction to a $1+1$-dimensional problem whilst consistent with the spacetime being asymptotically flat  is spherical symmetry. However, the well-known Birkhoff theorem prevents any dynamical consideration since such a four-dimensional spacetime is necessarily static, embedding locally into a subset of a member of the Schwarzschild family. 

To study spherically symmetric gravitational dynamics in four dimensions, one can follow the approach of the seminal work by Christodoulou and couple gravity to matter. In a sequence of papers -- see his own survey article \citep{Christodoulou_survey} for references -- he initiated the rigorous analysis of spherically symmetric gravitational collapse for $\Lambda=0$ by studying the Einstein-scalar field system. The model of a real massless scalar field was chosen because, on the one hand, this matter model does not develop singularities in the absence of gravity and, on the other hand, its wave-like character resembles the character of general gravitational perturbations of Minkowski space. Christodoulou's work led to a complete understanding of weak and strong cosmic censorship for this model. His approach has later been extended to other matter models; see \citep{Kommemi} for a systematic overview and references.

Christodoulou's approach was adapted to the context of $\Lambda<0$ by Holzegel and Smulevici in \citep{Holzegel-Smulevici_self-gravitating} and by Holzegel and Warnick in \citep{Holzegel-Warnick_Einstein-Klein-Gordon}, who show well-posedness of the Einstein-Klein-Gordon system with the scalar field satisfying various reflecting boundary conditions at infinity. The work \citep{Holzegel_Smulevici_stability} shows stability of Schwarzschild-AdS in this symmetry class for Dirichlet boundary conditions. A recent breakthrough has been achieved by Moschidis in   \citep{Moschidis_instability} and \citep{Moschidis_well-posedness}; in his work, he shows instability of exact anti-de\,Sitter space as a solution to the Einstein-null dust system in spherical symmetry with an inner mirror.

Another possibility of evading the restrictions of Birkhoff's Theorem is to study (\ref{eqn:EVE_general_dimension}) in higher dimensions. Working in five dimensions and imposing biaxial Bianchi IX symmetry, a symmetry corresponding to a subgroup of $SO(4)$, still reduces the system to 1+1 dimensions and introduces a dynamical variable $B$, not dissimilar to the scalar field in the coupled system. 
This model was introduced by Biz\'on, Chmaj and Schmidt. In \citep{Bizon-Chmaj-Schmidt}, they initiated the study of gravitational collapse for $\Lambda=0$ in this symmetry class by numerical computations; investigations along those lines were continued in \citep{Dafermos-Holzegel_triaxial} and \citep{Bizon-Chamaj-Schmidt_triaxial}.

The study of this system in the realm of negative $\Lambda$ has been initiated by Dafermos and Holzegel in 2006. In \citep{Dafermos-Holzegel_EH} -- now mostly cited for the conjecture of the instability of exact AdS space --, our Corollary~\ref{cor:no_horizons_perturbations} has been put forward without rigorous proof. Back then, the problem of proving local well-posedness for the system in biaxial Bianchi IX symmetry was not solved, thus no extension principle sufficiently strong was available. 
The present paper can be seen as a completion of \citep{Dafermos-Holzegel_EH}, building on the insight into problems in asymptotically locally AdS spacetimes obtained over the past decade;  for an overview of this work, see \citep{Holzegel-Warnick_Einstein-Klein-Gordon}, \citep{Enciso-Kamran} and references therein. 

In five dimensions and for $\Lambda<0$, (\ref{eqn:EVE_general_dimension}) has many static solutions which are asymptotically locally AdS.  A spacetime is asymptotically locally AdS if the  asymptotics of the metric towards conformal infinity $\II$  is modelled after AdS space, but $\II$ need not be $\RR\times S^3$ topologically. Prominent examples of such static solutions are exact $\mathrm{AdS}_5$ space with spherical conformal infinity\footnote{\,\,\,Numerical studies within the biaxial Bianchi IX symmetry class for perturbations of $\mathrm{AdS}_5$ were carried out recently in \citep{Bizon-Rostworowski_AdS5}.}  and the AdS soliton of \citep{Horowitz-Myers_positive} with toric $\II$. Eguchi-Hanson-AdS spacetimes form another such family with $\II\cong \RR\times (S^3/\ZZ_n)$ for $n\geq 3$.\footnote{\,\,\,The space $S^3/\ZZ_n$ is defined in the usual way as the lense space $L(n,1)$.}

\subsection{Spaces of Eguchi-Hanson type}

We introduce four-dimensional Riemannian manifolds modifying Eguchi-Hanson space to the asymptotically locally AdS context. These will serve as initial data for the five-dimensional Einstein vacuum equations
\begin{align}
\label{eqn:EVE_five}
\mathrm{Ric}(g)=\frac{2}{3}\Lambda g
\end{align}
via the local well-posedness Theorem~\ref{thm:general_local_existence}. Our data also exhibit an $SU(2)\times U(1)$ symmetry, thus giving rise to spacetimes with  biaxial Bianchi IX symmetry. Then Eguchi-Hanson-AdS spacetimes form particular examples of the spacetimes thus obtained.

\begin{defn}
	\label{defn:Eguchi-Hanson-type}
	We say that  an initial data set $(\Ss,\ov{g},K)$ to (\ref{eqn:EVE_five}) exhibiting $SU(2)\times U(1)$ symmetry is of Eguchi-Hanson type if
	\begin{align*}
	\Ss=(a,\infty)\times\left(S^3/\ZZ_n\right)
	\end{align*}
	for fixed $a>0$ and $n\geq 3$ satisfying
	\begin{align*}
	\frac{n^2}{4}=1+\frac{a^2}{\ell^2},
	\end{align*}
	and if 
	\begin{align*}
	\ov{g}=\frac{1}{A}\dd\rho^2+\gamma~~~~\mathrm{with}~~~~
	\gamma=\frac{1}{4}r^2\e^{2B}\left(\sigma_1^2+\sigma_2^2\right)+\frac{1}{4}r^2\e^{-4B}\,\sigma_3^2,
	\end{align*}
	where $(\sigma_1,\sigma_2,\sigma_3)$ is a basis of left-invariant one-forms on $SU(2)$ (see below) and 	\begin{align*}
	A,\,r,\,B:\,(a,\infty)\rightarrow\RR
	\end{align*}
	are smooth functions such that the following conditions hold:
	\begin{compactenum}
		\item[(i)] Around the centre $\rho=a$, the functions satisfy the regularity conditions
		\begin{align*}
		A=&\frac{n^2}{a}(\rho-a)+\OO\left((\rho-a)^2\right)\\
		r=&2^{1/3}a^{5/6}(\rho-a)^{1/6}+\OO\left((\rho-a)^{7/6}\right)\\
		B+\log r=&\log a+\OO((\rho-a)).
		\end{align*}
		\item[(ii)] The function $A$ is non-zero on $(a,\infty)$. Moreover 
		\begin{align*}
		A=\frac{\rho^2}{\ell^2}+1+o(1)
		\end{align*}
		as $\rho\rightarrow\infty$.
		\item[(iii)] The function $r$ is the radius of the 3-spheres at $\rho$. Moreover
		\begin{align*}
		r=\rho+\OO(1)
		\end{align*}
		as $\rho\rightarrow\infty$.
		\item[(iv)] For an $R>a$, we have
		\begin{align*}
		\int_{R}^{\infty}\left(\rho^3B^2+\rho^7\left(\del_{\rho} B\right)^2\right)\,\dd\rho <C~~~\mathrm{and}~~~
		\sup_{\rho\in(R,\infty)}|\rho^3\del_{\rho}B|<C.
		\end{align*}
	\end{compactenum}
	We require that
	\begin{align*}
	M:=\lim_{\rho\rightarrow\infty}\left(\frac{r^2}{2}\left[1+\frac{r^2}{12}\left(\left(\spur_{\gamma}K\right)^2-H^2\right)\right]+\frac{r^4}{2\ell^2}\right),
	\end{align*}
	where $H$ is the mean curvature of the symmetry orbits, is finite; we call $M$ the mass of $(\Ss,\ov{g},K)$ at infinity.
\end{defn}

\begin{rk}
\label{rk:after_EH-type}
	\begin{compactenum}
	\item 	The left-invariant one-forms satisfy
		\begin{align*}
		\dd\sigma_1+\sigma_2\land\sigma_3=0,~\dd\sigma_2+\sigma_3\land\sigma_1=0,~\dd\sigma_3+\sigma_1\land\sigma_2=0.
		\end{align*}
		One can choose Euler angles $(\theta,\phi,\psi)$, $0<\theta<\pi$, $0\leq\phi< 2\pi$, $0\leq\psi< 4\pi$ on $SU(2)$ such that
		\begin{align*}
		\sigma_1=\sin\theta\,\sin\psi\,\dd\phi+\cos\psi\,\dd\theta,~~\sigma_2=\sin\theta\,\cos\psi\,\dd\phi-\sin\psi\,\dd\theta,~~
		\sigma_3=&\cos\theta\,\dd\phi+\dd\psi.
		\end{align*}
		In terms of the left-invariant one-forms, the Minkowski metric on $\RR^{5}$ is given by
		\begin{align*}
		g_{\mathrm{Mink}}=-\dd t^2+\dd r^2+\frac{1}{4}r^2\left(\sigma_1^2+\sigma_2^2+\sigma_3^2\right).
		\end{align*}
		The Euler angles $(\theta,\phi,\psi)$ parametrise the 3-sphere away from the poles. By restricting $\psi$ to have period $4\pi/n$, we obtain coordinates on $S^3/\ZZ_n$.
	\item By identity (\ref{eqn:delr_hypersurface}), the notion of mass at infinity is consistent with the renormalised Hawking mass introduced in Definition~\ref{defn:biaxial}.
	\end{compactenum}
\end{rk}

Prima facie it seems as if $\ov{g}$ had a singularity at $\rho=a$. However, one should compare this situation to that of spherical symmetry in spherical coordinates. This intuition is made more precise in the following

\begin{propn}
	\label{propn:centre_EH_space}
	Let $(\Ss,\ov{g})$ be of Eguchi-Hanson type. Then there is a $b>0$ such that for $\rho\in(a,b)$, $(\Ss\cap\{\rho\leq b\},\ov{g}\lvert_{\{\rho\leq b\}})$ has topology $\RR^2\times S^2$ and can be smoothly compactified by adding a 2-sphere at $\rho=a$. The resulting manifold is smooth and has no boundary.
\end{propn}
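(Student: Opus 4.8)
The plan is to show that the apparent singularity at $\rho = a$ is merely a coordinate singularity of the same type as the origin in polar coordinates, the collapsing direction being the Hopf fibre of $S^3/\ZZ_n \to S^2$. First I would rewrite the orbit metric using the Euler angles of Remark~\ref{rk:after_EH-type}: since $\sigma_1^2 + \sigma_2^2 = \dd\theta^2 + \sin^2\theta\,\dd\phi^2$ and $\sigma_3 = \dd\psi + \cos\theta\,\dd\phi$, we have
\begin{align*}
\ov{g} = \frac{1}{A}\,\dd\rho^2 + \frac{1}{4}r^2\e^{2B}\left(\dd\theta^2 + \sin^2\theta\,\dd\phi^2\right) + \frac{1}{4}r^2\e^{-4B}\left(\dd\psi + \cos\theta\,\dd\phi\right)^2 .
\end{align*}
The factor $\tfrac14 r^2\e^{2B}$ multiplying the round base $S^2$ stays bounded away from $0$ as $\rho\to a$, whereas the factor $\tfrac14 r^2\e^{-4B}$ of the fibre one-form collapses; the locus $\rho = a$ should therefore be the bolt $S^2$ obtained by shrinking the Hopf circle to a point over each point of the base.

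Next I would make the collapse quantitative. By (i) the relation $B + \log r \to \log a$ gives $r\e^{B}\to a$, whence $\tfrac14 r^2\e^{2B}\to \tfrac14 a^2$: the base is a round sphere of radius $a/2$, unaffected by the $\ZZ_n$-quotient, which acts only on the fibre coordinate $\psi$. For the fibre, $\e^{-4B} = (r\e^{B})^{-4} r^4$ together with the expansion $r \sim 2^{1/3}a^{5/6}(\rho-a)^{1/6}$ yields $\tfrac14 r^2 \e^{-4B} \sim a\,(\rho-a)$. Introducing the geodesic radial coordinate $R = \int_a^{\rho} A^{-1/2}\,\dd\rho'$, the leading behaviour $A \sim \tfrac{n^2}{a}(\rho-a)$ gives $R \sim \tfrac{2\sqrt{a}}{n}(\rho-a)^{1/2}$, so that $\dd R^2 = A^{-1}\dd\rho^2$ and $\tfrac14 r^2\e^{-4B} \sim \tfrac{n^2}{4}R^2$ as $R\to 0$.

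Thus, in the plane transverse to the bolt, the metric approaches $\dd R^2 + \tfrac{n^2}{4}R^2\,\sigma_3^2$, which near a fixed base point reads $\dd R^2 + R^2\,\dd\tilde\psi^2$ with $\tilde\psi := \tfrac{n}{2}\psi$. The key regularity check is that this be smooth polar coordinates, i.e.\ that $\tilde\psi$ have period $2\pi$: since $\psi$ ranges over $[0,4\pi/n)$ on $S^3/\ZZ_n$, indeed $\tilde\psi = \tfrac{n}{2}\psi$ has period $2\pi$, so no conical defect is produced --- this is exactly the role of the coefficients $n^2/a$, $2^{1/3}a^{5/6}$ and $\log a$ fixed in (i). To promote this to a genuine smooth extension, I would pass to Cartesian coordinates $x + \im y = R\,\e^{\im\tilde\psi}$ and verify that all metric coefficients extend smoothly across $x = y = 0$. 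Because $R^2$ is a smooth, positive multiple of $\rho - a$ --- the substitution $\rho - a = u^2$ makes $R$ an odd smooth function of $u = (\rho-a)^{1/2}$ --- any coefficient smooth in $\rho - a$ is smooth in $R^2 = x^2 + y^2$; writing $\tfrac14 r^2 \e^{-4B} = \tfrac{n^2}{4}R^2\bigl(1 + \OO(R^2)\bigr)$ then shows $\dd R^2 + \tfrac14 r^2\e^{-4B}\,\dd\psi^2$ combines into $\dd x^2 + \dd y^2$ plus terms smooth in $(x,y)$, and the connection piece $\cos\theta\,\dd\phi$ contributes only smooth $(x\,\dd y - y\,\dd x)$-type corrections. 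Using the $SU(2)$-equivariance of $\ov{g}$, which acts transitively on the base $S^2$, smoothness need only be checked at a single bolt point, avoiding the coordinate degeneracy of $(\theta,\phi)$ at the poles.

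Finally, adding the bolt $\{R = 0\}$ fills the puncture fibrewise: a neighbourhood of it is the total space of an $\RR^2$-disc bundle over the base $S^2$, giving the asserted $\RR^2\times S^2$ topology, and the resulting manifold has no boundary at $\rho = a$. I expect the main obstacle to be the smoothness upgrade of the previous paragraph: the cone-angle matching only guarantees a continuous, conically-regular extension, and turning this into a $C^\infty$ metric requires controlling the full $\rho - a$ expansions of $A,r,B$ --- equivalently, their evenness in $R$ --- rather than just the leading terms recorded in (i).
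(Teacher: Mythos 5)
Your proof is correct and follows essentially the same route as the paper's: the paper introduces $z:=\tfrac{4\sqrt a}{n}(\rho-a)^{1/2}$ (twice your geodesic coordinate $R$), reads off the leading transverse metric $\dd z^2+\tfrac{n^2}{4}z^2\,\sigma_3^2$ over a round bolt $S^2$ of radius $a/2$, and uses the $4\pi/n$ period of $\psi$ to rule out a conical defect before adding the $S^2$ at $\rho=a$. If anything you are more careful than the paper, which stops at the leading-order cone-angle check and simply asserts smooth extendibility, whereas you correctly flag (and sketch how to close) the gap between conical regularity and a genuinely $C^\infty$ extension via evenness in $R$ and $SU(2)$-equivariance.
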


\begin{proof}
	Define
	\begin{align*}
	z:=\frac{4\sqrt{a}}{n}\left(\rho-a\right)^{1/2}.
	\end{align*}
	To leading order, we have
	\begin{align*}
	r^2\e^{-4B}\sim\frac{r^6}{a^4}\sim 4a\left(\rho-a\right)=\frac{1}{4}n^2z^2
	\end{align*}
	around $\rho=a$. Thus the metric becomes
	\begin{align*}
	\ov{g}\sim \frac{1}{4}\left(\dd z^2+n^2z^2\left(\dd\psi+\cos\theta\dd\phi\right)^2\right)+\frac{\rho^2}{4}\left(\dd\theta^2+\sin^2\theta\dd\phi^2\right)
	\end{align*}
	to leading order. For fixed $(\theta,\phi)$, the restriction on the range of $\psi$  (see Remark~\ref{rk:after_EH-type}) guarantees that the metric can be extended smoothly to $\rho=a$. By adding an $S^2$ at $\rho=a$, we obtain a manifold without boundary that has local topology $\RR^2\times S^2$.
\end{proof}

\begin{rk}
We immediately see that at infinity, $(\Ss,\ov{g})$ is asymptotically locally AdS.
\end{rk}

\begin{figure}
	\begin{center}
	\vspace{-1.2cm}
		\def\svgwidth{250pt}
		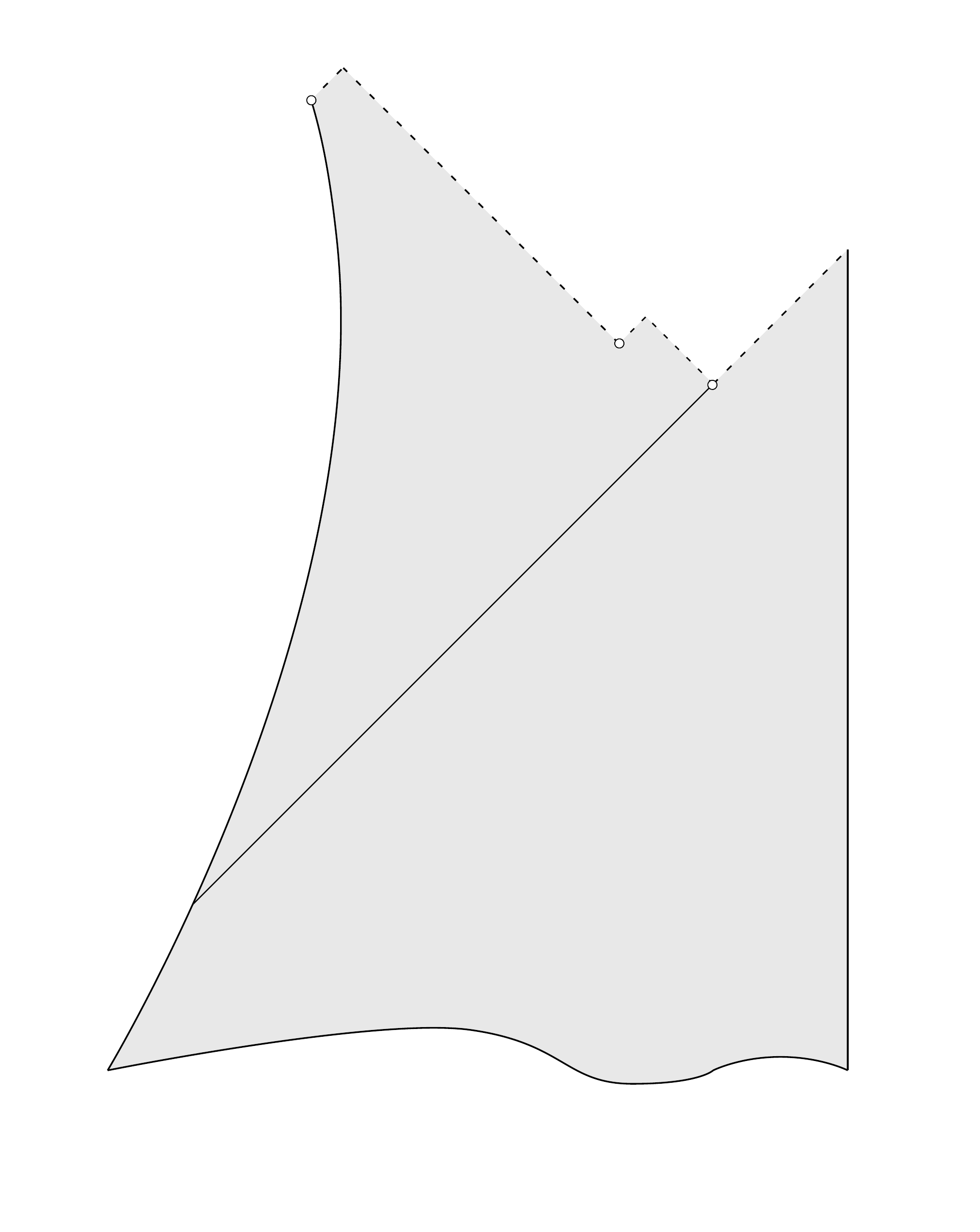
		\caption{A general Penrose diagram of the future maximal development of initial data of Eguchi-Hanson type via Theorem~\ref{thm:general_local_existence}\label{fig:general_Penrose}}
	\end{center}
\end{figure}

Given Eguchi-Hanson-type initial data, the Einstein vacuum equations \eqref{eqn:EVE_five} are well-posed in the biaxial Bianchi IX symmetry class -- see \citep{Dold_thesis} for a proof. We merely state the well-posedness theorem here.

\begin{defn}
	\label{defn:biaxial}
	Let $(\MM,g)$ be a five-dimensional spacetime. Then $(\MM,g)$ exhibits a biaxial Bianchi IX symmetry if, topologically,
	\begin{align*}
	\MM=\Qq\times \left(S^3/\Gamma\right)
	\end{align*}
	for $\Qq$ a two-dimensional manifold (possibly with boundary) and a discrete group $\Gamma\in\{\emptyset,\ZZ_2,\ZZ_3,\ldots\}$ such that
	\begin{align}
	\label{eqn:metric_Bianchi}
	g=&h+\frac{1}{4}r^2\left(\e^{2B}(\sigma_1^2+\sigma_2^2)+\e^{-4B}\sigma_3^2\right).
	\end{align}
	Here $h$ is a Lorentzian metric on $\Qq$, and $r$ and $B$ are smooth real-valued functions on $\Qq$. The value $r(q)$ is the area radius of the squashed sphere through $q\in\Qq$, i.\,e.
	\begin{align*}
	2\pi^2r^3={\mathrm{vol}\left(S^3_q\right)},
	\end{align*}
	where $S^3_q$ is the sphere at $q$. In this symmetry class, we introduce the renormalised Hawking mass   (henceforth referred to as Hawking mass)
	\begin{align*}
	m:\,\Qq\rightarrow\RR
	\end{align*}
	 by
	\begin{align*}
	m=\frac{r^2}{2}\left(1-g\left(\nabla r,\nabla r\right)\right)+\frac{r^4}{2\ell^2}.
	\end{align*}
\end{defn}

\begin{defn}
	A spacetime $(\MM,g)$ exhibiting biaxial Bianchi IX symmetry is asymptotically locally AdS with radius $\ell$ and conformal infinity $\II$ if it is conformally equivalent to a manifold $(\tilde{\MM},\tilde g)$ with boundary $\II:=\del\tilde\MM$ such that 
	\begin{compactenum}[(i)]
	\item Conformal infinity $\II$ has topology $\RR\times (S^3/\Gamma)$.
	\item The inverse $\rrr:=r^{-1}$ is a boundary defining function for $\II$, i.\,e. $\rrr=0$ and $\dd\rrr\neq 0$ on $\II$.
	\item The rescaled metric $r^{-2}g$ is a smooth metric on a neighbourhood of $\II$ in $\tilde{\MM}$.
	\item For small $\rrr>0$, there exist coordinates $(t,\rrr)$ on $\Qq$ such that, locally,	
	\begin{align*}
	h\left(\frac{1}{\rrr^2}\del_{\rrr},\frac{1}{\rrr^2}\del_{\rrr}\right)={\ell^2}\rrr^2+\OO\left(\rrr^4\right).
	\end{align*}
	in a neighbourhood of $\II$.
	\item The quantity $B$ satisfies a Dirichlet boundary condition, i.\,e. $B=0$ on $\II$.
	\end{compactenum}
\end{defn}

\begin{thm}
	\label{thm:general_local_existence}
	Let $(\Ss,\ov{g},K)$ be an asymptotically locally AdS initial data set with mass $M$ at infinity such that $(\Ss,\ov{g},K)$ is of Eguchi-Hanson type. Then there is a $T>0$ and a manifold $\MM:=(-T,T)\times\Ss$ equipped with a metric $g$ exhibiting biaxial Bianchi IX symmetry such that $(\MM,g)$ is asymptoticall locally AdS, $g$ solves \eqref{eqn:EVE_five} and $\{0\}\times \Ss$ has induced metric $\ov{g}$ and second fundamental form $K$. Moreover, $(\MM,g)$ is the unique asymptotically locally AdS solution to \eqref{eqn:EVE_five} with initial data $(\Ss,\ov{g},K)$.
\end{thm}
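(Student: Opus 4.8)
```latex
The plan is to recast the Einstein vacuum equations \eqref{eqn:EVE_five} in biaxial Bianchi IX symmetry as a quasilinear system of evolution equations for the geometric quantities $(r, B, h)$ and then to solve the characteristic (or spacelike) initial value problem in a neighbourhood of $\II$ using the by-now standard machinery for asymptotically AdS spacetimes. First I would exploit the symmetry reduction of Definition~\ref{defn:biaxial}: substituting the ansatz \eqref{eqn:metric_Bianchi} into \eqref{eqn:EVE_five} collapses the five-dimensional system to a coupled $1+1$-dimensional system on $\Qq$. One expects the function $B$ to satisfy a nonlinear wave equation with a potential and lower-order terms (driven by the squashing and the cosmological constant), while $r$ and the Lorentzian metric $h$ are governed by the remaining Einstein equations, which split into evolution equations and constraint equations. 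The constraints should be shown to propagate, so that it suffices to verify them on the initial slice $\{0\}\times\Ss$, where they are exactly the data conditions encoded in Definition~\ref{defn:Eguchi-Hanson-type}.

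The main obstacle, and the feature that distinguishes this from the $\Lambda = 0$ case, is the conformal boundary $\II$ at $r = \infty$: the wave operator degenerates there and $\II$ is a timelike boundary, so well-posedness requires imposing the Dirichlet condition $B = 0$ on $\II$ and treating the boundary term carefully. To handle this I would follow the approach of Holzegel--Warnick \citep{Holzegel-Warnick_Einstein-Klein-Gordon} and Holzegel--Smulevici \citep{Holzegel-Smulevici_self-gravitating}: rescale using the boundary defining function $\rrr = r^{-1}$ so that $\II$ becomes a regular finite boundary, and work with a renormalised unknown $\tilde B = \rrr^{-k} B$ adapted to the decay rate dictated by condition (iv) of Definition~\ref{defn:Eguchi-Hanson-type}, so that the Dirichlet datum is captured in suitably weighted Sobolev spaces. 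The linearised problem is then solved by a degenerate-elliptic/hyperbolic energy estimate with boundary terms that vanish under the Dirichlet condition, yielding a priori bounds in the weighted spaces; the twisted derivative technique of Warnick is the natural tool to absorb the singular potential near $\II$ without losing the boundary control.

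The centre $\rho = a$ is the second place requiring care, since the coordinate $\rho$ is singular there even though the geometry is smooth by Proposition~\ref{propn:centre_EH_space}. Here I would pass to the regular coordinate $z \sim (\rho - a)^{1/2}$ constructed in that proof, under which $\Ss\cap\{\rho \le b\}$ is a smooth manifold $\RR^2 \times S^2$ with the squashed $S^3$ degenerating to $S^2$ at the axis; the regularity conditions (i) are precisely those making $r$, $B$ and the metric extend smoothly across the axis. The local solution near the centre is then obtained by an ordinary (non-degenerate) symmetric-hyperbolic existence argument in the $z$-coordinate, with the axis playing the role of a regular centre analogous to the origin in spherical symmetry, and the matching of the centre, interior and near-$\II$ regions produces the global local-in-time solution on $\MM = (-T, T) \times \Ss$.

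With existence in hand, uniqueness in the asymptotically locally AdS class follows from the same energy estimates applied to the difference of two solutions: the weighted energy of the difference vanishes initially and satisfies a Gr\"onwall-type inequality with no boundary contribution (both solutions satisfy the Dirichlet condition), forcing the difference to vanish on $(-T, T)$ after possibly shrinking $T$. Propagation of the constraints guarantees that the solution constructed from the reduced system is a genuine solution of \eqref{eqn:EVE_five}, and the asymptotic conditions of Definition~\ref{defn:biaxial} are preserved by the flow because they are built into the weighted spaces, so $(\MM, g)$ is asymptotically locally AdS. I would refer to \citep{Dold_thesis} for the full details, as indicated in the excerpt.
```
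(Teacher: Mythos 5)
The paper does not prove Theorem~\ref{thm:general_local_existence} in the text at all: it defers entirely to \citep{Dold_thesis}, noting in Remark~\ref{rk:maximal_development} that the proof there proceeds along the lines of \citep{Holzegel-Warnick_Einstein-Klein-Gordon} and that the statement also follows from the work of Friedrich and Enciso--Kamran \citep{Enciso-Kamran}. Your sketch --- symmetry reduction of \eqref{eqn:EVE_five} to a $1+1$-dimensional system with propagated constraints, Holzegel--Warnick/Warnick-style weighted and twisted energy estimates near $\II$ under the Dirichlet condition $B=0$, the regular coordinate of Proposition~\ref{propn:centre_EH_space} at the centre, and uniqueness via the same estimates applied to differences --- is precisely the route the paper indicates, so it matches the intended proof at the level of detail the paper itself provides.
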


\begin{rk}
	\label{rk:maximal_development}
	\begin{compactenum}
	\item The local well-posedness theorem for an initial data set $(\Ss,\ov{g},K)$ yields the existence of a unique maximal development in the sense of \citep{Holzegel-Smulevici_self-gravitating}.
	\item The proof of the local well-posedness theorem in \citep{Dold_thesis} proceeds along the lines of \citep{Holzegel-Warnick_Einstein-Klein-Gordon}. Well-posedness of the Einstein vacuum equations for $\Lambda<0$ in four dimensions without symmetry assumptions was shown by Friedrich in  \citep{Friedrich}, and a recent generalisation to higher dimensions by Enciso and Kamran is also available; see \citep{Enciso-Kamran}. In particular, Theorem~\ref{thm:general_local_existence} follows from their work. However the theorem as stated is too general for an extension principle (Section~\ref{sec:extension_principle}); so to exploit the monotonicity of the Hawking mass (see Section~\ref{sec:a_priori_no_horizon}), a local well-posedness result in norms propagated by the mass (as in Section~\ref{sec:local_well}) is required.
	\end{compactenum}
\end{rk}

The explicit examples behind this well-posedness theorem are Eguchi-Hanson-AdS spacetimes, constructed in  \citep{Clarkson-Mann}. They form a family of solutions  $(\MM_{\mathrm{EH},a},g_{\mathrm{EH},a})$ to (\ref{eqn:EVE_general_dimension}) in five dimensions. For fixed $\Lambda=-6/\ell^2<0$, they form a one-parameter family of static spacetimes exhibiting biaxial Bianchi IX symmetry.
If we define
\begin{align*}
r=\rho\left(1-\frac{a^4}{\rho^4}\right)^{1/6},~~~~
\Omega^2=1+\frac{\rho^2}{\ell^2},~~~~
B=-\frac{1}{6}\log\left(1-\frac{a^4}{\rho^4}\right)
\end{align*}
and choose coordinates such that
\begin{align*}
h=-\frac{1}{2}\Omega^2\left(\dd u\otimes\dd v+\dd v\otimes\dd u\right),
\end{align*}
with 
\begin{align*}
\dd u=\dd t-\frac{1}{\left(1+\frac{\rho^2}{\ell^2}\right)\left(1-\frac{a^4}{\rho^4}\right)^{1/2}}\dd \rho,~~~~
\dd v=\dd t+\frac{1}{\left(1+\frac{\rho^2}{\ell^2}\right)\left(1-\frac{a^4}{\rho^4}\right)^{1/2}}\dd \rho,
\end{align*}
the metric takes the form
\begin{align*}
g_{\mathrm{EH},a}=&-\left(1+\frac{\rho^2}{\ell^2}\right)\,\dd t^2+\frac{1}{\left(1+\frac{\rho^2}{\ell^2}\right)\left(1-\frac{a^4}{\rho^4}\right)}\,\dd\rho^2\\&+\frac{1}{4}\left(1-\frac{a^4}{\rho^4}\right)\left(\dd\psi+\cos\theta\,\dd\phi\right)^2+\frac{\rho^4}{4}\left(\dd\theta^2+\sin^2\theta\,\dd\phi^2\right)
\end{align*}
in $(t,\rho,\theta,\phi,\psi)$ variables with $\rho\in(a,\infty)$. In the limit $\ell\rightarrow\infty$, the metric $g_{\mathrm{EH},a}$ restricted to hypersurfaces of constant $t$ yields the Riemannian Eguchi-Hanson metric, which was first presented in \citep{Eguchi-Hanson}.

We immediately note:

\begin{propn}
	Let $(\MM_{\mathrm{EH},a},g_{\mathrm{EH},a})$ be an Eguchi-Hanson-AdS spacetime. Then
	\begin{align*}
	M_{\mathrm{EH},a}:=\lim_{\rho\rightarrow\infty}m=-\frac{1}{6}\frac{a^4}{\ell^2}
	\end{align*}
	is negative. At the centre $\rho=a$, the Hawking mass is ill-defined, tending to $-\infty$.
\end{propn}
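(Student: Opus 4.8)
The plan is to compute the Hawking mass $m$ explicitly along the static slices and then pass to the two limits. The crucial simplification is that $(\MM_{\mathrm{EH},a},g_{\mathrm{EH},a})$ is static with $r=r(\rho)$ depending only on the spatial coordinate, and that in the $(t,\rho)$ coordinates displayed above $h$ is diagonal. Hence $\nabla r$ is purely radial and
\[
g(\nabla r,\nabla r)=h^{\rho\rho}(\del_\rho r)^2=\left(1+\frac{\rho^2}{\ell^2}\right)\left(1-\frac{a^4}{\rho^4}\right)(\del_\rho r)^2.
\]
First I would differentiate $r=\rho\,(1-a^4/\rho^4)^{1/6}$; writing $x:=a^4/\rho^4$ this gives $\del_\rho r=(1-x)^{1/6}(1-x/3)/(1-x)$, so that $g(\nabla r,\nabla r)=(1+\rho^2/\ell^2)(1-x/3)^2(1-x)^{-2/3}$ and $r^2=\rho^2(1-x)^{1/3}$. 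Substituting into the definition of $m$ from Definition~\ref{defn:biaxial} reduces both claims to the analysis of an explicit function of $\rho$.

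For the mass at infinity I would expand in $x=a^4/\rho^4=\OO(\rho^{-4})$. The two pieces $\tfrac{r^2}{2}(1-g(\nabla r,\nabla r))$ and $\tfrac{r^4}{2\ell^2}$ each diverge: using $\rho^2 x\to 0$ and $\rho^4 x\to a^4$ one finds that the first behaves like $-\tfrac{\rho^4}{2\ell^2}+\tfrac{a^4}{6\ell^2}+o(1)$ and the second like $\tfrac{\rho^4}{2\ell^2}-\tfrac{a^4}{3\ell^2}+o(1)$. The divergent $\OO(\rho^4)$ and $\OO(\rho^2)$ contributions cancel exactly — this is precisely the role played by the $r^4/(2\ell^2)$ renormalisation — and the finite remainders combine to $\tfrac{a^4}{6\ell^2}-\tfrac{a^4}{3\ell^2}=-\tfrac{a^4}{6\ell^2}$, giving $M_{\mathrm{EH},a}=-\tfrac{1}{6}a^4/\ell^2<0$.

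For the behaviour at the centre I would take $\rho\to a$, i.e.\ $x\to 1$. Here $r^2=\rho^2(1-x)^{1/3}\to 0$ like $(1-x)^{1/3}$, whereas $g(\nabla r,\nabla r)\sim(1+a^2/\ell^2)\tfrac{4}{9}(1-x)^{-2/3}\to+\infty$. Consequently the dominant contribution to $m$ comes from $-\tfrac{r^2}{2}g(\nabla r,\nabla r)\sim-\tfrac{2a^2}{9}(1+a^2/\ell^2)(1-x)^{-1/3}$, which tends to $-\infty$, while the remaining terms $\tfrac{r^2}{2}$ and $\tfrac{r^4}{2\ell^2}$ stay bounded (indeed vanish) in this limit. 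Hence $m\to-\infty$ as $\rho\to a$, so the Hawking mass is ill-defined at the centre.

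The computation is entirely elementary once the static reduction $g(\nabla r,\nabla r)=h^{\rho\rho}(\del_\rho r)^2$ is in place; the only point demanding care is the cancellation at infinity, where individually divergent summands must be expanded to order $\OO(1)$ in $x$ to extract the finite limit. I would therefore organise that step as a single Taylor expansion of $m$ in the variable $x$, tracking all contributions up to order $x$ (equivalently $\rho^{-4}$), rather than handling the three terms separately, so that the exact cancellation of the $\rho^4$ and $\rho^2$ terms is transparent.
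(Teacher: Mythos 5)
Your computation is correct and is exactly the direct verification the paper intends: the proposition is stated there with ``We immediately note'' and no written proof, the claim being an elementary evaluation of $m=\frac{r^2}{2}\left(1-g(\nabla r,\nabla r)\right)+\frac{r^4}{2\ell^2}$ on the static slices. Your expansion in $x=a^4/\rho^4$ (yielding the cancellation of the divergent terms and the limit $-\frac{a^4}{6\ell^2}$ at infinity) and the blow-up $-\frac{2a^2}{9}\left(1+\frac{a^2}{\ell^2}\right)(1-x)^{-1/3}\rightarrow-\infty$ at the centre both check out.
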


\subsection{The main result}
\label{sec:main_results}

The main novel result of this paper consists in showing that for initial data of Eguchi-Hanson type with negative mass, a Penrose diagram such as Figure~\ref{fig:global_geometry} cannot arise.

\begin{thm}
	\label{thm:no_horizons_intro}
	Let $(\Ss,\ov{g},K)$ be of Eguchi-Hanson type with negative mass $M<0$ at infinity. Then there is no future horizon in the maximal development, i.\,e. the causal past of null infinity is empty.
\end{thm}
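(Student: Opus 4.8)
The plan is to reduce the statement to the single claim that the renormalised Hawking mass stays strictly negative, and then to read off from this that no symmetry orbit can become trapped. The starting point is the algebraic identity
\begin{align*}
g(\nabla r,\nabla r)=1+\frac{r^2}{\ell^2}-\frac{2m}{r^2},
\end{align*}
which is immediate from Definition~\ref{defn:biaxial}. It shows that at any point with $r>0$ and $m\leq 0$ one has $g(\nabla r,\nabla r)\geq 1>0$, i.e. the orbit is strictly untrapped. Hence, if one can propagate the a priori bound $m<0$ to the whole region $\{r>0\}$ of the future development $\Qq^+$, then no (marginally) trapped surface exists, so (by the standard characterisation of the black hole region in this symmetry class) the complement of $J^-(\II)$ is empty and there is no future horizon $\HH^+$.

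To obtain this bound I would first fix a double null gauge on $\Qq^+$, writing $h=-\Omega^2\,(\dd u\otimes\dd v+\dd v\otimes\dd u)$, and recall the reduced form of \eqref{eqn:EVE_five}: wave equations for $r$ and $B$ together with the two Raychaudhuri equations. The latter yield the monotonicity of the renormalised Hawking mass in the regular region $\Rr=\{g(\nabla r,\nabla r)>0\}$, the mass being non-decreasing along the outgoing null direction towards $\II$, with the flux governed by the square of the derivative of the squashing $B$. Because the data satisfy a Dirichlet condition $B=0$ at $\II$, there is no mass flux through conformal infinity and the mass at infinity is conserved, equal to $M<0$. Integrating the monotonicity from $\II$ inwards along outgoing rays lying in $\Rr$ then gives $m\leq M<0$ there.

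The a priori bound is then upgraded to all of $\{r>0\}$ by a continuity argument. The region $\Rr$ is open and contains a neighbourhood of the exterior part of the initial slice $\Sigma$ as well as of $\II$; on it the bound $m\leq M<0$ holds, and by the identity above this forces $g(\nabla r,\nabla r)\geq 1$, keeping $\Rr$ uniformly away from the marginally trapped set. Thus $\Rr$ cannot acquire a boundary within $\{r>0\}$: such a boundary point would satisfy $g(\nabla r,\nabla r)=0$ and hence, by continuity of the mass, $m=\tfrac{r^2}{2}\bigl(1+\tfrac{r^2}{\ell^2}\bigr)>0$, contradicting $m\leq M<0$. The only degeneration left open is $r\to 0$, i.e. approach to the axis $\Gamma$. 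This is precisely where the extension principle of Section~\ref{sec:extension_principle} enters: as long as $r$ is bounded away from $0$ and the mass is bounded, the solution extends regularly, so the locus $r=0$ is the regular centre inherited from the Eguchi-Hanson-type data, where the mass tends to $-\infty$, and no additional trapped $r=0$ locus can form. Consequently $\Rr=\{r>0\}$, there are no trapped surfaces, and $\Qq^+=J^-(\II)$.

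The decisive and most delicate step is the monotonicity of the mass. In contrast to a massless scalar field, the squashing $B$ is coupled to a potential arising from the curvature of the squashed three-sphere, and it is exactly this potential that makes the mass of Eguchi-Hanson-AdS negative; one must verify that, after renormalisation, the Hawking mass is still monotone in $\Rr$, so that its negative sign is genuinely transported from $\II$ into the bulk. The secondary obstacle is the behaviour at the axis $\Gamma$, where $r\to 0$ and $m\to -\infty$: establishing that the centre stays regular and that no trapped $r=0$ surface arises requires the quantitative control furnished by the extension principle together with the a priori estimates of Section~\ref{sec:a_priori_no_horizon}.
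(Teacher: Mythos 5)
Your first two steps are sound and in fact coincide with what the paper does: the identity $g(\nabla r,\nabla r)=1+\tfrac{r^2}{\ell^2}-\tfrac{2m}{r^2}$, the monotonicity $\del_u m\leq 0$, $\del_v m\geq 0$ (which hinges on $1-\tfrac{2}{3}R\geq 0$, so the potential from the squashed sphere does not destroy monotonicity), conservation of $m=M$ on $\II$ under the Dirichlet condition, and the resulting bound $m<0$ wherever $r$ is finite, whence $r_u r_v<0$ and no trapped or marginally trapped orbits (this is exactly Proposition~\ref{propn:no_trapping}, and no continuity argument about the set $\{g(\nabla r,\nabla r)>0\}$ is even needed, since $m<0$ gives the sign algebraically). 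The gap is the very next inference: there is no ``standard characterisation'' by which absence of (marginally) trapped surfaces implies $\Qq^+=J^-(\II)$. A horizon is a statement about the global causal structure, not a local one: the spacetime can satisfy $r_u<0<r_v$ everywhere and still possess $\HH^+=\ov{J^-(\II)}\backslash\left(I^-(\II)\cup\ov{\II}\right)\neq\emptyset$, with $r$ merely bounded along the outgoing rays beyond $\HH^+$ and $\II$ terminating at a finite endpoint $i^+$ (Figure~\ref{fig:global_geometry}). The paper flags precisely this in Section~\ref{sec:main_results}: ``the absence of a horizon is a stronger statement than the absence of trapping''; indeed, Proposition~\ref{propn:no_trapping} is proved early and the whole of Sections~\ref{sec:extension_principle} and~\ref{sec:a_priori_no_horizon} is still required afterwards.

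What actually closes the argument in the paper is a contradiction scheme your proposal does not contain. Assuming $\HH^+\neq\emptyset$, Theorem~\ref{thm:character_horizon} (via Lemma~\ref{thm:no_first_singularity}) shows its future endpoint lies on $\ov{\II}$; the soft Lemma~\ref{lemma:start_a_priori} positions a null slice from $\HH^+$ to $\II$ bounding a characteristic triangle with $r\geq r_0>0$ and two-sided bounds on $r_v/\Omega^2$; then the mass monotonicity together with $M<0$ is converted into \emph{quantitative} a priori estimates in that triangle -- the weighted flux bound \eqref{eqn:H1_u_unimproved}, the pointwise bounds \eqref{eqn:bound_r-2B} on $\rrr^{-2}|B|$ and on $rB_u$, and the two-sided bound \eqref{eqn:inf_sup_ru} on $\rrr_u$ -- which verify the hypotheses of the extension principle (Theorem~\ref{thm:extension_principle}, Corollary~\ref{cor:extension_principle}), so the solution extends along $\II$ strictly beyond $\HH^+$, contradicting the definition of the horizon. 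Note also that you invoke the extension principle near the axis $\Gamma$ to rule out a ``trapped $r=0$ locus''; that is a misapplication, since Theorem~\ref{thm:extension_principle} is formulated for triangles anchored at conformal infinity (the centre is handled separately, by the standard interior extension argument in Lemma~\ref{thm:no_first_singularity}, and first singularities on $\ov{\Gamma}\backslash\Gamma$ are \emph{not} excluded -- they are the expected outcome). So your proof establishes the absence of trapping but not the theorem as stated.
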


\begin{cor}
	\label{cor:no_horizons_perturbations}
	Small perturbations of Eguchi-Hanson-AdS spacetimes do not contain future horizons.
\end{cor}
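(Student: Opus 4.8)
The plan is to deduce Corollary~\ref{cor:no_horizons_perturbations} from Theorem~\ref{thm:no_horizons_intro} by showing that small perturbations of an Eguchi-Hanson-AdS spacetime arise from Eguchi-Hanson-type initial data that still carry negative mass at infinity. The essential point is that negativity of the mass $M<0$ is an \emph{open} condition: since $M$ is, by the Proposition above, exactly $M_{\mathrm{EH},a}=-\tfrac{1}{6}a^4/\ell^2<0$ for the background data, a sufficiently small perturbation of the initial data set keeps $M$ negative, and Theorem~\ref{thm:no_horizons_intro} then immediately forbids future horizons in the maximal development. So the real work is to make precise what ``small perturbation'' means and to verify that the perturbed data still belong to the Eguchi-Hanson class of Definition~\ref{defn:Eguchi-Hanson-type}, so that the well-posedness Theorem~\ref{thm:general_local_existence} and then Theorem~\ref{thm:no_horizons_intro} apply.

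Concretely, I would first fix the topology $\Ss=(a,\infty)\times(S^3/\ZZ_n)$ together with the constraint $n^2/4=1+a^2/\ell^2$, and consider perturbations $(\ov{g},K)$ of the background $(\ov{g}_{\mathrm{EH},a},0)$ that preserve the $SU(2)\times U(1)$ symmetry --- i.\,e. perturbations of the profile functions $A$, $r$, $B$ (and of $\mathrm{tr}_\gamma K$, $H$) within the biaxial Bianchi IX ansatz. I would measure smallness in a norm strong enough to control all the quantities appearing in the defining conditions (i)--(iv) and in the definition of $M$: in particular a weighted norm that dominates the integral and supremum bounds in (iv), the regularity expansions at the centre $\rho=a$ in (i), and the asymptotic expansions in (ii)--(iii). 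The point is that the background satisfies these conditions with room to spare, so the regularity at the centre, the non-vanishing of $A$, and the asymptotic fall-off all persist under a sufficiently small perturbation in this norm; hence the perturbed data are again of Eguchi-Hanson type.

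Next I would check the continuity of the mass functional $M$ with respect to this norm. The expression
\begin{align*}
M=\lim_{\rho\rightarrow\infty}\left(\frac{r^2}{2}\left[1+\frac{r^2}{12}\left(\left(\spur_{\gamma}K\right)^2-H^2\right)\right]+\frac{r^4}{2\ell^2}\right)
\end{align*}
is an asymptotic quantity, so the subtlety is that $M$ depends on the subleading behaviour of the data as $\rho\to\infty$; the weighted norm in condition (iv) is precisely tailored to control this tail, which is why it must be built into the notion of smallness. Granting this, $M$ is a continuous function of the data, so perturbations with $|M-M_{\mathrm{EH},a}|<\tfrac{1}{6}a^4/\ell^2$ retain $M<0$, and Theorem~\ref{thm:no_horizons_intro} yields the claim.

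The main obstacle I anticipate is precisely the choice of norm and the verification that $M$ is continuous in it: one must ensure the norm is simultaneously weak enough to be propagated by the flow (cf. the discussion in Remark~\ref{rk:maximal_development} about working in norms controlled by the mass) yet strong enough that the limit defining $M$ is stable, and that conditions (i)--(iv) are genuinely open. A secondary point is bookkeeping the constraint equations: a generic perturbation of $(\ov{g},K)$ need not satisfy the Einstein constraints, so ``perturbation'' should be understood as a perturbation through admissible (constraint-satisfying) Eguchi-Hanson-type data; one should either restrict to the constraint manifold from the outset or invoke that solving the constraints near the background produces nearby data in the required norm. Once these technical points are settled, the corollary follows with no further dynamical input beyond Theorem~\ref{thm:no_horizons_intro}.
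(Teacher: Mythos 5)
Your overall reduction is the right one and matches the paper's logic: the corollary is an immediate consequence of Theorem~\ref{thm:no_horizons_intro} once one knows that small perturbations of Eguchi-Hanson-AdS data remain of Eguchi-Hanson type with negative mass at infinity. Where you diverge is in how you secure $M<0$: you propose to build a weighted norm in which the asymptotic limit defining $M$ is continuous and in which conditions (i)--(iv) of Definition~\ref{defn:Eguchi-Hanson-type} are open, whereas the paper simply invokes Theorem~\ref{thm:EH_minimising_mass} (from \citep{Dafermos-Holzegel_EH}, see also \citep{Clarkson-Mann}), which already asserts the sharper statement $M_{\mathrm{EH},a}<M<0$ for sufficiently small non-zero perturbations --- so no continuity argument, norm construction, or openness verification is needed at all. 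Your route would work in principle and is more self-contained in spirit, but it carries real technical burdens that the citation sidesteps: the notion of smallness must be compatible with the exact leading-order coefficients in condition (i) at the centre $\rho=a$ (these are forced by smoothness at the bolt, not open under generic perturbations, so the perturbation space must be restricted accordingly), and, as you yourself note, one must either work on the constraint manifold or solve the constraints near the background --- both of which are already packaged into the hypotheses of Theorem~\ref{thm:EH_minimising_mass}. The paper's choice also buys the strict lower bound $M>M_{\mathrm{EH},a}$, which plays a role in the instability heuristic of Section~\ref{sec:physical_motivation} via Conjecture~\ref{conj:mass_gap}, whereas your continuity argument only yields $M<0$. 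In short: your proof plan is sound but reproves, in sketch form and with nontrivial gaps to fill (the norm, openness of (i), constraint compatibility), a fact the paper obtains in one line by citing an existing theorem.
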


It is important to stress that the absence of a horizon is a stronger statement than the absence of trapping -- shown in Proposition~\ref{propn:no_trapping} -- for a horizon concerns the causal past of null infinity and hence the global geometry of the spacetime, whereas trapping is a local phenomenon.

Combining Theorem~\ref{thm:no_horizons_intro} with the arguments in Section~\ref{sec:global_biaxial_Bianchi_IX} leaves us with the following dichotomy: either the future development of Eguchi-Hanson-type data with negative mass contains a first singularity in $\ov{\Gamma}\backslash\Gamma$, where $\Gamma$ is the centre -- see Figure~\ref{fig:naked_singularity} --, or no first singularities form at all. 

In virtue of the properties and conjectures described in the next section, our result, restricted to perturbations of Eguchi-Hanson-AdS spacetimes, can corroborate the conjecture put forward in in \citep{Dafermos-Holzegel_EH}:

\begin{conj}
	Small perturbations of Eguchi-Hanson-AdS spacetimes have a Penrose diagram as depicted in Figure~\ref{fig:naked_singularity}. Moreover, $\II$ is future incomplete.
\end{conj}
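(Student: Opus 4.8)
The plan is to combine the no-horizon theorem with a singularity-formation argument and an analysis of the causal structure near conformal infinity. Since the statement is a conjecture and not a theorem, what follows is a program rather than a complete proof, and the principal difficulty, isolated below, is precisely the step that remains open.

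First I would take initial data close, in the norms of Theorem~\ref{thm:general_local_existence}, to the static Eguchi-Hanson-AdS data and evolve them to their maximal development. Because the mass at infinity depends continuously on the data and the background has $M_{\mathrm{EH},a}=-a^4/(6\ell^2)<0$, every sufficiently small perturbation again has negative mass, so Corollary~\ref{cor:no_horizons_perturbations} -- the specialisation of Theorem~\ref{thm:no_horizons_intro} to this setting -- applies: the maximal development contains no future horizon, the causal past of null infinity is all of $\MM$, and hence any singularity that forms is necessarily visible from $\II$. The dichotomy recorded after Theorem~\ref{thm:no_horizons_intro}, which follows from the extension principle of Section~\ref{sec:extension_principle} together with the arguments of Section~\ref{sec:global_biaxial_Bianchi_IX}, now reduces the conjecture to a single alternative: either a first singularity forms on $\ov{\Gamma}\setminus\Gamma$, or no first singularity forms at all.

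The heart of the matter, and the main obstacle, is to exclude the second branch, i.e. to prove that a singularity actually does form at the centre. The static background already indicates the mechanism: on $(\MM_{\mathrm{EH},a},g_{\mathrm{EH},a})$ the area radius $r$ degenerates as $\rho\to a$ and the Hawking mass tends to $-\infty$ there, so the centre is exactly the locus where the geometry is most singular. I would attempt to promote this static degeneracy into a dynamical instability by constructing a monotone functional adapted to the negative-mass setting that detects the collapse of the squashed spheres at the axis -- for instance by tracking $r$ along $\Gamma$ together with the focusing of the dynamical variable $B$ near $\Gamma$ -- and by using the monotonicity of the Hawking mass from Section~\ref{sec:a_priori_no_horizon} to forbid the solution from settling into a regular global-in-time configuration. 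Ruling out global regularity is genuinely open: the no-horizon theorem excludes black-hole formation but is silent on whether the centre stays regular, so new a priori estimates controlling the approach of $r$ to zero at $\Gamma$ are needed here, and this is where I expect the real work to lie.

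Granting that a singularity forms, the two remaining assertions follow more readily. Its nakedness is immediate from the no-horizon conclusion, since $J^-(\II)=\MM$ forces the terminal point of $\ov{\Gamma}\setminus\Gamma$ into the causal past of $\II$, yielding the Penrose diagram of Figure~\ref{fig:naked_singularity}. To obtain future incompleteness of $\II$ I would bound the affine length of $\II$ by integrating along the conformal boundary up to the ingoing null ray emanating from the singular centre, the expectation being that this ray meets $\II$ after finite affine parameter so that $\II$ terminates. Making this precise again requires quantitative control of when and where the central singularity forms, and so remains contingent on resolving the obstacle above.
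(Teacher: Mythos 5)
You have correctly recognised that the statement is a conjecture: the paper offers no proof of it, only heuristic corroboration, and your program reproduces that heuristic faithfully. The paper's own chain is: Theorem~\ref{thm:EH_minimising_mass} guarantees $M_{\mathrm{EH},a}<M<0$ for small non-zero perturbations (so negativity of the mass is a theorem here, not merely a continuity argument); Corollary~\ref{cor:no_horizons_perturbations} then excludes horizons; Lemma~\ref{thm:no_first_singularity} forces any first singularity into $\ov{\Gamma}\backslash\Gamma$; and the exclusion of the remaining regular branch rests on the \emph{unproven} Conjecture~\ref{conj:mass_gap} (no static end states in the mass gap) together with the folklore expectation that a non-static, horizon-free evolution cannot persist globally. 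Your isolation of the open core — proving that a singularity actually forms at the centre — is exactly where the paper stops, so on the parts that are actually established your proposal and the paper agree.

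Two concrete cautions about the program you sketch for the open step. First, a monotone functional built from the Hawking mass near $\Gamma$ is unlikely to work as stated: the paper notes that the renormalised Hawking mass tends to $-\infty$ at the centre $\rho=a$ of the perfectly regular static Eguchi-Hanson-AdS spacetime itself, so the behaviour of $m$ near $\Gamma$ cannot distinguish a regular centre from a singular one; this is precisely why the paper's a priori estimates (Lemma~\ref{lemma:start_a_priori} onward) are run only in regions with $r\geq r_0>0$, bounded away from $\Gamma$. Second, even granting Conjecture~\ref{conj:mass_gap}, the branch ``no first singularity forms at all'' is not excluded by any soft argument: the paper explicitly cites the $\mathrm{AdS}_3$ study \citep{Bizon-three}, where horizons are likewise impossible below a mass threshold and the numerics indicate perturbations that remain globally regular, with turbulence too weak to produce a singularity. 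So ruling out global regularity requires genuinely new quantitative estimates at the axis, not a refinement of the monotonicity already exploited in the no-horizon proof; the same applies to your affine-length bound for the future incompleteness of $\II$, which presupposes the unproven singularity formation. Your proposal is honest about all of this, but be aware that the gap you flag is not one missing lemma — it is the entire open content of the conjecture, and the paper deliberately claims nothing more than the dichotomy plus the horizon exclusion.
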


\begin{figure}
	\begin{minipage}{.45\textwidth}
	\begin{center}
		\vspace{-1cm}
		\def\svgwidth{220pt}
		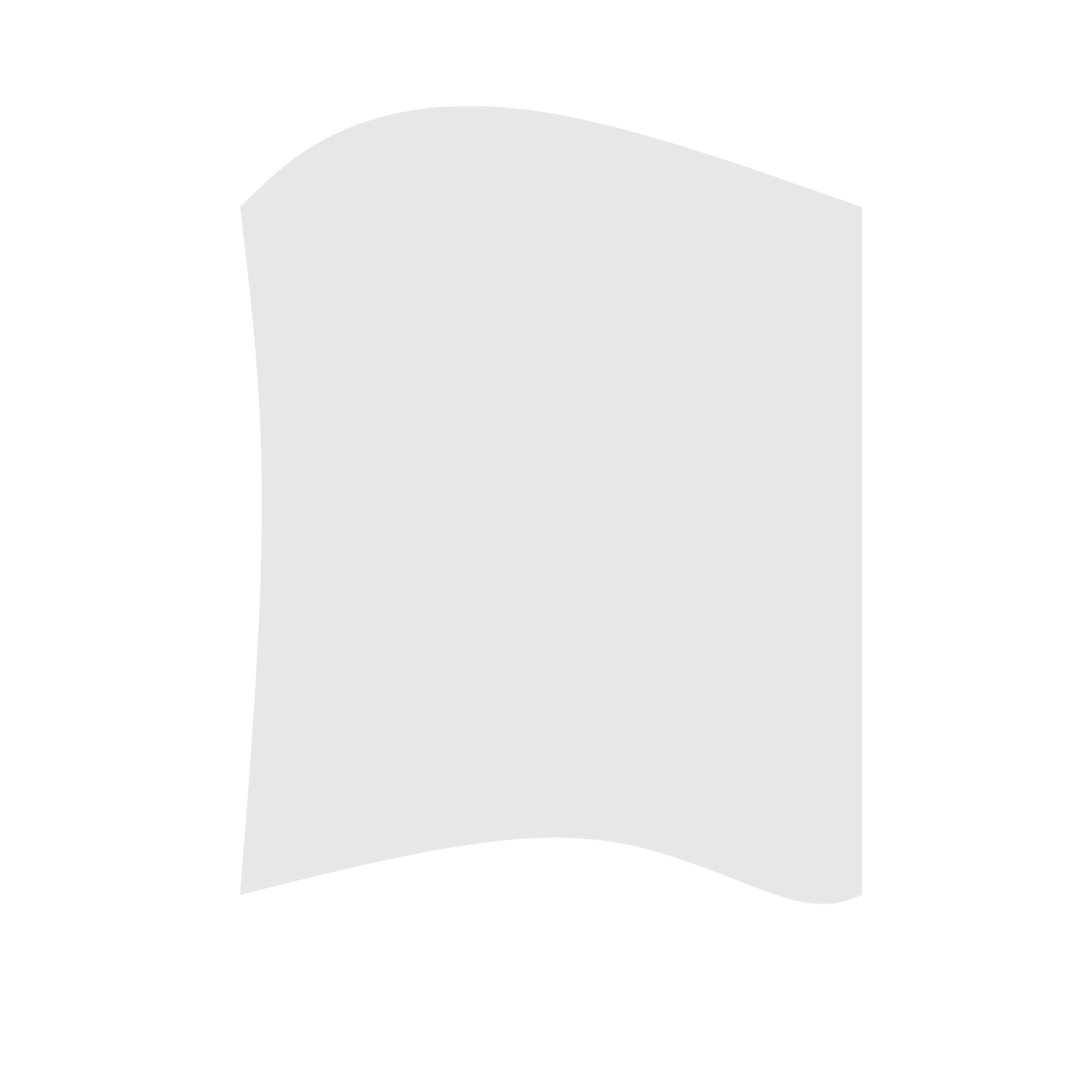
		\vspace{-.9cm}
		\caption{The Penrose diagram depicts the situation of a horizon. The endpoint of $\II$, denoted by $i^+$, is either in $\II$ or its completion; the latter case corresponds to future complete $\II$, the former one to an incomplete $\II$. We will show the absence of a horizon, that is the impossibility of either of those two cases.\label{fig:global_geometry}}
	\end{center}
	\end{minipage}
	\begin{minipage}{.1\textwidth}
	\end{minipage}
	\begin{minipage}{.45\textwidth}
	\begin{center}
		\vspace{-1.8cm}
		\def\svgwidth{180pt}
		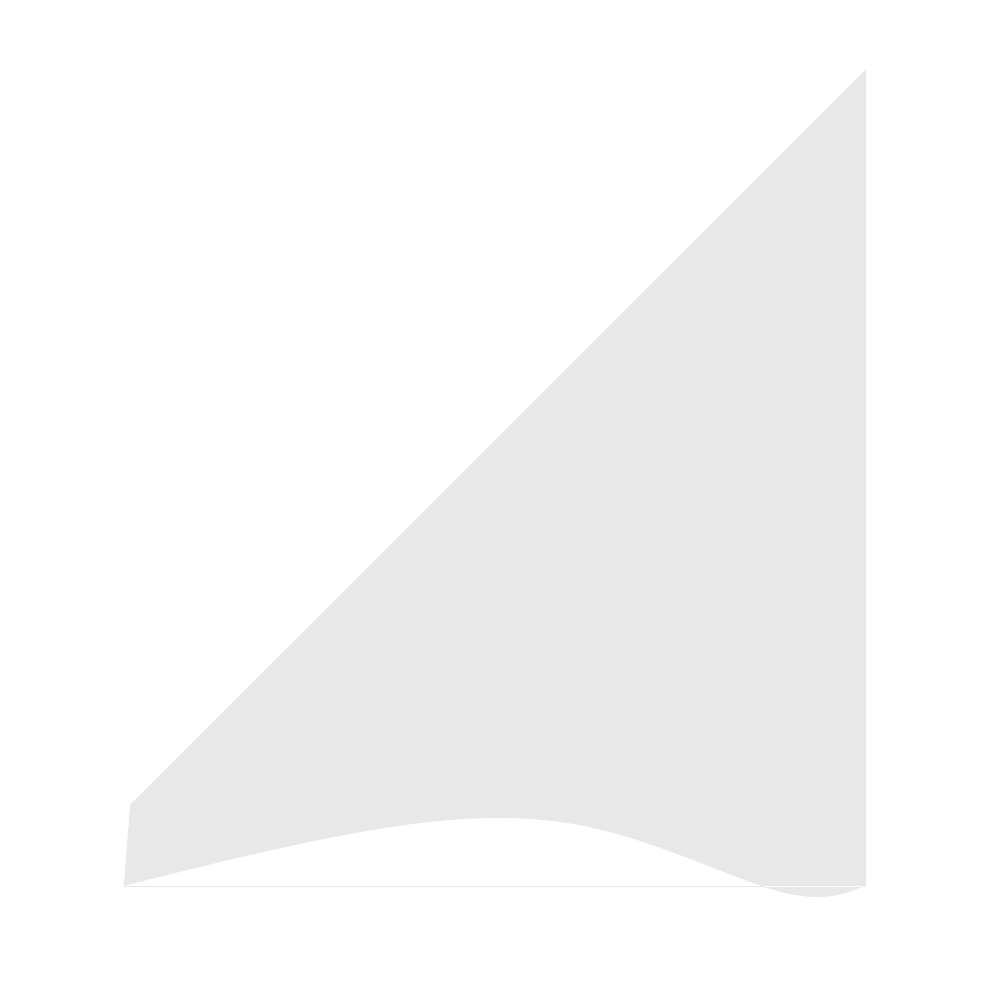
		\vspace{.67cm}
		\caption{The Penrose diagram depicts the formation of a first singularity emanating from the central wordline $\Gamma$.  The singularity is denoted by the circled dot; the dotted line is not part of the spacetime.\label{fig:naked_singularity}}
	\end{center}
	\end{minipage}
\end{figure}


In contrast, in a comparable context where no horizons can form, the work \citep{Bizon-three}, described in the next section, allows for growth of perturbations and global existence of the solution without the formation of a naked singularity. Thus the dynamics is very complicated and the question of the formation of naked singularities remains open.

\subsection{The significance of Eguchi-Hanson-AdS spacetimes}
\label{sec:physical_motivation}

The main motivation that sparked recent interest in asymptotically locally AdS solutions to the Einstein vacuum equations within the physics community is a putative connection between spacetimes of this form and conformal field theories defined on their respective boundaries: the AdS-CFT correspondence. It is of interest to understand what the positivity of gravitational energy means in the conformal field theory and thus `ground states', lowest energy configurations classically allowed, deserve consideration -- see \citep{Galloway-Woolgar} for more details and references on the issue of gravitational energy in this context.

A ground state depends heavily on the topology at infinity. If the spacetime is asymptotically AdS, this ground state is exact anti-de\,Sitter space with vanishing mass -- see \citep{Gibbons_uniqueness}. For asymptotically locally AdS spacetimes with toroidal topology at infinity, the works \citep{Horowitz-Myers}, \citep{Horowitz-Myers_positive} and \citep{Galloway-Woolgar} lend support to the conjecture that the so-called AdS soliton is the ground state in a suitable class of spacetimes.

The article \citep{Clarkson-Mann} was motivated by searching for a spacetime that asymptotically approaches $\mathrm{AdS}_5/\Gamma$, where $\Gamma$ is any freely acting discrete group of isometries, but has energy less than that of $\mathrm{AdS}_5/\Gamma$. This led to the Eguchi-Hanson-AdS solution in five dimensions.
These spacetimes have also been conjectured in \citep{Clarkson-Mann} to have minimal mass among asymptotically locally AdS spacetimes with topology $\mathrm{AdS}_5/\ZZ_n$ at infinity:

\begin{conj}
	Let $(\Ss,\ov g,K)$ be of Eguchi-Hanson type with $\Ss=(a,\infty)\times(S^3/\ZZ_n)$, then
	\begin{align*}
	M\geq M_{\mathrm{EH},a}
	\end{align*}
	with equality if and only if the data agree with those induced by the Eguchi-Hanson-AdS spacetime with parameter $a$.
\end{conj}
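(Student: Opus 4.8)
The plan is to establish the bound as a sharp positive-mass inequality for the initial-data geometry of $\Ss$, exploiting the $SU(2)\times U(1)$ symmetry to reduce everything to ordinary differential equations in $\rho$. First I would write out the Hamiltonian and momentum constraints induced by \eqref{eqn:EVE_five} on $(\Ss,\ov g,K)$; by symmetry these become a coupled ODE system for $A$, $r$, $B$ and the finitely many symmetry-adapted components of $K$. The key structural observation is that the mass aspect
\[
\mu(\rho):=\frac{r^2}{2}\left[1+\frac{r^2}{12}\left((\spur_\gamma K)^2-H^2\right)\right]+\frac{r^4}{2\ell^2},
\]
whose limit as $\rho\to\infty$ is $M$, is a quasi-local quantity whose $\rho$-derivative can be computed using the constraints. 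Since the regularity conditions (i) fix the leading behaviour of $A,r,B$ at the bolt $\rho=a$ to agree with Eguchi--Hanson--AdS (and the relation $n^2/4=1+a^2/\ell^2$ is precisely the smooth-bolt condition), the only Eguchi--Hanson--AdS datum compatible with a given topology $S^3/\ZZ_n$ is a fixed background $(A_0,r_0,B_0)$ with mass $M_{\mathrm{EH},a}$, which will serve as the comparison profile.

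Because $\mu\to-\infty$ at the bolt for every admissible datum, one cannot integrate $\mu'$ from $\rho=a$ directly; instead I would work with the deviation from the background. Introducing $(\delta A,\delta r,\delta B)$ and the corresponding deviation $\delta K$, and inserting the constraints for both the datum and the background, the goal is to derive an integral identity of the schematic form
\[
M-M_{\mathrm{EH},a}=\int_a^\infty \mathcal E[\delta A,\delta r,\delta B,\delta K]\,\dd\rho+\mathcal B,
\]
where the boundary contribution $\mathcal B$ at $\rho=a$ vanishes by the matched centre asymptotics (i), and $\mathcal E$ is a nonnegative quadratic-type energy in the deviation and its first derivative. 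The counterterm supplied by the background is exactly what renders the centre contribution finite, so this plays the role of a substitute for a naive Geroch-type monotonicity, which fails here because of the $-\infty$ at the bolt and because $B_0\neq 0$.

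The hard part is showing $\mathcal E\ge 0$. This requires completing the square in the Hamiltonian constraint: I expect a distortion energy of the form $c\,r^{k}(\del_\rho B-\del_\rho B_0)^2$ together with a potential term in $(B-B_0)$ whose favourable sign is governed by $\Lambda<0$, plus a trace-free contribution $|K-K_0|^2$. The genuine subtlety is the extrinsic-curvature sector: the mass contains the \emph{indefinite} combination $(\spur_\gamma K)^2-H^2$, and the momentum constraint couples $K$ to $r$ and $B$, so the $\delta K$ terms are not manifestly of one sign. I would attempt to dispose of this either by a Jang-type graph reduction to an effectively time-symmetric problem (reducing matters to a purely Riemannian inequality for a deformed metric) or by an algebraic identity that, after substituting the momentum constraint, exhibits the entire $K$-dependence as a perfect square. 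Justifying the integrations by parts and the vanishing of $\mathcal B$ using the decay assumptions (ii)--(iv) is a further technical step but should be routine given the finiteness of $M$.

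For the rigidity statement I would analyse the equality case $\mathcal E\equiv 0$, which should force $\del_\rho(B-B_0)=0$, $B=B_0$ and $K=K_0$; the constraints together with the boundary conditions (i) then integrate uniquely to the Eguchi--Hanson--AdS profile, giving the ``only if'' direction, while the ``if'' direction is the direct computation $M=M_{\mathrm{EH},a}$ of the earlier proposition. An alternative to this entire scheme is a Witten-type spinorial argument on the asymptotically hyperbolic slice: one seeks an imaginary Killing spinor so that a Weitzenböck identity writes $M$ as a nonnegative bulk integral plus a boundary term at the bolt equal to $M_{\mathrm{EH},a}$. Here the principal obstacle is that the lens space $S^3/\ZZ_n$ at infinity need not carry the Killing spinors demanded by the standard asymptotically hyperbolic positive-mass theorems---indeed their absence is exactly what permits $M<0$---so one must construct a spinor adapted to the bolt instead. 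It is this global input (or, equivalently, the sign of $\mathcal E$ in the ODE approach) that I expect to be the main difficulty, and the reason the statement remains conjectural.
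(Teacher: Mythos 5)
You should first note that the paper does not prove this statement at all: it is stated as a conjecture (attributed to \citep{Clarkson-Mann}), and the only rigorous result the paper records in its direction is the perturbative Theorem~\ref{thm:EH_minimising_mass}, which gives $M_{\mathrm{EH},a}<M<0$ only for sufficiently small non-zero perturbations of the Eguchi--Hanson--AdS data. So there is no paper proof to match your argument against, and any complete blind proof would constitute new mathematics.

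Measured against that standard, your proposal is a sensible research programme but contains the gap that keeps the statement conjectural, and you correctly identify it yourself: the nonnegativity of the bulk density $\mathcal E$ (equivalently, the existence of a suitable bolt-adapted Witten spinor) is asserted as a hope, not derived. Two concrete reasons why the completing-the-square scheme does not close as written. First, a quadratic energy in the deviation $(\delta A,\delta r,\delta B,\delta K)$ obtained by expanding about the background controls the mass difference only to second order; this is essentially the content of Theorem~\ref{thm:EH_minimising_mass}, and globalising it requires genuine convexity that is absent here: the potential entering the mass through $1-\tfrac{2}{3}R$ with $R=2\e^{-2B}-\tfrac{1}{2}\e^{-8B}$ is not convex in $B$, so the zeroth-order term in $B-B_0$ has no definite sign for large deviations. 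Second, the extrinsic-curvature sector carries the indefinite combination $\left(\spur_{\gamma}K\right)^2-H^2$, and you offer no identity exhibiting it as a square after substituting the momentum constraint; the Jang-type reduction you invoke as an alternative is itself an open construction in this biaxial, asymptotically locally hyperbolic setting (existence, regularity at the bolt, and the behaviour of the deformed metric at the $S^3/\ZZ_n$ infinity are all unaddressed). Your observation that the lens space at infinity obstructs the standard asymptotically hyperbolic positive-mass machinery --- and that this is precisely what permits $M<0$ --- is correct and is the right diagnosis of where the difficulty lies, but diagnosing the obstruction is not the same as overcoming it. As a result the proposal should be read as an informed explanation of why the inequality is plausible and hard, not as a proof; the rigidity discussion inherits the same gap, since the equality analysis presupposes the unproven identity $M-M_{\mathrm{EH},a}=\int_a^\infty\mathcal E\,\dd\rho$ with $\mathcal E\geq 0$.
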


In a neighbourhood of Eguchi-Hanson-AdS solutions, this was indeed shown to be true:

\begin{thm}[\citep{Dafermos-Holzegel_EH}, see also \citep{Clarkson-Mann}]
	\label{thm:EH_minimising_mass}
	Given any $a>0$, assume initial data $(\Ss,\ov g,K)$ of Eguchi-Hanson type with $\Ss=(a,\infty)\times (S^3/\ZZ_n)$ which are a sufficiently small, but non-zero perturbation of the data induced by the Eguchi-Hanson-AdS spacetime with parameter $a$, then the mass $M$ at infinity satisfies
	\begin{align}
	\label{eqn:mass_gap}
	M_{\mathrm{EH},a}<M<0.
	\end{align}
\end{thm}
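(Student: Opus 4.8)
The plan is to treat the mass $M$ as a functional on the space of Eguchi-Hanson-type data satisfying the Einstein constraint equations, and to show that the Eguchi-Hanson-AdS datum is a strict local minimiser of this functional with critical value $M_{\mathrm{EH},a}$. Imposing the Hamiltonian and momentum constraints on an $SU(2)\times U(1)$-symmetric slice turns them into a coupled ODE system in $\rho$; after fixing the gauge implicit in conditions (ii) and (iii), this expresses $A$, $r$ and the components of $K$ in terms of the squashing $B$ and a single momentum variable. The essential free data are therefore a perturbation $(\delta B,\delta K)$ of the Eguchi-Hanson-AdS profile ($B=-\tfrac16\log(1-a^4/\rho^4)$, $K=0$), subject to the Dirichlet condition $\delta B\to 0$ at $\II$ and the regularity conditions (i) at the centre $\Gamma$. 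A structural point is that conditions (i) fix the leading behaviour at $\rho=a$ to coincide with Eguchi-Hanson-AdS to the relevant order, so the divergence $m\to-\infty$ at the centre is universal and cancels in the difference $m-m_{\mathrm{EH},a}$; hence $M-M_{\mathrm{EH},a}=\lim_{\rho\to\infty}(m-m_{\mathrm{EH},a})$ is a finite quantity controlled by a bulk integral over $(a,\infty)$ rather than by the singular centre value.

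Next I would expand $M$ about the Eguchi-Hanson-AdS datum to second order, $M=M_{\mathrm{EH},a}+DM\cdot(\delta B,\delta K)+\tfrac12 D^2M\big((\delta B,\delta K),(\delta B,\delta K)\big)+\OO\!\left(\norm{\delta}^3\right)$, using the constraint-reduced expression for $\del_\rho m$. The first variation $DM$ vanishes because Eguchi-Hanson-AdS is a static solution of \eqref{eqn:EVE_five}: static solutions are precisely the critical points of the mass at fixed boundary data, so after integration by parts the bulk term drops out and the boundary contributions vanish by the Dirichlet condition at $\II$ and the regularity at $\Gamma$. The contribution of $\delta K$ to the second variation is manifestly non-negative, since $K$ enters the mass quadratically through $(\spur_{\gamma}K)^2$ with the positive coefficient $\tfrac{r^4}{24}$ and vanishes on the background.

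The main obstacle is the positivity of $D^2M$ in the geometric direction $\delta B$. After the reduction this becomes a one-dimensional quadratic form of Schrödinger type,
\[
D^2M(\delta B,\delta B)=\int_a^\infty\Big(\alpha(\rho)\,(\del_\rho\delta B)^2+\beta(\rho)\,(\delta B)^2\Big)\,\dd\rho,
\]
with $\alpha>0$, and one must show it is positive on the space of $\delta B$ satisfying the Dirichlet condition at $\II$ and the regularity condition at $\Gamma$. Since $\beta$ need not be pointwise non-negative, this requires a Hardy/Poincar\'e-type inequality adapted to the asymptotically locally AdS weight as $\rho\to\infty$ and to the degeneration of the metric near $\rho=a$; the Dirichlet boundary condition $B=0$ at $\II$ — equivalently the above-threshold (Breitenlohner-Freedman) nature of the squashing mode in AdS — is exactly what renders the form coercive. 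Establishing this coercivity, and thereby strict positivity for $\delta B\neq 0$, is the heart of the argument, and it is here that the precise Eguchi-Hanson-AdS profile enters.

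Granting this, the second variation is positive definite, so a vanishing first variation together with a positive-definite Hessian yields a strict local minimum: $M>M_{\mathrm{EH},a}$ for every sufficiently small, non-zero perturbation, which is the lower bound in \eqref{eqn:mass_gap}. For the upper bound, $M$ depends continuously on the data and $M_{\mathrm{EH},a}=-\tfrac{a^4}{6\ell^2}<0$, so after shrinking the perturbation if necessary one has $M<0$. This establishes \eqref{eqn:mass_gap}.
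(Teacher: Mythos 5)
First, a point of comparison: the paper itself does not prove Theorem~\ref{thm:EH_minimising_mass} --- it is quoted from \citep{Dafermos-Holzegel_EH} (see also \citep{Clarkson-Mann}) --- so there is no in-paper argument to measure you against, only the cited work. Your overall strategy --- regard $M$ as a functional on constraint-satisfying Eguchi-Hanson-type data, note that the regularity conditions (i) make the divergence $m\to-\infty$ at the centre universal so that $M-M_{\mathrm{EH},a}$ is controlled by a bulk integral, kill the first variation by staticity, and conclude at second order --- is the natural route, and its perturbative character matches the smallness hypothesis in the statement. The observation that the monotonicity of the Hawking mass alone is useless here (since $m$ starts at $-\infty$ at the centre on every slice) and that one must work with the difference of masses is correct and important.

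As a proof, however, the proposal has genuine gaps, the largest of which you concede yourself: the coercivity of the form $\int_a^\infty\bigl(\alpha(\del_\rho\delta B)^2+\beta(\delta B)^2\bigr)\dd\rho$ is simply assumed (``Granting this\ldots''). That Hardy--Poincar\'e inequality, with the asymptotically locally AdS weight as $\rho\to\infty$, the degeneration at $\rho=a$, and the specific $\beta$ generated by the Eguchi-Hanson-AdS profile, \emph{is} the mathematical content of the theorem; deferring it leaves a programme, not a proof. Second, the claim that the $\delta K$ block of the Hessian is ``manifestly non-negative'' because $(\spur_{\gamma}K)^2$ carries the coefficient $r^4/24$ in the definition of $M$ does not stand: that formula is a limit at infinity, where a decaying $\delta K$ contributes nothing through the explicit term, and the actual dependence of $M$ on $\delta K$ is mediated by the constraints, in which the kinetic density enters as $(\spur K)^2-|K|^2$ (indefinite in general) and $\delta K$ also perturbs $H^2$, which appears with a \emph{negative} sign; establishing the sign of this block requires using the momentum constraint within the reduced symmetry, not the boundary formula. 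Third, in infinite dimensions a vanishing first variation plus a positive-definite second variation does not by itself yield a strict local minimum: you need coercivity of the Hessian in a norm strong enough to dominate the cubic remainder of your expansion (condition (iv) of Definition~\ref{defn:Eguchi-Hanson-type} indicates the relevant weighted norm), and you must also verify that the boundary terms at $\Gamma$ in the first-variation computation vanish, which is delicate precisely because the background Hawking mass diverges there. The concluding continuity argument for $M<0$ is fine once $M$ is shown continuous in that same norm.
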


Motivated by the static uniqueness theorem for exact AdS space, one conjectures -- see \citep{Dafermos-Holzegel_EH}:

\begin{conj}
	\label{conj:mass_gap}
	There are no static, globally regular asymptotically locally AdS solutions to (\ref{eqn:EVE_five}) with topology $S^3/\ZZ_n$ with mass $M$ satisfying (\ref{eqn:mass_gap}).
\end{conj}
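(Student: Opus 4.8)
The plan is to use the fact that staticity together with biaxial Bianchi IX symmetry collapses \eqref{eqn:EVE_five} to a system of ordinary differential equations in the radial variable $\rho$, and to isolate the \emph{one} equation that a static solution satisfies over and above the constraints obeyed by a general Eguchi-Hanson-type slice. This is essential: by Theorem~\ref{thm:EH_minimising_mass} there already exist (generally dynamical) Eguchi-Hanson-type data whose mass lies strictly in the forbidden interval $(M_{\mathrm{EH},a},0)$, so any proof must use staticity in an irreducible way and cannot follow from the constraint equations or from the monotonicity of the Hawking mass alone.

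Concretely, I would write the static metric as $g=-V^2\,\dd t^2+\ov{g}$, with lapse $V$ and the slice metric $\ov{g}$ of the Eguchi-Hanson form, all quantities depending on $\rho$ only. The $tt$-component of \eqref{eqn:EVE_five} becomes the lapse equation $\Delta_{\ov{g}}V=\tfrac{4}{\ell^2}V$, while the spatial components give $\mathrm{Ric}(\ov{g})-V^{-1}\nabla^2 V=\tfrac{2}{3}\Lambda\,\ov{g}$. Together with the regularity conditions of Definition~\ref{defn:Eguchi-Hanson-type} at the centre and the asymptotically locally AdS conditions with Dirichlet data $B=0$ at $\II$, this is a two-point boundary value problem whose explicit solution is Eguchi-Hanson-AdS. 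The conjecture is the rigidity statement that no other globally regular solution of this problem has $M\in(M_{\mathrm{EH},a},0)$; recall that global regularity forces a smooth cap at $\rho=a$ in the sense of Proposition~\ref{propn:centre_EH_space}, so that even $\mathrm{AdS}_5/\ZZ_n$ (with its orbifold point) is excluded from the competitor class.

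My main tool would be a Boucher--Gibbons--Horowitz / Pohozaev-type integral identity, obtained by integrating a suitable combination of the lapse equation and the spatial equations --- contracted against $V$ and against the conformal radial field $\rho\,\del_\rho$ --- from the centre to $\II$. The aim is a mass formula of the schematic form
\begin{align*}
M-M_{\mathrm{EH},a}=(\text{boundary term at }\II)+(\text{sign-definite bulk integral in }B,\,\del_\rho B,\,V),
\end{align*}
in which the boundary term is pinned by the Dirichlet condition $B=0$ and the fixed conformal geometry of $S^3/\ZZ_n$, while the centre contribution is forced to the Eguchi-Hanson value by the smooth cap. Here the positivity of the lapse $V>0$ throughout the slice --- precisely where staticity and the absence of a horizon enter, as opposed to the dynamical constraints --- should drive the sign of the bulk term, yielding $M\geq M_{\mathrm{EH},a}$ with equality only when $B\equiv 0$, i.e. for Eguchi-Hanson-AdS itself. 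This would also make contact with, and sharpen, the local minimisation of Theorem~\ref{thm:EH_minimising_mass}.

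The hard part, and the reason the statement remains a conjecture, is closing the interval all the way up to $0$: a one-sided identity of the above type gives only $M\geq M_{\mathrm{EH},a}$, whereas excluding the \emph{open} gap $(M_{\mathrm{EH},a},0)$ requires a genuine rigidity mechanism showing that the mass of regular static profiles cannot vary continuously from $M_{\mathrm{EH},a}$ into negative values. I would attempt this by a continuation argument --- parametrising regular solutions by a single shooting parameter at the cap (after the constraint $n^2/4=1+a^2/\ell^2$ fixes $a$) and analysing the sign of $1-\ov{g}(\nabla r,\nabla r)=\tfrac{2}{r^2}\bigl(m-\tfrac{r^4}{2\ell^2}\bigr)$ along the slice together with the monotonicity of the Hawking mass from Section~\ref{sec:a_priori_no_horizon} --- to show that any deformation of $B$ away from the Eguchi-Hanson profile preserving both boundary conditions must push $M$ to values $\geq 0$. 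I expect the delicate point to be the dependence on the lens-space topology, which enters only through the centre constraint and the period $4\pi/n$ of $\psi$; controlling it will likely require either a spinorial/Witten-type positivity argument adapted to $S^3/\ZZ_n$ or a complete phase-plane analysis of the reduced ODE system, establishing that the Eguchi-Hanson branch is isolated in the mass parameter.
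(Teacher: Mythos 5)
You are attempting to prove Conjecture~\ref{conj:mass_gap}, which the paper explicitly leaves open: it is put forward (following \citep{Dafermos-Holzegel_EH}, by analogy with static uniqueness of exact AdS) with no proof, so there is no argument in the paper to compare against, and your text must stand on its own as a proof attempt. As such it has a genuine gap, which you yourself concede: the Boucher--Gibbons--Horowitz/Pohozaev identity you propose, even granting that the sign-definite bulk term and the pinned boundary terms could be established, proves the \emph{wrong} statement. An inequality $M\geq M_{\mathrm{EH},a}$ with equality exactly for Eguchi-Hanson-AdS is the content of the mass-minimisation conjecture stated just before Theorem~\ref{thm:EH_minimising_mass}; it is perfectly consistent with the existence of a globally regular static solution of mass $M\in(M_{\mathrm{EH},a},0)$, which is precisely what Conjecture~\ref{conj:mass_gap} forbids. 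Excluding the \emph{open} interval \eqref{eqn:mass_gap} requires showing that the masses of regular static profiles are isolated at $M_{\mathrm{EH},a}$ (with the next admissible value $\geq 0$), and no identity of the one-sided type you write down contains such a quantisation mechanism.

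The remedies you sketch for this gap do not close it. The spinorial/Witten-type route is obstructed for exactly the reason Eguchi-Hanson-AdS exists at all: the lens space $S^3/\ZZ_n$ ($n\geq 3$) with the relevant spin structure does not admit the asymptotic imaginary Killing spinors on which the positivity argument rests --- the quotient breaks the supersymmetry underlying Witten-type positivity, which is why negative-mass solutions are possible in this topology; ``adapting'' the argument to $S^3/\ZZ_n$ would have to destroy precisely the structure it needs. The shooting/continuation argument is where the missing idea would have to live, but as described it has no engine: the monotonicity of the Hawking mass, \eqref{eqn:m_u}--\eqref{eqn:m_v}, is a statement about null derivatives that on a static slice reduces to radial monotonicity of $m$ and holds for \emph{all} Eguchi-Hanson-type data, static or not, so it cannot detect staticity; likewise the sign of $1-\ov{g}(\nabla r,\nabla r)$ follows from $m<0$ alone (cf.\ the proof of Proposition~\ref{propn:no_trapping}) and supplies no rigidity. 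A further caution: BGH-type identities in the static asymptotically (locally) AdS setting characteristically bound the mass \emph{above} (as in the $M\leq 0$ statement behind static uniqueness of exact AdS), and obtaining a lower bound relative to $M_{\mathrm{EH},a}$ with the smooth cap of Proposition~\ref{propn:centre_EH_space} as the inner boundary would itself be a new result, not a routine integration. In short, your proposal is a reasonable research programme whose first half targets the minimisation conjecture rather than this one, and whose second half names, but does not supply, the rigidity mechanism that makes the statement a conjecture.
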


Thus, fixing $a$, the Eguchi-Hanson-AdS spacetime satisfying (\ref{eqn:mass_gap}) can be seen as the ground state in the biaxial Bianchi IX symmetry class. There is a folklore statement that such ground states would be stable under gravitational perturbations. However, in contrast, the present work, paired with the above conjectures, heuristically hints at an instability: Perturbing an Eguchi-Hanson-AdS spacetime slightly increases its mass at infinity, whilst remaining negative; therefore, by Corollary~\ref{cor:no_horizons_perturbations}, the future maximal development cannot contain a black hole, but by Conjecture~\ref{conj:mass_gap}, there is no static end state for the perturbation, which intimates that a first singularity forms, emanating from the centre. Therefore, such perturbations are potential candidates for examples of the formation of naked singularities.

It is interesting to note that a dual situation is found for perturbations of $\mathrm{AdS}_3$, as investigated in \citep{Bizon-three}. There, small perturbations of three-dimensional AdS space were studied numerically as solutions to the Einstein-scalar field system. The parallel to our case is that in three dimensions, there exists a mass threshold below which no black holes can form. In contrast, while the numerical computations of \citep{Bizon-three} suggest turbulence which cannot be terminated by a black hole formation, they provide evidence that small perturbations remain globally regular in time since the turbulence is too weak.

Finally, studying five-dimensional static spacetimes for various values of $\Lambda$ or, more precisely, classifying their four-dimensional Riemannian counterparts is still an active field of research in geometry. It is known that there are exactly four complete non-singular four-dimensional Ricci flat Riemannian spaces: Euclidean space, Eguchi-Hanson space, self-dual Taub-NUT space and Taub-Bolt space. See \citep{Gibbons_talk} for further details. Moreover, Eguchi-Hanson space has been used in geometric gluing constructions; see \citep{Biquard} and \citep{Brendle}. For more results in this realm, both classical and recent, see \citep{Belinskii-Gibbons-Page-Pope}, \citep{LeBrun}, \citep{EH_notes}, \citep{Brendle}
and references therein.

\begin{figure}
		\begin{center}
			\vspace{-.9cm}
			\def\svgwidth{250pt}
			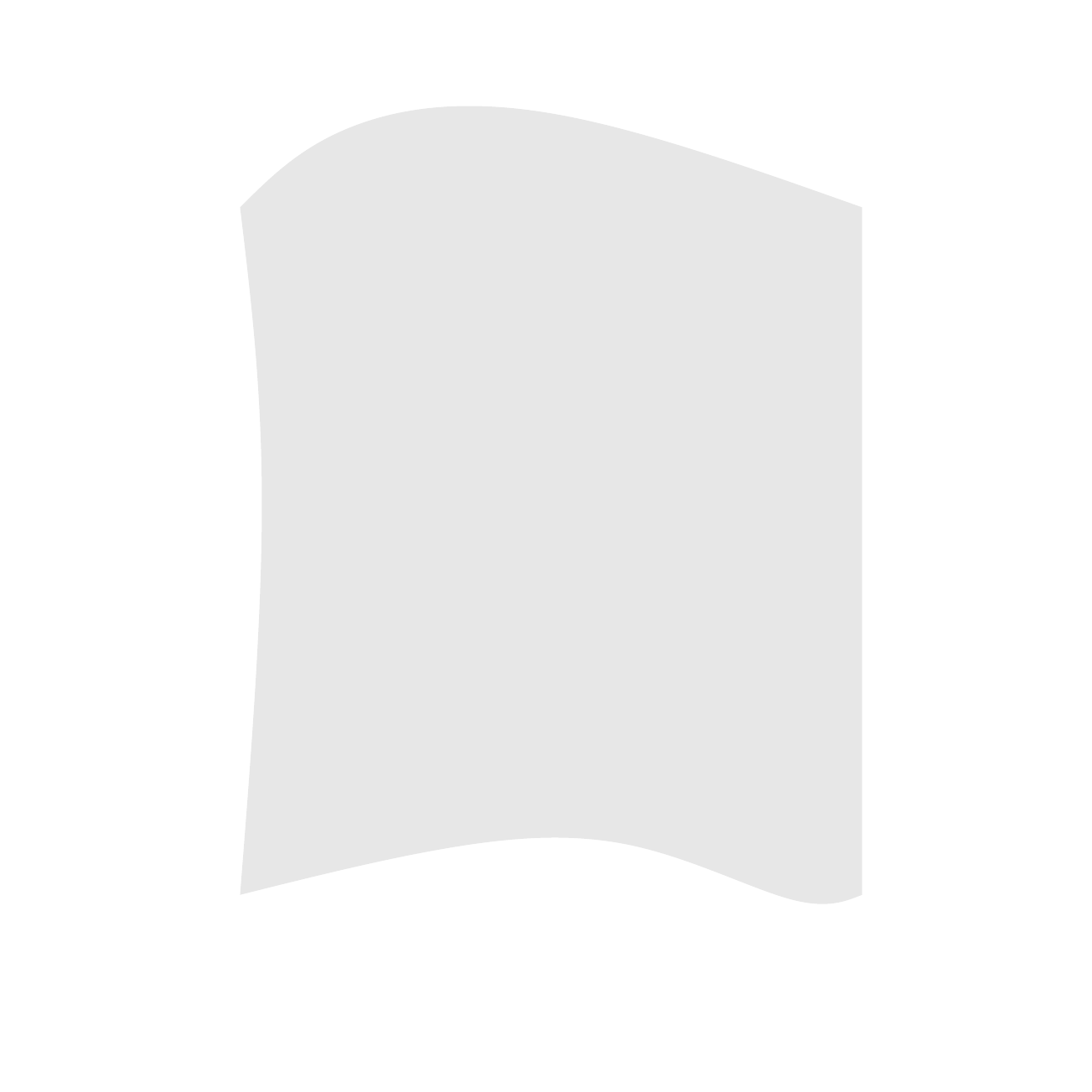
			\vspace{-.7cm}
			\caption{We can achieve that the initial data slice touches null infinity and does not reach $\Gamma$ by moving from a slice such as $\Nn_1$ to $\Nn_2$. \label{fig:globabl_geometry_moving_hypersurface}}
		\end{center}
\end{figure}

\subsection{Outline of the paper}
\label{sec:outline}

From the local well-posedness theorem (Theorem~\ref{thm:general_local_existence}), we obtain the existence of a maximal development of Eguchi-Hanson-type data, with $B$ satisfying a Dirichlet boundary condition at infinity.  The global geometry of spacetimes arising from such data is described in Section~\ref{sec:global_biaxial_Bianchi_IX}. We also prove in that section that the spacetime is either globally regular without a horizon, or forms a horizon, or evolves into a first singularity at the centre. We proceed to show that no horizons can form in the dynamical evolution.

Proving the absence of horizons will take the structure of an argument by contradiction (Section~\ref{sec:a_priori_no_horizon}). Suppose that the Penrose diagram of the spacetime looks like Figure~\ref{fig:globabl_geometry_moving_hypersurface}. By soft arguments relying on the well-posedness result, we show that one can always find a null hypersurface such as $\Nn_2$ that does not intersect the initial hypersurface, but reaches from the horizon to null infinity. In Section~\ref{sec:extension_principle}, an extension principle is shown for triangular regions around null infinity -- such as the one enclosed by $\HH^+$, $\Nn_2$ and $\II$ --, which permits to extend the solutions to the future to a strictly larger triangle, provided uniform bounds hold in the triangular region. The proof of the extension principle uses the local existence result proved in \citep{Dold_thesis} and reviewed in Section~\ref{sec:local_well}. We then proceed (Section~\ref{sec:a_priori_no_horizon}) to establish that those quantities can indeed be bounded only in terms of their values on $\Nn_2$, where they hold by compactness of $\Nn_2$ and the local well-posedness result. Thus we can extend the solution along $\II$ beyond $\HH^+$, which is a contradiction.

\section{Local well-posedness}
\label{sec:local_well}

Before beginning with the proof of the main result, we review the local well-posedness result around null infinity of \citep{Dold_thesis} since the extension principle of Section~\ref{sec:extension_principle} relies on it. 
The exposition of this section parallels that of \citep{Holzegel-Warnick_Einstein-Klein-Gordon}. This will allow the reader familiar with that argument to gain quick access to the problem at hand.

Proving local well-posedness in the context of negative cosmological constant around infinity has been achieved for the four-dimensional Einstein-Klein-Gordon system in \citep{Holzegel-Smulevici_self-gravitating} and \citep{Holzegel-Warnick_Einstein-Klein-Gordon}. Several differences arise in the present context, which are outlined in \citep{Dold_thesis}. However, we can follow the general strategy of \citep{Holzegel-Warnick_Einstein-Klein-Gordon}. We first define 
the triangle
\begin{align*}
\Delta_{\delta,u_0}:=\{(u,v)\in\RR^2\,:\,u_0\leq v\leq u_0+\delta,\,v<u\leq u_0+\delta\}
\end{align*}
and the conformal boundary
\begin{align*}
\II:=\ov{\Delta_{\delta,u_0}}\backslash\Delta_{\delta,u_0}.
\end{align*}
See Figure~\ref{fig:triangular_domain} for a visualisation.
Our dynamical variables are
\begin{align*}
(\rrr,m,B):\,\Delta_{\delta,u_0}\rightarrow\RR_+\times\RR^2.
\end{align*}
We treat these as defining auxiliary variables
\begin{align}
\label{eqn:auxiliary}
r:=&\frac{1}{\rrr},~~~~~~~~1-\mu:=1-\frac{2m}{r^2}+\frac{r^2}{\ell^2},~~~~~~~~
\Omega^2:=-\frac{4r^4\rrr_u\rrr_v}{1-\mu}.
\end{align}

The general discussion of Appendix~\ref{sec:EVE_biaxial_Bianchi_IX_general} show that the correct notion of a solution to the Einstein vacuum equations in the triangular region is encapsulated in the following

\begin{wrapfigure}{r}{.3\textwidth}
	\vspace{.1cm}
	\hspace{.2cm}
	\def\svgwidth{200pt}
	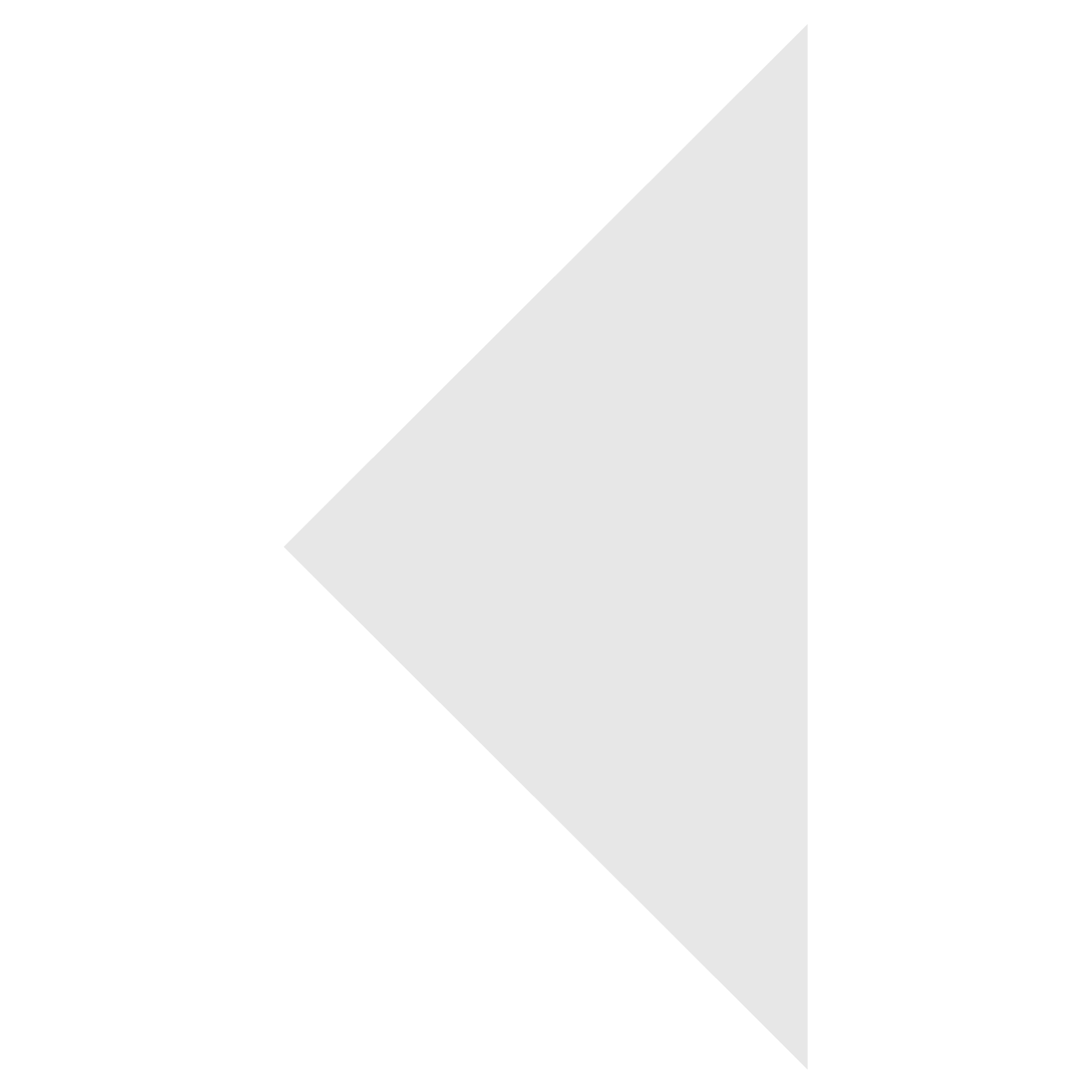
	\vspace{.7cm}
	\caption{The triangular domain $\Delta_{\delta',u_0}$ of local existence \label{fig:triangular_domain}}
\end{wrapfigure}

\begin{defn}
	\label{defn:defn_weak}
	A weak solution to the Einstein vacuum equations in $\Delta_{\delta,u_0}$ is a triple 
	\begin{align*}
	(\rrr,m,B)\in C^1_{\mathrm{loc}}(\Delta_{\delta,u_0})\cap W^{1,1}_{\mathrm{loc}}(\Delta_{\delta,u_0})\cap H^1_{\mathrm{loc}}(\Delta_{\delta,u_0})	
	\end{align*} 
	such that $\rrr_{uu},B_u,m_u\in C^0_{\mathrm{loc}}$ and the equations
	\begin{align}
	\label{eqn:rtilde_problem}
	\rrr_{uv}=&\Omega^2\rrr^3\left(-1+\frac{1}{3}R-2m\rrr^2\right)\\
	\label{eqn:mu_problem}
	\del_um=&-\frac{\rrr_u}{\rrr^3}\left(1-\frac{2}{3}R\right)+\frac{4}{\Omega^2}\frac{\rrr_v}{\rrr^5}\left(B_u\right)^2\\
	\label{eqn:mv_problem}
	\del_vm=&-\frac{\rrr_v}{\rrr^3}\left(1-\frac{2}{3}R\right)+\frac{4}{\Omega^2}\frac{\rrr_u}{\rrr^5}\left(B_v\right)^2\\
	\label{eqn:B_problem}
	B_{uv}=&\frac{3}{2}\frac{\rrr_u}{\rrr}B_v+\frac{3}{2}\frac{\rrr_v}{\rrr}B_u-\frac{1}{3}\Omega^2\rrr^2\left(\e^{-2B}-\e^{-8B}\right)
	\end{align}
	are satisfied in the interior of $\Delta_{\delta,u_0}$ in a weak sense.
\end{defn}

Equations (\ref{eqn:rtilde_problem}), (\ref{eqn:B_problem}) and (\ref{eqn:mv_problem}) are treated as the dynamical equations, whereas we will see that (\ref{eqn:mu_problem}) can be treated as a constraint equation that is propagated.

\begin{defn}
	\label{defn:initial_data}
	Let $\Nn=(u_0,u_1]$. A triple $(\ovt r,M,\ov B)\in C^2(\Nn)\times\RR\times C^1(\Nn)$ is a free data set if the following hold:
	\begin{compactenum}[(i)]
		\item $\ovt r>0$ and $\ovt r_u>0$ in $\Nn$, as well as $\lim_{u\rightarrow u_0}\ovt r(u)=0$, $\lim_{u\rightarrow u_0}\ovt r_u=1/2$ and $\lim_{u\rightarrow u_0}\ovt r_{uu}=0$.
		\item There is a constant $C_0$ such that
		\begin{align*}
		\int_{u_0}^{u_1}\frac{1}{(u-u_0)^3}\left[\ov B^2+(\ov{B}_u)^2\right]\,\dd u&<C_0\\
		\sup_{\Nn}\left\lvert \ovt r^{-2}\ov B\right\lvert+\sup_{\Nn}\left\lvert\ovt r^{-1}\del_u\ov{B}\right\lvert&<C_0.
		\end{align*}
	\end{compactenum}
\end{defn}

From this, we obtain a complete initial data set $(\ovt r,\ov B,\ov m,\ovt{r}_v)\in C^2(\Nn)\times C^1(\Nn)\times C^1(\Nn)\times C^1(\Nn)$. First, we integrate
\begin{align}
\label{eqn:ov_m}
\del_u\ov{m}=-\frac{\ovt{r}_u}{\ovt{r}^3}\left(1-\frac{2}{3}\ov{R}\right)-\frac{1}{\ovt{r}\ovt{r}_u}\left(1-2\ov{m}\ovt{r}^2+\frac{1}{\ell^2\ovt{r}^2}\right)(\ov{B}_u)^2
\end{align}
with boundary condition
\begin{align*}
\lim_{u\rightarrow u_0}\ov{m}=M
\end{align*}
for the Hawking mass $M$ at infinity.
The function $\ov{\rrr_v}$ is obtained from solving the ODE
\begin{align*}
\frac{\del_u\ov{\tilde{r}_v}}{\ov{\rrr_v}}=-\frac{4\ov{\rrr}_u}{\ovt{r}-2\ov{m}\ovt{r}^3+\frac{1}{\ell^2\ovt{r}}}\left(-1+\frac{\ov R}{3}-2\ov{m}\ovt{r}^2\right)
\end{align*}
with boundary condition
\begin{align*}
\lim_{u\rightarrow u_0}\ov{\rrr_v}=-\frac{1}{2}.
\end{align*}

\begin{thm}
	\label{thm:well_posed_main}
	Let $(\ovt r,M,\ov B)$ be a free data set in the sense of Definition~\ref{defn:initial_data} on $\Nn=(u_0,u_1]$. Then there is a $\delta>0$ such that there exists a unique weak solution $(\rrr,m,B)$ of the Einstein equations in the triangle $\Delta_{\delta,u_0}$ such that
	\begin{compactenum}[(i)]
		\item $\rrr$ satisfies the boundary condition
		\begin{align*}
		\rrr\big\lvert_{\II}=0
		\end{align*}
		\item $B$ satisfies the boundary condition
		\begin{align*}
		B\big\lvert_{\II}=0
		\end{align*}
		in a weak sense
		\item $\rrr$ and $B$ agree with $\ovt r$ and $\ov B$ when restricted to $\Nn$.
	\end{compactenum}
\end{thm}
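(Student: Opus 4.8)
The plan is to construct the solution by a contraction-mapping argument in a Banach space of triples carrying weighted norms adapted to the degeneration of the metric at $\II$, exactly paralleling the strategy of \citep{Holzegel-Warnick_Einstein-Klein-Gordon}. From the free data set $(\ovt r, M, \ov B)$ of Definition~\ref{defn:initial_data} one first manufactures the complete data $(\ovt r, \ov B, \ov m, \ovt r_v)$ on $\Nn$ by integrating \eqref{eqn:ov_m} and the transport equation for $\ovt r_v$ stated before the theorem, with the limits $\lim_{u\to u_0}\ov m = M$ and $\lim_{u\to u_0}\ovt r_v = -1/2$ fixing the constants of integration. I would then define an iteration map $\Phi$ which sends a candidate triple $(\rrr,m,B)$ to a new triple by forming the auxiliary quantities \eqref{eqn:auxiliary} and solving the \emph{linear} problems obtained by freezing these coefficients: integrating \eqref{eqn:rtilde_problem} for $\rrr$, the transport equation \eqref{eqn:mv_problem} for $m$ in the $v$-direction, and \eqref{eqn:B_problem} for $B$, in each case propagating inward from the characteristic hypersurface $\Nn$ subject to the boundary values $\rrr|_\II = 0$, $\rrr_v|_\II = -1/2$, $m|_\II = M$ and $B|_\II = 0$. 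A fixed point of $\Phi$ is precisely a weak solution in the sense of Definition~\ref{defn:defn_weak}.

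The heart of the argument is the choice of norms. Near $\II$ the function $\rrr \sim \tfrac12(u-v)$ is a boundary defining function, so the coefficients $\rrr_u/\rrr$ and $\rrr_v/\rrr$ in \eqref{eqn:B_problem} are genuinely singular. Substituting $B = \rrr^{3/2}\phi$ removes these first-order singular terms at the expense of an inverse-square potential $\sim \rrr^{-2}\phi$; the associated indicial equation at $\II$ has the two roots corresponding to $B \sim \rrr^{0}$ and $B \sim \rrr^{4}$, and the Dirichlet condition $B|_\II = 0$ is the demand that the non-decaying root be absent. I would measure the triple in norms mirroring Definition~\ref{defn:initial_data}, controlling $\sup|\rrr^{-2}B|$, $\sup|\rrr^{-1}B_u|$ and the spacetime integral of $\rrr^{-3}(B^2 + B_u^2 + B_v^2)$, together with $\sup|\rrr|$, $\sup|\rrr_u|$, $\sup|\rrr_v/\rrr|$ and $\sup|m|$; these are exactly the quantities propagated from the data and are strong enough to force $\rrr, B \to 0$ at $\II$ in the weak sense required by Theorem~\ref{thm:well_posed_main}. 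The crucial gain is that the right-hand sides of \eqref{eqn:rtilde_problem}, \eqref{eqn:mv_problem} and \eqref{eqn:B_problem} all carry favourable powers of $\rrr$ after these substitutions (for instance $\Omega^2\rrr^3 \sim \ell^2\rrr$ in the $\rrr$-equation and $\Omega^2\rrr^2(\e^{-2B}-\e^{-8B}) \sim \ell^2 B$ in the $B$-equation), so that integrating over the thin triangle $\Delta_{\delta,u_0}$ produces an overall factor of $\delta$. For $\delta$ small this shows that $\Phi$ maps a closed ball around the data-determined ``free'' triple into itself.

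Running the same weighted estimates on the difference $\Phi(\rrr_1,m_1,B_1) - \Phi(\rrr_2,m_2,B_2)$ of two iterates, the gain of $\delta$ from the integrations again furnishes a contraction factor, so $\Phi$ has a unique fixed point for $\delta$ sufficiently small. The regularity $(\rrr,m,B) \in C^1_{\mathrm{loc}} \cap W^{1,1}_{\mathrm{loc}} \cap H^1_{\mathrm{loc}}$ with $\rrr_{uu}, B_u, m_u \in C^0_{\mathrm{loc}}$ follows from the norms in which convergence holds and from reading the second derivatives off the equations; the boundary conditions of Theorem~\ref{thm:well_posed_main}(i)--(ii) hold because the weighted bounds enforce $\rrr, B = \OO(\rrr^2) \to 0$ as $u \to v$. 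It remains to check that the constraint \eqref{eqn:mu_problem}, which was not imposed during the iteration, is propagated: writing $\mathcal E := \del_u m + \tfrac{\rrr_u}{\rrr^3}(1 - \tfrac23 R) - \tfrac{4}{\Omega^2}\tfrac{\rrr_v}{\rrr^5}(B_u)^2$ for the constraint defect, a direct computation using \eqref{eqn:rtilde_problem}, \eqref{eqn:mv_problem} and \eqref{eqn:B_problem} shows that $\mathcal E$ satisfies a homogeneous linear transport equation in $v$; since $\mathcal E|_\Nn = 0$ by the construction of $\ov m$ through \eqref{eqn:ov_m}, we conclude $\mathcal E \equiv 0$. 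Uniqueness is immediate from the contraction estimate.

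I expect the main obstacle to be the second step: designing the weighted norms so that the singular first-order terms of \eqref{eqn:B_problem} are controlled uniformly up to $\II$ while the a priori estimate still closes. Because $\rrr_u/\rrr$ and $\rrr_v/\rrr$ blow up at the conformal boundary, a naive integration loses powers of the boundary defining function; matching the weights of Definition~\ref{defn:initial_data} to the indicial behaviour of the $B$-equation, and verifying that the Dirichlet condition is attained in the correct weak sense rather than merely that $B$ stays bounded, is the delicate point on which the whole scheme hinges.
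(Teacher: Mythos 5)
Your proposal takes essentially the same route as the paper's: the paper itself does not prove Theorem~\ref{thm:well_posed_main} but defers to \citep{Dold_thesis}, whose argument (as Remark~\ref{rk:maximal_development} states) proceeds along the lines of \citep{Holzegel-Warnick_Einstein-Klein-Gordon} --- precisely the scheme you reconstruct, with the dynamical equations \eqref{eqn:rtilde_problem}, \eqref{eqn:mv_problem}, \eqref{eqn:B_problem} solved by contraction in weighted norms mirroring Definition~\ref{defn:initial_data}, the renormalisation $B=\rrr^{3/2}\phi$ trading the singular first-order terms for an inverse-square potential with indicial roots $\rrr^{0}$ and $\rrr^{4}$ (both of which check out against the paper's equations), and \eqref{eqn:mu_problem} recovered as a propagated constraint exactly as the paper indicates and as Proposition~\ref{propn:weak_defn_EVE} supports. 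One small slip: the norm component $\sup\lvert\rrr_v/\rrr\rvert$ is necessarily infinite near $\II$ (there $\rrr_v\to-\tfrac{1}{2}$ while $\rrr\to 0$), so it should be replaced by $\sup\lvert\rrr_v\rvert$ or control of $r_v/\Omega^2$; this does not affect the rest of the scheme.
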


\begin{rk}
	Imposing higher regularity on the initial data, we obtain a classical solution to (\ref{eqn:EVE_five}) in $\Delta_{\delta,u_0}$; see \citep{Dold_thesis} for a precise statement.
\end{rk}

We conclude this section with a remark about the Hawking mass. The mass is a dynamical variable and does not have to be conserved at infinity a priori. However, a geometric version of conservation holds:

\begin{propn}
	\label{propn:conserved_mass}
	Let $(\rrr,m,B)$ be a classical solution. Set
	\begin{align*}
	\Tt:=&\frac{1}{\Omega^2}\left(r_v\del_u-r_u\del_v\right)\\
	\Rr:=&\frac{1}{\Omega^2}\left(r_v\del_u+r_u\del_v\right)
	\end{align*}
	Then $\Tt$ and $\Rr$ are invariant under a change of the $(u,v)$ coordinates that preserves the form of the metric and
	\begin{align*}
	\Tt m\bigg\lvert_{\II}=0.
	\end{align*}
\end{propn}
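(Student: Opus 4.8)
The plan is to treat the two claims separately: the coordinate invariance of $\Tt,\Rr$ is a direct change-of-variables computation, while the substantive content, the conservation statement $\Tt m\lvert_{\II}=0$, reduces to showing that a boundary energy flux vanishes.

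First I would settle invariance. The double-null form $h=-\tfrac12\Omega^2(\dd u\otimes\dd v+\dd v\otimes\dd u)$ is preserved (with time orientation) exactly by the reparametrisations $u\mapsto U(u)$, $v\mapsto V(v)$. Under these one has $\del_u\mapsto (U')^{-1}\del_u$ and $\del_v\mapsto (V')^{-1}\del_v$, hence $r_u\mapsto (U')^{-1}r_u$ and $r_v\mapsto (V')^{-1}r_v$, while the conformal factor rescales as $\Omega^2\mapsto (U'V')^{-1}\Omega^2$. Substituting these into the definitions of $\Tt$ and $\Rr$, every factor of $U'$ and $V'$ cancels, so both fields are unchanged. (An exchange $u\leftrightarrow v$ sends $\Tt\mapsto-\Tt$ and fixes $\Rr$, consistent with a reversal of time orientation.)

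Next I would reduce $\Tt m$ to a flux. Writing $\Tt m=\Omega^{-2}(r_v m_u-r_u m_v)$ and using $r=\rrr^{-1}$, so that $r_u=-\rrr^{-2}\rrr_u$ and $r_v=-\rrr^{-2}\rrr_v$, I substitute the mass equations \eqref{eqn:mu_problem} and \eqref{eqn:mv_problem}. The contributions proportional to $(1-\tfrac23 R)$ enter symmetrically in $r_v m_u$ and $r_u m_v$ and cancel, leaving only the quadratic-in-$B$ terms, so that
\[
\Tt m=\frac{4}{\Omega^4\rrr^7}\left[\rrr_u^2(B_v)^2-\rrr_v^2(B_u)^2\right].
\]
Factoring the bracket and recognising $\rrr_u B_v-\rrr_v B_u=\Omega^2\rrr^2\,\Tt B$ together with $\rrr_u B_v+\rrr_v B_u=-\Omega^2\rrr^2\,\Rr B$, this collapses to the flux identity
\[
\Tt m=-\frac{4}{\rrr^3}\,(\Tt B)(\Rr B),
\]
valid for any classical solution; here $\Tt B$ plays the role of a time derivative and $\Rr B$ of a radial derivative of the dynamical field.

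Finally I would show the flux vanishes on $\II$. Since $\Tt\rrr=0$, the field $\Tt$ is tangent to $\II=\{\rrr=0\}$, and with the Dirichlet condition $B\lvert_{\II}=0$ this already forces $\Tt B\to0$; but the factor $\rrr^{-3}$ means I need the sharp decay rate of $B$. Linearising \eqref{eqn:B_problem} about $B=0$ and using $\Omega^2\rrr^2\to-4\ell^2\rrr_u\rrr_v$, the indicial equation for power solutions $B\sim\rrr^{\alpha}$ is $\alpha(\alpha-1)=3\alpha$, i.e. $\alpha\in\{0,4\}$. The Dirichlet condition eliminates the non-decaying $\alpha=0$ branch, so a classical solution obeys $B=\OO(\rrr^4)$ near $\II$; hence $\Tt B=\OO(\rrr^4)$ and, since $\Rr\rrr\to(2\ell^2)^{-1}$, also $\Rr B=\OO(\rrr^3)$, whence $\Tt m=\OO(\rrr^4)\to0$ on $\II$. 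The hard part is exactly this last step: rigorously excluding the $\alpha=0$ mode, equivalently ruling out an $\OO(\rrr^2)$ term in $B$, which would otherwise leave the finite nonzero flux proportional to $\Tt\big((\lim_{\II}\rrr^{-2}B)^2\big)$. I expect this to follow from the boundary asymptotic expansion of classical solutions built into the well-posedness framework of Theorem~\ref{thm:well_posed_main}, in which the Dirichlet condition together with the indicial roots $\{0,4\}$ pins the leading behaviour of $B$ at order $\rrr^4$.
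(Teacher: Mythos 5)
Your computation is correct, and it is worth noting at the outset that the paper itself states Proposition~\ref{propn:conserved_mass} without proof (deferring to \citep{Dold_thesis}), so there is no in-paper argument to compare line by line; what can be said is that your two pillars reproduce exactly the machinery the paper deploys elsewhere. Your flux identity is right: substituting \eqref{eqn:mu_problem}--\eqref{eqn:mv_problem} (equivalently \eqref{eqn:m_u}--\eqref{eqn:m_v}) into $\Tt m=\Omega^{-2}(r_vm_u-r_um_v)$ does cancel the $\bigl(1-\tfrac{2}{3}R\bigr)$ terms and yields $\Tt m=-4\rrr^{-3}(\Tt B)(\Rr B)$, and your coordinate-invariance check ($\Omega^2\mapsto(U'V')^{-1}\Omega^2$ against $\del_u\mapsto (U')^{-1}\del_u$, $\del_v\mapsto(V')^{-1}\del_v$) is the standard one; one can add that $\Tt\rrr\equiv 0$ holds identically in $\Qq$, not merely on $\II$, since $\Tt r=\Omega^{-2}(r_vr_u-r_ur_v)=0$, which is what makes $\Tt$ the natural ``time translation'' here. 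Your honestly flagged crux -- upgrading the Dirichlet condition to $B=\OO(\rrr^4)$ so that the $\rrr^{-3}$ weight is beaten -- is resolved in the paper by precisely the argument you anticipate: in Section~\ref{sec:a_priori_no_horizon} the paper posits the expansion \eqref{eqn:asymptotic_expansion}, $B=\rho^{\alpha}(a_0(t)+a_1(t)\rho+o(\rho))$ with $\alpha>0$ for a classical solution, inserts it into the transport form \eqref{eqn:wave_transport_a_priori} of \eqref{eqn:B_problem}, and concludes $\alpha=4$ -- the same indicial roots $\{0,4\}$ you derive, with the $\alpha=0$ branch (including its induced $\OO(\rrr^2)$ correction, whose coefficient is slaved to $a_0$ and so vanishes with it) excluded by the boundary condition. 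Your rates then close the argument: $\Tt B=\OO(\rrr^4)$, $\Rr B=\OO(\rrr^3)$, hence $\Tt m=\OO(\rrr^4)\to 0$ on $\II$. The only residual gap, which your proposal shares with the paper's own usage of \eqref{eqn:asymptotic_expansion}, is that the existence of such a power expansion (in particular the absence of $\rrr^4\log\rrr$ terms, possible since the indicial roots differ by an integer) is asserted from the well-posedness framework rather than proved; for the purpose of $\Tt m\lvert_{\II}=0$ this is harmless, since any $o(1)$ loss relative to $\alpha=4$ still leaves $\Tt m\to0$, but you are right to identify it as the one step resting on the boundary regularity theory of Theorem~\ref{thm:well_posed_main}.
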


\section{The global problem}

This section is devoted to studying the global dynamics arising from Eguchi-Hanson-type initial data. The existence of a maximal development is guaranteed by Theorem~\ref{thm:general_local_existence} and Remark~\ref{rk:maximal_development}. In Section~\ref{sec:global_biaxial_Bianchi_IX}, we specify our choice of coordinates on the orbits of the $SU(2)\times U(1)$ action and derive some geometric properties; here, we follow the exposition of \citep{Dafermos_naked_Higgs} and \citep{Dafermos_spherical_trapped} mutatis mutandis. Proving that the existence of a horizon would be contradictory is the content of Sections~\ref{sec:extension_principle} and \ref{sec:a_priori_no_horizon}.

\subsection{Global biaxial Bianchi IX symmetry}
\label{sec:global_biaxial_Bianchi_IX}

Let $(\Ss,\ov{g},K)$ be of Eguchi-Hanson type with negative mass $M$ at infinity. Then there is a unique maximal forward development $(\MM^+,g)$ by Theorem~\ref{thm:general_local_existence} and Remark~\ref{rk:maximal_development} which is asymptotically locally AdS. There is a projection map $\pi:\,\MM^+\rightarrow\Qq^+$ onto a two-dimnsional manifold with boundary $\Qq^+$ such that every $q\in\Qq^+$ represents an orbit under the $SU(2)\times U(1)$ symmetry. The manifold $\Qq^+$ can be embedded smoothly into $(\RR^2,g_{\mathrm{Mink}})$ and its boundary consists of a one-dimensional curve $\Sigma$ (initial hypersurface) and a one-dimensional curve $\Gamma$ (central worldline, where $r=0$). 
Choosing standard null coordinates $(u,v)$ on $\RR^{1+1}$, $\Qq^+$ shall be endowed with a metric
\begin{align*}
h=-\frac{1}{2}\Omega^2(u,v)\left(\dd u\otimes\dd v+\dd v\otimes\dd u\right).
\end{align*}
We choose $u$ such that the curves of constant $u$ are outgoing and such that $u$ as well as $v$ are increasing to the future along $\Gamma$. A coordinate chart $(u',v')$ preserves these assumptions if and only if
\begin{align}
\label{eqn:coordiante_change_condition}
\frac{\del u'}{\del u}>0,~\frac{\del v'}{\del v}>0,~\frac{\del u'}{\del v}=\frac{\del v'}{\del u}=0.
\end{align}
With respect to $h$, $\Sigma$ is spacelike and $\Gamma$ timelike.
Conformal infinity $\II\subseteq\ov{\Qq^+}\backslash\Qq^+$ is defined as follows: Set
\begin{align*}
\UU:=\left\{u\,:\,\sup_{(u,v)\in\Qq^+}r(u,v)=\infty\right\}.
\end{align*}
For each $u\in\UU$, there is a unique $v^{\ast}(u)$ such that
\begin{align*}
\left(u,v^{\ast}(u)\right)\in\ov{\Qq^+}\backslash\Qq^+.
\end{align*}
Note here that the closure is always taken with respect to the topology of $\RR^2$.
Now define null infinity as
\begin{align*}
\II:=\bigcup_{u\in\UU}(u,v^{\ast}(u)).
\end{align*}
Since the spacetime is asymptotically locally AdS, null infinity $\II$ is timelike. We have
\begin{align*}
\Qq^+=D^+\left(\Sigma\cup\II\right),
\end{align*}
i.\,e. $\Qq^+$ is in the future domain of dependence of $\Sigma$ and $\II$. By a simple change of coordinates satisfying (\ref{eqn:coordiante_change_condition}), we achieve that $u=v$ on $\II$. Note that in general, we cannot achieve that both $\II$ and $\Gamma$ are straightened out in this way.
We know that $B$ extends continuously to $\II$ and vanishes there. Moreover, we know that the Hawking mass $m$ extends continuously to $\II$ and equals a constant value $M<0$.

\begin{lemma}
		\label{defn:Sigma_admissible}
		The following hold:
	\begin{compactenum}[(i)]
		\item $r$ is unbounded on $\Sigma$.
		\item $m<0$ on $\Sigma$ and $m\rightarrow M<0$ as $r\rightarrow\infty$.
		\item $r_v>0$ at one point in $\Sigma$ and
		\begin{align*}
		\frac{r_v}{\Omega^2}\rightarrow c_0>0
		\end{align*}
		as $r\rightarrow\infty$.
	\end{compactenum}
\end{lemma}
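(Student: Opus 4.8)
The plan is to read off the three properties directly from the Eguchi-Hanson-type conditions of Definition~\ref{defn:Eguchi-Hanson-type} and the asymptotically locally AdS structure, after translating the data from the radial variable $\rho$ into the double-null coordinates $(u,v)$ adapted to $\Sigma$ and $\II$. Conceptually there are two kinds of statements: the limits at infinity in (ii) and (iii), which follow from the asymptotic conditions (ii)--(iv), and the global sign statements $m<0$ and $r_v>0$, which I would obtain from the monotonicity of the Hawking mass along $\Sigma$ together with the untrapped character of the slice. Property (i) is immediate: by condition (iii), $r=\rho+\OO(1)$ as $\rho\to\infty$ while $\rho$ ranges over all of $(a,\infty)$, so $\sup_\Sigma r=\infty$.

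For (ii), the limit $m\to M<0$ as $r\to\infty$ is the content of Remark~\ref{rk:after_EH-type}(2): the renormalised Hawking mass of Definition~\ref{defn:biaxial}, restricted to $\Sigma$, has limit equal to the mass $M$ at infinity, which is negative by hypothesis. To upgrade this to $m<0$ everywhere I would show that $m$ is non-decreasing along $\Sigma$ towards $\II$. First I would record that $\Sigma$ is untrapped, i.e. $1-\mu=g(\nabla r,\nabla r)>0$: near the centre the regularity conditions (i) force $g(\nabla r,\nabla r)=A(r')^2+\ldots\to+\infty$ (whence also $m\to-\infty$ there), and near infinity the $r^2/\ell^2$ term dominates, so $1-\mu\to+\infty$; I then propagate positivity across $\Sigma$. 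In the untrapped region the constraint identity for $m$ along the slice (the spacelike analogue of~(\ref{eqn:ov_m})) has a non-negative right-hand side, the $B$-field contributing a manifestly signed term $\propto(1-\mu)(\del B)^2$. Hence $m$ increases towards $\II$ and $m\le\lim_{r\to\infty}m=M<0$; this reproduces the exact Eguchi-Hanson-AdS profile, where $m$ rises monotonically from $-\infty$ at $\rho=a$ to $M_{\mathrm{EH},a}<0$.

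For (iii) I would compute $r_v/\Omega^2$ in the chosen coordinates. By Proposition~\ref{propn:conserved_mass} one has $\Tt r=0$, so $r_u$ and $r_v$ are determined by the spatial derivative $\Rr r$ (fixed by the induced metric $\tfrac1A\,\dd\rho^2$ and condition (iii)) and by the second fundamental form $K$ through the normal derivative of $r$. Feeding in the asymptotically locally AdS conditions $A=\rho^2/\ell^2+1+o(1)$ and $r=\rho+\OO(1)$, together with the coordinate normalisation $u=v$ on $\II$ and $h(\rrr^{-2}\del_{\rrr},\rrr^{-2}\del_{\rrr})=\ell^2\rrr^2+\OO(\rrr^4)$, a short asymptotic expansion gives $r_v/\Omega^2\to c_0>0$, with $c_0=\tfrac12$ in this normalisation (consistent with $\lim_{u\to u_0}\ovt r_u=\tfrac12$ in Definition~\ref{defn:initial_data}, and checked against the exact solution). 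Since $\Omega^2>0$, this already yields $r_v>0$ near infinity, hence $r_v>0$ at (at least) one point of $\Sigma$.

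The main obstacle is the monotonicity in (ii). One must derive the precise constraint identity for $m$ along the spacelike slice in the presence of non-vanishing $K$, isolate the non-negative $B$-field term, control the sign of the residual squashing/curvature contribution, and --- crucially --- verify that $\Sigma$ is untrapped on all of $\Sigma$ rather than only near its two ends, so that the monotonicity holds globally. The computation in (iii) translating the $\rho$-asymptotics into the $(u,v)$-normalised quantities is routine once the change of variables (including its $K$-dependence) is set up, but it must be carried out with care to pin down the positive constant $c_0$.
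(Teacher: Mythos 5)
Your overall route coincides with the paper's: (i) is read off from the definition of Eguchi-Hanson-type data, (ii) rests on monotonicity of the Hawking mass towards $\II$ together with $m\to M<0$, and (iii) on asymptotics near infinity. The genuine gap is exactly the step you yourself flag as the ``main obstacle'': you supply no mechanism to propagate untrappedness across $\Sigma$, and without $r_u<0<r_v$ the signs in $\del_u m=rr_u\left(1-\tfrac{2}{3}R\right)-\tfrac{4}{\Omega^2}r^3r_v(B_u)^2$ and its $v$-counterpart, (\ref{eqn:m_u})--(\ref{eqn:m_v}), are uncontrolled in the interior of the slice, so the monotonicity you invoke is not yet available there. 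The missing observation --- which is precisely how the paper's Proposition~\ref{propn:no_trapping} operates --- is that untrappedness need not be verified independently: $m<0$ by itself forces $1-\mu=1-\tfrac{2m}{r^2}+\tfrac{r^2}{\ell^2}\geq 1+\tfrac{r^2}{\ell^2}>0$, equivalently $0>m=\tfrac{r^2}{2}\left(1+\tfrac{4r_ur_v}{\Omega^2}\right)+\tfrac{r^4}{2\ell^2}$ gives $r_ur_v<0$. So run a continuity argument from the asymptotic end: on the maximal connected portion of $\Sigma$ containing the end on which $m<0$, one has $r_ur_v<0$, with the signs $r_u<0<r_v$ fixed by continuity from the region where your (iii) asymptotics hold; there $m$ is non-decreasing outward (outward along the spacelike slice means $\dd u<0$, $\dd v>0$, while $m_u\leq0$ and $m_v\geq0$), whence $m\leq M<0$ with a uniform negative bound; an inner endpoint of this portion inside $\Sigma$ would by continuity also satisfy $m\leq M<0$, contradicting maximality. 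This closes (ii) and delivers the untrappedness simultaneously, making your separate near-centre analysis ($m\to-\infty$ at $\rho=a$) unnecessary for the lemma.

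Two smaller points. Proposition~\ref{propn:conserved_mass} asserts $\Tt m\big\lvert_{\II}=0$, not $\Tt r=0$; in fact $\Tt r\equiv 0$ identically from $\Tt=\Omega^{-2}\left(r_v\del_u-r_u\del_v\right)$, so it carries no information about $r_u$ and $r_v$ separately and cannot anchor your computation in (iii). The paper's derivation of (iii) is also leaner than your $K$-dependent change of variables: finiteness of $m$ as $r\to\infty$ forces $-4r_ur_v/\Omega^2=r^2/\ell^2+\OO(r)$, the normalisation $u=v$ on $\II$ gives $r_u/r_v\to-1$, and $\Omega^2$ grows like $r^2$ by the asymptotically locally AdS condition; together these give $r_v/\Omega^2\to c_0\in(0,\infty)$ without ever touching the second fundamental form. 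Finally, the value $c_0=\tfrac12$ you quote is a gauge artefact and is not what the normalisation of Definition~\ref{defn:initial_data} produces: there $\rrr_u\to\tfrac12$, $\rrr_v\to-\tfrac12$ and hence $\Omega^2\sim\ell^2r^2$, giving $c_0=1/(2\ell^2)$. The lemma only requires $c_0>0$, which your argument does produce.
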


\begin{proof}
	By the monotonicity of the mass and $M<0$, we immediately obtain that $m<0$ on $\Sigma$. The radius $r$ is unbounded on $\Sigma$ by the definition of Eguchi-Hanson-type data. For the Hawking mass to be finite at infinity,
	\begin{align*}
	-4\frac{r_ur_v}{\Omega^2}=-\frac{r^2}{\ell^2}+\OO(r)
	\end{align*}
	as $r\rightarrow\infty$. Moreover, by the above choice of $(u,v)$
	\begin{align*}
	\frac{r_u}{r_v}\rightarrow -1
	\end{align*}
	as $r\rightarrow\infty$. The conformal factor $\Omega^2$ grows as $r^2$ since the spacetime is asymptotically locally AdS. Therefore, we deduce that $r_v/\Omega^2$ is positive and finite as $r\rightarrow\infty$.
\end{proof}

\begin{wrapfigure}{r}{.35\textwidth}
	\begin{center}
		\vspace{-.7cm}
		\def\svgwidth{180pt}
		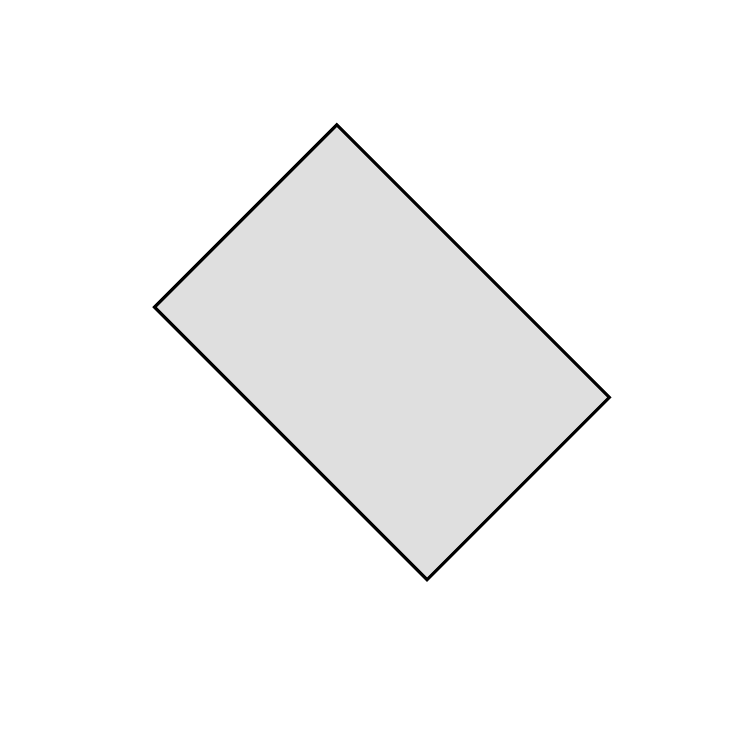
		\vspace{-.7cm}
		\caption{\footnotesize The extension property in the interior \label{fig:extension_double_null}}
	\end{center}
\end{wrapfigure}

\begin{propn}
	\label{propn:no_trapping}
	The above manifold $\MM^+$ does not have any trapped or marginally trapped surfaces, i.\,e.
	\begin{align*}
	r_{u}<0~\mathrm{and~} r_v>0
	\end{align*}
	globally in $\Qq^+$.
\end{propn}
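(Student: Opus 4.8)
The plan is to reduce the whole statement to the single assertion that the renormalised Hawking mass stays negative, exploiting that $1-\mu$ and $r_ur_v$ differ only by a sign. First I record the algebraic dictionary: from the definitions in Section~\ref{sec:local_well} one has $1-\mu=-4r_ur_v/\Omega^2$, so at any point $r_u$ and $r_v$ are nonzero and of opposite sign precisely when $1-\mu>0$. Since $\Qq^+$ is connected and the map $q\mapsto(\sgn r_u,\sgn r_v)$ is locally constant on the open set $\{r_ur_v\neq0\}$, it suffices to prove $1-\mu>0$ throughout $\Qq^+$: the sign pattern $(r_u,r_v)=(-,+)$ then propagates from $\Sigma$. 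On $\Sigma$ it holds by Lemma~\ref{defn:Sigma_admissible}, since $m<0$ there forces $1-\mu=1-2m/r^2+r^2/\ell^2>0$ and hence $r_ur_v<0$; combined with $r_v>0$ at one point and connectedness of $\Sigma$ this gives $r_u<0<r_v$ on all of $\Sigma$. The identical computation shows $1-\mu>0$ in a neighbourhood of $\II$, where $m\to M<0$.

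Next I would establish $r_u<0$ everywhere via the null focusing (Raychaudhuri) equation. The $uu$-component of the field equations is a focusing identity $\del_u(r_u/\Omega^2)\le0$, with no fibre-curvature or potential term entering (only the kinetic $(B_u)^2$), so $r_u/\Omega^2$ is non-increasing along ingoing rays. Every ingoing ray reaches, to the past, either $\Sigma$ or $\II$; on $\Sigma$ one has $r_u<0$ by the previous paragraph, and near $\II$ the asymptotics ($m\to M<0$ and $r_v/\Omega^2\to c_0>0$) give $r_u/\Omega^2\to-c_0<0$. By monotonicity, $r_u<0$ is then propagated to all of $\Qq^+$.

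With $r_u<0$ in hand the sign of $1-\mu$ coincides with that of $r_v$, so it remains to rule out the vanishing of $r_v$. Suppose $1-\mu>0$ failed somewhere. Using $\Qq^+=D^+(\Sigma\cup\II)$, pick a point $p$ on the boundary of the (open) untrapped region whose causal past $J^-(p)\setminus\{p\}$ lies entirely inside it; at $p$ one has $1-\mu=0$, which forces $m(p)=\tfrac{r^2}{2}(1+r^2/\ell^2)>0$. On the other hand, the monotonicity of the renormalised Hawking mass, together with the constant value $m|_\II=M<0$ (Proposition~\ref{propn:conserved_mass}) and the behaviour $m\to-\infty$ at $\Gamma$, should yield $m\le M<0$ throughout the untrapped region, hence $m(p)\le0$ by continuity — a contradiction. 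This forces $1-\mu>0$, i.e. $r_u<0$ and $r_v>0$, globally, which is exactly the absence of trapped and marginally trapped surfaces.

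The main obstacle is precisely this last monotonicity. Writing the mass equations as $\del_v m=rr_v(1-\tfrac23R)+r^3r_v^{-1}(1-\mu)(B_v)^2$ and its $u$-counterpart, the kinetic contributions have a favourable sign in the untrapped region, but the background term $rr_v(1-\tfrac23R)$ is sign-indefinite, since the squashing curvature $R$ of the fibres is not monotone in $B$ (already $1-\tfrac23R<0$ at $B=0$). Thus the monotonicity is not visible term by term; it must be extracted from the full system, using the equation of motion for $B$, the Dirichlet behaviour $B|_\II=0$ and the a priori control on $B$ encoded in the Eguchi-Hanson-type conditions — exactly the analysis underlying the a priori estimates of Section~\ref{sec:a_priori_no_horizon}. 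Everything else in the argument is soft.
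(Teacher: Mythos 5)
Your skeleton is essentially the paper's. The paper's entire proof of Proposition~\ref{propn:no_trapping} is the reduction you describe: granting $m<0$, it writes $0>m=\frac{r^2}{2}\left(1+4\frac{r_ur_v}{\Omega^2}\right)+\frac{r^4}{2\ell^2}$, concludes $r_ur_v<0$ wherever $r$ is finite, and propagates the sign pattern by connectedness from the one point on $\Sigma$ where $r_v>0$; your separate Raychaudhuri step for $r_u$ via (\ref{eqn:constraint_u}) is precisely the alternative recorded in the Remark immediately following the proposition. Your first-failure-point bootstrap is a legitimate --- in fact more explicit --- justification of the premise $m<0$ throughout $\Qq^+$, which the paper takes from the monotonicity statement after (\ref{eqn:m_u})--(\ref{eqn:m_v}) in Appendix~A.

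However, your proof is incomplete exactly where you declare it to be, and the declared obstruction is illusory: you assert $1-\frac23R<0$ at $B=0$, but $R(0)=2-\frac12=\frac32$, so $1-\frac23R=0$ there. Indeed $f(B):=1-\frac23R=1-\frac43\e^{-2B}+\frac13\e^{-8B}$ has $f'(B)=\frac83\left(\e^{-2B}-\e^{-8B}\right)$, vanishing only at $B=0$, which is the global minimum; hence $f\geq 0$ everywhere (the paper even uses the quantitative form $1-\frac23R\geq\min\{B^2/2,1\}$ in Section~\ref{sec:a_priori_no_horizon}). Consequently, in the untrapped region the monotonicity \emph{is} visible term by term: with $r_u<0<r_v$ one has
\begin{align*}
\del_um=rr_u\left(1-\frac{2}{3}R\right)-\frac{4}{\Omega^2}r^3r_v(B_u)^2\leq 0,\qquad \del_vm\geq 0,
\end{align*}
with no input from the equation of motion for $B$, the Dirichlet condition, or the a priori estimates of Section~\ref{sec:a_priori_no_horizon}. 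Feeding this into your own contradiction argument --- at a marginal point $1-\mu=0$ forces $m=\frac{r^2}{2}\left(1+\frac{r^2}{\ell^2}\right)>0$, while integrating along a past-directed constant-$v$ ray inside $J^-(p)\setminus\{p\}$ back to $\Sigma\cup\II$ gives $m\leq\max\{\sup_\Sigma m,\,M\}<0$ (note $\sup_\Sigma m<0$ since $m<0$ on $\Sigma$ with $m\to M<0$ at infinity and $m\to-\infty$ at the centre) --- closes the proof. Once the arithmetic is corrected, your write-up is a complete and somewhat more careful version of the paper's two-line argument; the only remaining detail worth a sentence is the existence of a causally first point on the boundary of the untrapped region, which requires the compactness properties of sets of the form $J^-(p)\cap\Qq^+$ used in the first-singularity discussion of Section~\ref{sec:global_biaxial_Bianchi_IX}.
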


\begin{proof}
	From
	\begin{align*}
	0>m=\frac{r^2}{2}\left(1+4\frac{r_ur_v}{\Omega^2}\right)+\frac{r^4}{2\ell^2}
	\end{align*}
	we conclude that $r_ur_v<0$ wherever $r$ is finite. Since there is a point on $\Sigma$ where $r_v>0$, the conclusion follows.
\end{proof}

\begin{rk}
	The absence of anti-trapped surfaces can also be guaranteed by fixing the sign of $r_u$ on $\Sigma$ and $\II$ and then using (\ref{eqn:constraint_u}).
\end{rk}

This fact already allows us to prove a weak geometric statement about the potential singularities that can arise in the time evolution.

\begin{defn}
	Let $p\in\ov{\Qq^+}$. The indecomposable past subset $J^-(p)\cap\Qq^+$ is said to be eventually compactly generated if there exists a compact subset $X\subseteq\Qq^+$ such that
	\begin{align*}
	J^-(p)\subseteq D^+(X)\cup J^-(X).
	\end{align*}
\end{defn}

Here we denote by $J^-(S)$ the causal past of a subset $S$.

\begin{defn}
	\label{defn:first_singularity}
	A point $p\in\ov{\Qq^+}\backslash\Qq^+$ is a first singularity if $J^-(p)\cap\Qq^+$ is eventually compactly generated and if any eventually compactly generated indecomposable past proper subset of $J^-(p)\cap\Qq^+$ is of the form $J^-(q)$ for a $q\in\Qq^+$.
\end{defn}

\begin{lemma}
	\label{thm:no_first_singularity}
	Let $p\in\ov{\Qq^+}\backslash\Qq^+$ be a first singularity. Then
	\begin{align*}
	p\in\ov{\Gamma}\backslash\Gamma.
	\end{align*}
\end{lemma}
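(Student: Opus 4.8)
The goal is to show that any first singularity $p$ of the maximal development must lie on the central worldline, i.e.\ $p\in\ov\Gamma\backslash\Gamma$.  The plan is to argue by contradiction: suppose $p\in\ov{\Qq^+}\backslash\Qq^+$ is a first singularity with $p\notin\ov\Gamma\backslash\Gamma$.  Since $\II$ is timelike and the mass extends continuously to $\II$ with the constant value $M<0$, the only boundary points of $\Qq^+$ other than those on $\Gamma$ or on $\Sigma$ are (a) points of $\II$ itself, and (b) interior boundary points where $r$ stays bounded and positive.  I would first dispose of the possibility $p\in\II$: near $\II$ the local well-posedness Theorem~\ref{thm:well_posed_main} applies in a triangular region, so no first singularity can occur there — a neighbourhood of such a point is covered by the solution constructed from free data on a short null segment, contradicting that $p$ is a boundary point of the maximal development.

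The remaining and main case is $p\in\ov{\Qq^+}\backslash\Qq^+$ with $0<\lim r<\infty$ along the past of $p$ and $p\notin\Gamma$.  Here the key input is Proposition~\ref{propn:no_trapping}: throughout $\Qq^+$ we have $r_u<0$ and $r_v>0$, so $r$ is a genuine monotone double-null coordinate and no (marginally) trapped surfaces exist.  Because $J^-(p)\cap\Qq^+$ is eventually compactly generated, there is a compact $X\subseteq\Qq^+$ with $J^-(p)\subseteq D^+(X)\cup J^-(X)$.  On the incoming and outgoing null segments emanating from $X$ towards $p$, the absence of trapping gives that $r$ is bounded away from $0$ and from $\infty$, and the Hawking mass is bounded (by monotonicity $m<0$ is controlled from one side, and the constraint equations~\eqref{eqn:mu_problem}--\eqref{eqn:mv_problem} together with $1-\mu>0$ — which holds precisely because there is no trapping — control it from the other).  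Thus all the quantities $(\rrr,m,B)$ and their relevant derivatives that enter the local existence theorem remain uniformly bounded on a characteristic rectangle whose future corner is $p$.

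Given such uniform bounds, I would invoke the interior extension property illustrated in Figure~\ref{fig:extension_double_null}: the solution on the two null segments bounding the rectangle $[u_1,u_2]\times[v_1,v_2]$ serves as characteristic initial data, and a Picard/local-existence argument (the interior analogue of Theorem~\ref{thm:well_posed_main}, as in \citep{Dold_thesis}) produces a solution on a full neighbourhood of $p$.  This exhibits $J^-(p)\cap\Qq^+$ as $J^-(q)$ for some genuine $q\in\Qq^+$, contradicting that $p$ is a first singularity in the sense of Definition~\ref{defn:first_singularity}.  Hence no first singularity can form away from the centre, and $p\in\ov\Gamma\backslash\Gamma$.

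The main obstacle will be step two: converting ``no trapping'' plus ``eventually compactly generated'' into genuinely \emph{uniform} a priori bounds on $(\rrr,m,B)$ and the derivatives needed to seed the interior extension.  The delicate point is controlling $m$ and $\Omega^2$ — equivalently showing $r$ cannot degenerate to $0$ (which would put $p$ on $\Gamma$) nor to $\infty$ (which would put $p$ on $\II$) along the past of $p$ — using only monotonicity of the mass and the sign conditions $r_u<0$, $r_v>0$; once these bounds are in place, the extension is a routine application of the local theory.
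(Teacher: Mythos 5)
Your treatment of the main (interior) case is essentially the paper's own proof: once $p$ lies away from $\ov{\Gamma}$ and from (the closure of) $\II$, it is the future vertex of a null rectangle whose remainder sits in the interior of $\Qq^+$ (Figure~\ref{fig:extension_double_null}); Proposition~\ref{propn:no_trapping} gives $r_u<0$ and $r_v>0$ there, and the standard extension principle away from the centre and away from infinity, as in \citep{Dafermos_naked_Higgs}, puts $p\in\Qq^+$, a contradiction. Note that the ``main obstacle'' you flag at the end largely dissolves in this case: on the punctured rectangle $[u_1,u_2]\times[v_1,v_2]\setminus\{(u_2,v_2)\}$ monotonicity alone pins $r$ between the corner values, $0<r(u_2,v_1)\leq r\leq r(u_1,v_2)<\infty$, and monotonicity of the Hawking mass bounds $m$; there is no danger of $r$ degenerating to $0$ or $\infty$ inside such a rectangle, and the remaining control of $B$ and $\Omega^2$ is exactly what the cited interior extension principle packages. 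So your worry about ``converting no trapping into uniform bounds'' is already resolved by the rectangle geometry plus the cited result, which is all the paper uses.

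The genuine gap is your disposal of the case $p\in\II$ (and you should also treat $p\in\ov{\II}\setminus\II$, the endpoint of null infinity, which your argument does not even address). Invoking Theorem~\ref{thm:well_posed_main} near $p$ is circular: to run the local theory, or the extension principle of Theorem~\ref{thm:extension_principle}, up to and beyond $p$, you need the weighted bounds of Definition~\ref{defn:initial_data} --- e.g.\ $\int (u-u_0)^{-3}\left[\ov{B}^2+(\ov{B}_u)^2\right]\dd u<C_0$ and $\sup\left\lvert \ovt{r}^{-1}\del_u\ov{B}\right\rvert<C_0$ --- to hold \emph{uniformly} on data slices approaching $p$. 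On each fixed slice $v=v_1$ strictly in the past of $p$ these bounds hold with some constant $C_0(v_1)$, but $C_0(v_1)$ may blow up as $v_1$ approaches $p$, so the existence triangles produced by Theorem~\ref{thm:well_posed_main} may shrink and never cover a neighbourhood of $p$; establishing the uniform bounds near $\II$ is precisely the analytic content of Section~\ref{sec:a_priori_no_horizon}, where $M<0$ enters essentially. If extension along $\II$ were automatic as you assert, Theorem~\ref{thm:no_horizons_intro} would require no a priori estimates at all. The paper avoids this entirely with a purely causal observation requiring no PDE input: the generating set $X$ of Definition~\ref{defn:first_singularity} is compact and wholly contained in $\Qq^+$, so $r$ is bounded on $X$, whereas $J^-(p)$ for $p\in\ov{\II}$ contains points of arbitrarily large $r$ hugging the timelike boundary, through which past-inextendible causal curves run into $\II$ without ever meeting $X$; this is incompatible with $J^-(p)\subseteq D^+(X)\cup J^-(X)$, so eventual compact generation already fails for boundary points on $\ov{\II}$. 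Replacing your step for case (a) by this soft argument repairs the proof and brings it in line with the paper's.
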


\begin{proof}
	Suppose $p\notin\ov{\Gamma}$. Since the compact set $X$ of Definition~\ref{defn:first_singularity} has to be wholly contained in $\Qq^+$, we know that $p\notin\II$. In particular, $p$ is the future endpoint of a rectangle, whose remainder is completely contained in the interior of $\Qq^+$; see Figure~\ref{fig:extension_double_null}. By Proposition~\ref{propn:no_trapping}, $r_u<0$ and $r_v>0$ in this rectangle. Therefore, we can apply the standard extension principle away from infinity and the central worldline -- in a manner as e.\,g. in \citep{Dafermos_naked_Higgs} -- to conclude that $p\in\Qq^+$, a contradiction.
\end{proof}

\begin{thm}
	\label{thm:character_horizon}
	If $\Qq^+\backslash J^-(\II)\neq\emptyset$, then there is a null curve $\HH^+\subseteq\Qq^+$ such that
	\begin{align}
	\label{eqn:character_horizon}
	\HH^+=\ov{J^-(\II)}\backslash\left(I^-(\II)\cup\ov{\II}\right).
	\end{align}
\end{thm}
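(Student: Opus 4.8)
The plan is to reduce the statement entirely to the flat causal structure of the $(u,v)$-plane. Since $h$ is conformal to $-\frac12(\dd u\otimes\dd v+\dd v\otimes\dd u)$, causality in $\Qq^+$ coincides with the flat causality of $\RR^{1+1}$ restricted to $\Qq^+$: a future-directed causal (resp.\ timelike) curve is one along which $u$ and $v$ are non-decreasing (resp.\ strictly increasing). Having normalised coordinates so that $u=v$ on $\II$, I write $\II=\{(w,w):w\in\UU\}$ and set $u^\ast:=\sup\UU$. First I would show that $\UU$ is an interval with upper endpoint $u^\ast$. Indeed, if $u_1\in\UU$ and $u_0<u_1$ lies in the range of the $u$-coordinate on $\Qq^+$, then by Proposition~\ref{propn:no_trapping} we have $r_u<0$, so the outgoing ray $\{u=u_0\}$ attains values of $r$ at least as large as $\{u=u_1\}$ at comparable $v$; hence $\sup_v r(u_0,v)=\infty$ and $u_0\in\UU$ as well. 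Thus $\UU$ is downward closed in $u$, hence an interval $(u_\Sigma,u^\ast)$ whose lower endpoint is the $u$-value at which $\II$ meets $\Sigma$.

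The central step is the identification
\[
J^-(\II)\cap\Qq^+=I^-(\II)\cap\Qq^+=\{p\in\Qq^+:u(p)<u^\ast\}.
\]
For the inclusion $\supseteq$, given $p$ with $u(p)<u^\ast$ one has $u(p)\in\UU$, so the outgoing ray $\{u=u(p)\}$ from $p$ stays in $\Qq^+$ (by definition of $\UU$ and $\II$) and reaches the point $(u(p),u(p))\in\II$; since $\II$ is timelike, $p$ in fact lies in $I^-(\II)$. For $\subseteq$, any future-directed causal curve from $p$ to a point $(w,w)\in\II$ forces $u(p)\le w<u^\ast$ by monotonicity of $u$ along causal curves. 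The coincidence of $I^-$ and $J^-$ reflects precisely that $\II$ is timelike: a causal curve ending on $\II$ can always be deformed to a timelike one ending slightly earlier on $\II$.

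Finally I would compute the boundary set. Because $u$ is continuous on $\Qq^+$, the above gives $\ov{J^-(\II)}\cap\Qq^+=\{u\le u^\ast\}\cap\Qq^+$, while $I^-(\II)\cup\ov\II=(\{u<u^\ast\}\cap\Qq^+)\cup\ov\II$. Subtracting,
\[
\ov{J^-(\II)}\setminus\left(I^-(\II)\cup\ov\II\right)=\left(\{u=u^\ast\}\cap\Qq^+\right)\setminus\ov\II,
\]
which is exactly the outgoing null ray $\{(u^\ast,v):v<u^\ast\}\cap\Qq^+$ (its future limit point $(u^\ast,u^\ast)=i^+$ lies in $\ov\II$ and is removed). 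This set is a null curve, and I would define $\HH^+$ to be it. The hypothesis $\Qq^+\setminus J^-(\II)\neq\emptyset$ supplies points with $u\ge u^\ast$; combined with the points of $J^-(\II)$ having $u<u^\ast$ and the connectedness of $\Qq^+$, the continuous image $u(\Qq^+)$ is an interval containing $u^\ast$, so the ray is non-empty, and its connectedness follows from $\Qq^+=D^+(\Sigma\cup\II)$ being causally convex.

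The main obstacle is making the reachability argument in the central step fully rigorous, that is, verifying that the outgoing rays and the short deformations used to connect $p$ to $\II$ genuinely remain inside $\Qq^+$ rather than terminating at the centre $\Gamma$ or at a singular boundary point before reaching $\II$. This is exactly where the domain-of-dependence property $\Qq^+=D^+(\Sigma\cup\II)$ and the sign conditions $r_u<0,\ r_v>0$ from Proposition~\ref{propn:no_trapping} do the essential work; everything else is a routine unwinding of the flat causal structure and the monotonicity of $\UU$ in $u$.
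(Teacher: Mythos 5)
Your overall strategy---reducing everything to the flat causal structure, identifying $J^-(\II)\cap\Qq^+$ with the coordinate half-space $\{u<u^\ast\}$, and reading off $\HH^+$ as the null ray $\{u=u^\ast\}$---is a legitimate and more explicit route than the paper's, which instead takes $\HH^+$ to be the boundary portion of the causal past of $\II$ in $\Qq^+$ and locates its future endpoint on $\ov{\II}$ by elimination. However, there is a genuine gap at exactly the step you call central: the downward closedness of $\UU$. Your justification via $r_u<0$ does not work. The comparison $r(u_0,v)\geq r(u_1,v)$ is only meaningful at $v$ for which \emph{both} $(u_0,v)$ and $(u_1,v)$ lie in $\Qq^+$; with your normalisation the ray $\{u=u_0\}$ exists only for $v<u_0$, whereas $r(u_1,\cdot)$ blows up only as $v\rightarrow u_1>u_0$ and is bounded on $v\leq u_0$. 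So the inequality yields no blow-up along $\{u=u_0\}$ at all, and monotonicity of $r$ in $u$ cannot by itself prevent the outgoing ray $\{u=u_0\}$ from terminating at a boundary point $(u_0,v^\dagger)$ with $v^\dagger<u_0$---a ``hole'' in $\Qq^+$---before reaching the diagonal. The same unproved reachability assertion then infects the inclusion $\supseteq$ of your central identity.

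What actually closes this gap is not $\Qq^+=D^+(\Sigma\cup\II)$ plus the sign conditions, which you defer to as ``routine'', but precisely Lemma~\ref{thm:no_first_singularity}: if the ray terminated at such a point $(u_0,v^\dagger)$, one checks (via the rectangle of Figure~\ref{fig:extension_double_null}) that its past is eventually compactly generated, so it would be a first singularity; it cannot lie on $\ov{\Gamma}$ since $r$ is increasing along the ray by $r_v>0$, contradicting the lemma. The only remaining way a constant-$u$ ray can be cut off is by a Cauchy horizon emanating from a central first singularity, which is itself an outgoing (constant-$u$) null ray and therefore cuts off all rays of larger $u$ simultaneously, so no hole in $\UU$ arises. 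This lemma is exactly the engine of the paper's own proof---there it is applied to the future endpoint of the boundary of $J^-(\II)$ to force that endpoint onto $\ov{\II}$---and your write-up never invokes it. A smaller issue: your intermediate claim $J^-(\II)\cap\Qq^+=I^-(\II)\cap\Qq^+$ fails in the case, explicitly permitted in the caption of Figure~\ref{fig:global_geometry}, that $i^+\in\II$, i.e.\ $u^\ast\in\UU$; then the ray $\{u=u^\ast\}$ itself lies in $J^-(\II)\setminus I^-(\II)$. Your final subtraction survives, since it only uses $\ov{J^-(\II)}\cap\Qq^+=\{u\leq u^\ast\}\cap\Qq^+$ and $I^-(\II)\cap\Qq^+=\{u<u^\ast\}\cap\Qq^+$, but the identity as you state it should be weakened accordingly.
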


Note that $I^-(S)$ denotes the chronological past of a subset $S$.

\begin{proof}
	The horizon is given by
	\begin{align*}
	\HH^+=\mathring{J}^-(\II)\cap\Qq^+.
	\end{align*}
	Let $p$ be the future endpoint of the horizon. Since  $\Qq^+\backslash J^-(\II)\neq\emptyset$, $p\notin\ov{\Gamma}$. If $p\in\Gamma$, there is an open set $U$ such that $U\cap\Qq^+\neq\emptyset$ If $p\notin\ov{\Gamma}\cup\ov{\II}$, then $p$ is a first singularity and we have a contradiction to Lemma~\ref{thm:no_first_singularity}. Therefore $p\in\ov{\II}$.
\end{proof}

Sections~\ref{sec:extension_principle} and \ref{sec:a_priori_no_horizon}  are devoted to showing Theorem~\ref{thm:no_horizons_intro}.

\subsection{An extension principle}
\label{sec:extension_principle}

We formulate and prove an extension principle tailored to extending a solution beyond a supposed horizon.  

\begin{thm}
	\label{thm:extension_principle}
	Let $(\rrr,m,B)$ be a classical solution to the Einstein equations (\ref{eqn:rtilde_problem}), (\ref{eqn:B_problem}), (\ref{eqn:mu_problem}) and (\ref{eqn:mv_problem}) in the punctured triangle $\Delta:=\Delta_{d,u_0}\backslash\{(u_0+d,u_0+d)\}$. Let $\IIa=\ov{\Delta_{d,u_0}}\backslash\{\Delta_{d,u_0}\cup\{(u_0+d,u_0+d)\}\}$. Assume that
	\begin{align*}
	\rrr\Big\lvert_{\IIa}=0,~m\Big\lvert_{\IIa}=M<0,~B\Big\lvert_{\IIa}=0
	\end{align*}
	and that
	\begin{align*}
	\lim_{v\rightarrow u_0+d} \rrr(u_0+d,v)=0.
	\end{align*}
	Suppose that 
	\begin{align*}
	\rrr_{u}>0~\mathrm{and~} \rrr_v<0
	\end{align*}
	in $\Delta$, that 
	\begin{align*}
	\inf_{\IIa}\,\rrr_u>0 ~\mathrm{and~}\sup_{\IIa}\,\rrr_u<0
	\end{align*}
	and that there is a $C>0$ such that
	\begin{align*}
	\sup_{u_0\leq v\leq u_0+d}\int_{v}^{u_0+d}\rrr^{-3}\left(B^2+(B_u)^2\right)\,\dd u'+\sup_{\Delta}\left\lvert\rrr^{-1}\del_u B\right\lvert<C.
	\end{align*}
	Then there is a $\delta>0$ such that the solution $(\rrr,m,B)$ can be extended to the strictly larger triangle $\Delta_{d+\delta,u_0}$.
\end{thm}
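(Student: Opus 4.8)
The plan is to reduce the extension statement to the local well-posedness result (Theorem~\ref{thm:well_posed_main}) by showing that the hypotheses near the vertex $(u_0+d,u_0+d)$ furnish a valid free data set on a short ingoing null segment $\Nn$ emanating from $\II_{\mathrm{aux}}$ to the future of the puncture. The key structural observation is that the punctured triangle $\Delta$ is, away from the missing vertex, a region on which the solution is already classical with good control: the monotonicity $\rrr_u>0$, $\rrr_v<0$ says $r$ is increasing toward the future along both null directions, and the uniform bound on $\rrr^{-3}(B^2+(B_u)^2)$ together with $\rrr^{-1}|\del_u B|$ is precisely the flux control appearing in Definition~\ref{defn:initial_data}(ii). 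So the strategy is: (1) choose a null ray $\Nn=\{u=u_1\}$ for $u_1$ slightly less than $u_0+d$, running from $\II_{\mathrm{aux}}$ inward, on which the restrictions of $(\rrr,B)$ together with the limiting mass $M$ constitute a free data set; (2) verify the regularity conditions of Definition~\ref{defn:initial_data} at the endpoint on $\II_{\mathrm{aux}}$; (3) apply Theorem~\ref{thm:well_posed_main} to obtain a fresh solution in a triangle $\Delta_{\delta,u_1}$ hanging off $\Nn$; (4) glue this to the original solution by uniqueness, thereby covering a neighbourhood of the former vertex and extending to $\Delta_{d+\delta,u_0}$.

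\textbf{Carrying out the steps.}

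First I would fix $u_1<u_0+d$ close to the vertex and consider $\Nn=(u_1,u_0+d]$ parametrised so that its future endpoint lies on $\II_{\mathrm{aux}}$. Along $\Nn$, the functions $\ovt r:=\rrr$, $\ov B:=B$ inherit $C^2$, resp. $C^1$, regularity from the classical solution on $\Delta$. The boundary behaviour at the $\II_{\mathrm{aux}}$-endpoint must be translated into the normalisations of Definition~\ref{defn:initial_data}(i): the conditions $\rrr\to 0$, $\dd\rrr\neq 0$ at conformal infinity, combined with the sign conditions $\inf_{\II_{\mathrm{aux}}}\rrr_u>0$, $\sup_{\II_{\mathrm{aux}}}\rrr_u<0$, give the required limits $\ovt r\to 0$ and $\ovt r_u\to 1/2$ after a permissible rescaling of the null coordinate of the form \eqref{eqn:coordiante_change_condition}; the second-derivative condition $\ovt r_{uu}\to 0$ follows from evaluating the evolution equation \eqref{eqn:rtilde_problem} in the limit, using $\rrr\to0$. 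The integral and sup bounds of Definition~\ref{defn:initial_data}(ii) are exactly the hypotheses of the present theorem restricted to the single ray $\Nn$, so they hold with the same constant $C_0\leq C$. The mass $M$ plays the role of the free constant, and the derived quantities $\ov m$, $\ovt r_v$ are then recovered by integrating \eqref{eqn:ov_m} and the transport equation for $\ovt r_v$ with the stated boundary values at $\II_{\mathrm{aux}}$; by uniqueness of these ODEs they must coincide with the traces of the original $m$ and $\rrr_v$ on $\Nn$.

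\textbf{Gluing and the main obstacle.}

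With a genuine free data set in hand, Theorem~\ref{thm:well_posed_main} produces a $\delta>0$ and a unique weak solution $(\rrr,m,B)$ in the triangle $\Delta_{\delta,u_1}$ attached along $\Nn$, satisfying the Dirichlet conditions $\rrr|_{\II}=0$, $B|_{\II}=0$. Because both the original and the new solution are classical to the past of and along $\Nn$ in their common domain, and because weak solutions with matching data on a null ray are unique, the two agree on the overlap and patch together into a single solution on $\Delta\cup\Delta_{\delta,u_1}$. For $u_1$ chosen close enough to $u_0+d$, this union contains a full neighbourhood of the previously excised vertex and in particular covers $\Delta_{d+\delta',u_0}$ for some $\delta'>0$, which is the claimed extension. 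The main obstacle I anticipate is \emph{not} the gluing but step~(2): one must show that the \emph{uniform-in-$v$} hypotheses of the theorem genuinely descend to \emph{pointwise} free-data normalisations at the $\II_{\mathrm{aux}}$-endpoint of $\Nn$ with constants independent of how close $u_1$ is to the vertex. The danger is that the limiting behaviour of $\ovt r_{uu}$ or the flux bounds could degenerate as $u_1\to u_0+d$, forcing $\delta\to 0$ and preventing any net extension; the crux is therefore to prove that the constant $C_0$ in Definition~\ref{defn:initial_data} and hence the existence time $\delta$ furnished by Theorem~\ref{thm:well_posed_main} can be taken uniform over a family of such rays accumulating at the vertex, which is what ultimately allows the solution to be pushed strictly beyond the old corner.
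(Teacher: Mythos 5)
There is a genuine gap here, and it is geometric rather than the uniformity issue you flag at the end. Your plan hangs the local well-posedness triangle off a late constant-$v$ ray $\{v=u_1\}$ whose data are the \emph{restriction} of the existing solution, $u\in(u_1,u_0+d]$. But Theorem~\ref{thm:well_posed_main} then produces a solution only in $\Delta_{\delta,u_1}$ with $\delta\le u_0+d-u_1$, the length of that data segment: points with $u>u_0+d$ lie outside the domain of dependence of $\Nn\cup\II$, since no data exist past the right edge $\{u=u_0+d\}$ of $\Delta$. Hence every triangle your construction produces satisfies $\Delta_{\delta,u_1}\subseteq\ov{\Delta_{d,u_0}}$, and the union $\Delta\cup\Delta_{\delta,u_1}$ can never cover a strictly larger triangle $\Delta_{d+\delta',u_0}$, no matter how close $u_1$ is taken to the vertex; at best you re-derive the solution on the punctured triangle. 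This is precisely why the paper's proof first lengthens the constant-$v$ slices in the $u$-direction: one prescribes data on the extended initial slice $(u_0,u_0+d+\delta']$ at $v=u_0$ (freely chosen for the theorem, which only asserts existence of \emph{an} extension; taken as given in Corollary~\ref{cor:extension_principle}, where uniqueness is then claimed) and applies the standard interior local existence theorem away from infinity -- legitimate there because $\rrr$ is bounded below away from $\IIa$ and $\rrr_u>0$, $\rrr_v<0$ -- to extend the solution to $\Delta_{\eps}=\Delta\cup\bigl(\Delta_{d+\delta',u_0}\cap\{v\leq u_0+d-\eps\}\bigr)$ for every $\eps>0$. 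Only after this step do the constant-$v$ rays protrude past $u=u_0+d$, so that the triangles supplied by Theorem~\ref{thm:well_posed_main} can reach beyond the old corner.

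Your closing paragraph correctly identifies the remaining point -- uniformity of the existence time over rays accumulating at the vertex -- but leaves it unresolved, whereas in the paper it is immediate: the hypothesis bounds the data norms on every constant-$v$ ray by the same constant $C$ (hence by $2C$ on the slightly lengthened rays in $\Delta_{\eps}$), and since $M<0$ guarantees $1-\mu>r^2/\ell^2$ everywhere, the existence time $\delta^{\ast}$ of Theorem~\ref{thm:well_posed_main} depends only on $2C$ and not on $\eps$; choosing $\eps=\delta^{\ast}/2$ then yields the strictly larger triangle. The parts of your argument that do match the paper are the gauge normalisation (a coordinate change of the form \eqref{eqn:coordiante_change_condition} giving $\rrr_u=1/2$ and, via equation \eqref{eqn:rtilde_problem}, $\rrr_{uv}=0$ on $\IIa$) and the observation that the flux hypotheses of the theorem are exactly the bounds of Definition~\ref{defn:initial_data}(ii); but without the slice-lengthening step the construction cannot leave $\ov{\Delta_{d,u_0}}$, so the proof as proposed fails.
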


\begin{proof}
	Extending beyond the domain of existence means using the local well-posedness result to extend the solution further into the future. We will first need to make sure that on each constant $v$-slice, the function $\rrr$ satisfies the correct boundary conditions.
	Reformulating equation (\ref{eqn:rtilde_problem}), we obtain
	\begin{align*}
	\rrr_{uv}=-\frac{4\rrr_u\rrr_v}{\rrr^2+2|m|\rrr^4+\ell^{-2}}\rrr\left(-1+\frac{1}{3}R-2m\rrr^2\right).
	\end{align*}
	Therefore $\rrr_{uv}=0$ on $\IIa$. Using a coordinate change, we want to fix the value of $\rrr_u$.	
	By the assumptions and since $\rrr_u=-\rrr_v$ on $\IIa$, $|\rrr_u|,|\rrr_v|\geq c>0$ and  
	\begin{align*}
	\frac{\dd u'}{\dd u}(u,v)=2\rrr_u(u,u),~\frac{\dd v'}{\dd v}=2\rrr_v(v,v),
	\end{align*}
	where $u'=u$ and $v'=v$ on $\II$,
	defines a regular change of coordinates that preserves the biaxial Bianchi IX symmetry. Moreover, in $(u',v')$ coordinates,
	\begin{align}
	\label{eqn:standard_boundary_conditions}
	\rrr\Big\lvert_{\II}=0,~\rrr_{u'}\Big\lvert_{\II}=\frac{1}{2},~\rrr_{u'v'}\Big\lvert_{\II}=0.
	\end{align}
	Hence we can assume without loss of generality that (\ref{eqn:standard_boundary_conditions}) already holds in the original $(u,v)$ coordinates.
	
	To increase the domain of existence, we also need initial $v$-slices of increased length. This can be achieved by an application of the standard local existence theorem away from infinity in double null coordinates whose proof proceeds by the same methods as for $\Lambda=0$, which is standard by now. Prescribing data on the slice $(u_0,u_0+d+\delta']$ of constant $v=u_0$ (for a $\delta'>0$), we infer that for every $\eps>0$, there is a $\delta'>0$ such that the solution can be extended to the set
	\begin{align*}
	\Delta_{\eps}:=\Delta\cup\left(\Delta_{d+\delta',u_0}\cap\{v\leq u_0+d-\epsilon\}\right).
	\end{align*}
	For each constant $v$-ray in $\Delta_{\eps}$, we have a initial data set, whose functions have norms uniformly bounded by $2C$. Note that the condition $1-\mu>r^2/\ell^2$ holds everywhere because $M<0$. 
	
	Therefore, there is a $\delta^{\ast}$ independent of $\eps$ such that each slice of constant $v$ in $\Delta_{\eps}$ yields a solution in a triangular domain of size $\delta^{\ast}$. Now we choose $\eps=\delta^{\ast}/2$ and see that the solution $(\rrr,m,B)$ extends to a strictly larger triangle $\Delta_{d+\delta,u_0}$, where $\delta=\eps$.
\end{proof}

The proof above yields a another version of the extension principle that we formulate separately for the sake of clarity.

\begin{cor}
	\label{cor:extension_principle}
	Suppose the assumptions of Theorem~\ref{thm:extension_principle} hold. Moreover, let us assume that the classical solution on $\Delta$ has an extension to the extended initial data slice $\tilde\Nn=(u_0,u_0+d+\eps]$.  
	Then there is a $\delta>0$ such that the solution $(\rrr,m,B)$ can be extended to $\Delta_{d+\delta,u_0}$ such that it agree on $\tilde{\Nn}\cup \Delta_{d+u_0,u_0}$ with the given values. Furthermore, the extension is unique for sufficiently small $\delta>0$ 
\end{cor}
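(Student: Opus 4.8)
The plan is to re-run the proof of Theorem~\ref{thm:extension_principle} almost verbatim, the only change being that the step producing an enlarged initial slice is now supplied by hypothesis rather than constructed by hand. Existence of an extension to $\Delta_{d+\delta,u_0}$ is therefore essentially the content of the theorem; the genuinely new assertions of the corollary are that the extension can be arranged to agree with the prescribed solution on $\Delta_{d,u_0}$ and with the prescribed slice data on $\tilde\Nn$, and that it is unique for small $\delta$. Both of these I would read off from the uniqueness statements already built into the two local existence results that the proof invokes.

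Concretely, I would proceed as follows. First I normalise the $(u,v)$ coordinates exactly as in the proof of Theorem~\ref{thm:extension_principle}, so that the standard boundary conditions \eqref{eqn:standard_boundary_conditions} hold on $\IIa$, transporting the given solution on $\Delta$ and the prescribed data on $\tilde\Nn$ by this change of coordinates. Next I check that the restriction to $\tilde\Nn$, together with the ODE integrations following Definition~\ref{defn:initial_data}, constitutes a free data set whose norms are controlled by $2C$; here one uses the a priori bounds assumed in Theorem~\ref{thm:extension_principle} together with continuity up to $\tilde\Nn$, and the fact that $1-\mu>r^2/\ell^2$ since $M<0$. The given extended slice $\tilde\Nn$ and the characteristic data that the solution on $\Delta$ induces on the outgoing edge $u=u_0+d$ then furnish transversal data, from which the standard local existence away from infinity propagates the solution into the region $\Delta_\eps$ for every $\eps>0$; the uniqueness of this characteristic initial value problem forces the propagated solution to coincide with the prescribed one on $\Delta$ and on $\tilde\Nn$.

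Finally, for each constant-$v$ slice in $\Delta_\eps$ the induced free data again has norms bounded by $2C$, so the uniform version of Theorem~\ref{thm:well_posed_main} yields a solution on a triangle of one fixed size $\delta^\ast>0$; choosing $\eps=\delta^\ast/2$ and $\delta=\eps$ and gluing these triangles to the solution on $\Delta_\eps$ produces the extension to $\Delta_{d+\delta,u_0}$. Agreement on $\tilde\Nn\cup\Delta_{d,u_0}$ is then immediate from the two uniqueness statements, and uniqueness of the extension for small $\delta$ follows by the same token after a standard domain-of-dependence argument. I expect the only real subtlety to be bookkeeping: verifying that the bounds genuinely pass to the newly created slices, so that $\delta^\ast$ is truly uniform, and checking that the interior and near-infinity patches, each unique on its own domain, are mutually consistent on their overlaps so that together they define a single extension.
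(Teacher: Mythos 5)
Your proposal is correct and follows essentially the same route as the paper: the paper gives no separate argument for the corollary, remarking only that the proof of Theorem~\ref{thm:extension_principle} yields it directly, with the enlarged slice now supplied by hypothesis rather than constructed via the interior local existence theorem, and with agreement and uniqueness inherited from the uniqueness statements of the two local well-posedness results -- exactly as you argue. Your additional bookkeeping (coordinate normalisation, verifying the free-data norms on $\tilde\Nn$, consistency of the interior and near-infinity patches on overlaps) is a faithful elaboration of what the paper leaves implicit.
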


\subsection{A priori estimates}
\label{sec:a_priori_no_horizon}

In this section, we first establish what was described through Figure~\ref{fig:globabl_geometry_moving_hypersurface} in Section~\ref{sec:outline} as the soft argument. This is the content of Lemma~\ref{lemma:start_a_priori}. The remainder of the section contains the argument by contradiction, using the extension principle in form of Corollary~\ref{cor:extension_principle}.

\begin{lemma}
	\label{lemma:start_a_priori}
	Let $\Qq^+$ be as in Theorem~\ref{thm:no_horizons_intro}. Set $\Delta^d_{u}:=\Delta_{d,u}\backslash\{(u+d,u+d)\}$ and $\Nn_u^d:=\{v=u\}\cap\Delta^d_u$. Then for any $\Delta^{d_1}_{u_1}\subseteq\Qq^+$, there is a $u_0\geq u_1$ and $d_0:=d_1-(u_0-u_1)$ such that
	\begin{compactenum}[(i)]
		\item $r\geq r_0>0$ in $\Delta^{d_0}_{u_0}$.
		\item There are $q_1,q_2>0$ such that
		\begin{align}
		\label{eqn:upper_lower_bound_rv_Omega2}
		q_1\leq \frac{r_v}{\Omega^2}\leq q_2
		\end{align}
		in $\Delta_{u_0}^d$. The constants $q_1$ and $q_2$ depend on the choice of $r_0$.
	\end{compactenum}
\end{lemma}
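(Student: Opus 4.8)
The goal is to set up the geometric preliminaries needed to run the extension-principle argument: we must locate a triangular region near null infinity on which $r$ is bounded below and $r_v/\Omega^2$ is pinched between two positive constants, working entirely in terms of values that can be controlled on a single null slice. The plan is to exploit Lemma~\ref{defn:Sigma_admissible} together with Proposition~\ref{propn:no_trapping}, which already guarantee $r_u<0$, $r_v>0$ globally and the asymptotics of $r_v/\Omega^2$ towards $\II$. The essential mechanism is that because $M<0$ and the mass is monotone, the quantity $1-\mu = 1 - 2m/r^2 + r^2/\ell^2$ stays bounded away from zero, which will convert the horizon-region degeneration of $\Omega^2$ into uniform control on $r_v/\Omega^2$.

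First I would address (i). Recall that $\Gamma$ (where $r=0$) and $\II$ (where $r=\infty$) are the two relevant boundary components, and that a first singularity, if any, lies in $\ov\Gamma\setminus\Gamma$ by Lemma~\ref{thm:no_first_singularity}. The idea is to choose $u_0 \geq u_1$ close enough to the top vertex so that the shrunk triangle $\Delta^{d_0}_{u_0}$ is pushed towards $\II$ and away from $\Gamma$; since $r\to\infty$ on $\II$ and $r$ is continuous and strictly monotone ($r_u<0$, $r_v>0$) in the interior, the infimum of $r$ over the closed shrunk triangle is attained away from $\II$ and is strictly positive. Concretely, I would argue that the value of $r$ at the past-most corner $(u_0,u_0)\in\II$-adjacent slice controls $r$ from below throughout $\Delta^{d_0}_{u_0}$ by the monotonicity $r_u<0,\,r_v>0$: integrating these signs, $r$ only decreases as one moves to larger $u$ and only increases in $v$, so the minimum over the triangle is controlled by a corner value, which can be made $\geq r_0>0$ by taking $u_0$ sufficiently large. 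This fixes $r_0$.

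Next, for (ii), I would use the defining relation $\Omega^2 = -4r^4\rrr_u\rrr_v/(1-\mu)$ from \eqref{eqn:auxiliary}, or equivalently write $r_v/\Omega^2$ in terms of $r_u$, $r_v$ and $1-\mu$; the mass formula in Proposition~\ref{propn:no_trapping} gives $-4 r_u r_v/\Omega^2 = 1-\mu = 1 - 2m/r^2 + r^2/\ell^2$. Since $m<0$ on $\Delta^{d_0}_{u_0}$ (by monotonicity and $M<0$) and $r\geq r_0$, we get $1-\mu \geq 1 + r_0^2/\ell^2 >0$, bounded below uniformly, and likewise bounded above in terms of $\sup r$ and $\sup|m|/r_0^2$ on the region. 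Combining this with the boundary asymptotics $r_v/\Omega^2 \to c_0>0$ from Lemma~\ref{defn:Sigma_admissible} and the relation $r_u/r_v \to -1$ near $\II$, one extracts the two-sided bound \eqref{eqn:upper_lower_bound_rv_Omega2}. The dependence of $q_1,q_2$ on $r_0$ enters precisely through the bound on $1-\mu$ and on $\sup_{\Delta} r$, consistent with the statement.

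\textbf{Main obstacle.} The delicate point is establishing the upper bound $r_v/\Omega^2 \leq q_2$ uniformly up to $\II$: the two-sided control of $1-\mu$ only pins down the \emph{product} $r_u r_v/\Omega^2$, not $r_v/\Omega^2$ by itself, so one additionally needs a lower bound on $|r_u|$ (equivalently, that $r_u/r_v$ does not degenerate to $0$) throughout the triangle and not merely on $\II$. I expect this to require propagating the sign and size of $r_u$ off $\II$ using equation \eqref{eqn:rtilde_problem} — integrating $\del_v(\log|\rrr_u|)$ along constant-$u$ rays and using that the integrand is controlled by $1-\mu$, $r$ and the mass, all of which are already bounded on the region. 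Getting these bounds uniform as $u_0$ is chosen (so that the constants do not blow up as the triangle is pushed towards the top vertex) is the crux, and it relies essentially on the $M<0$ hypothesis to keep $1-\mu$ from degenerating.
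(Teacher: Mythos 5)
Your part (i) is essentially the paper's argument: push the initial slice $\{v=u_0\}$ to the future so that the triangle avoids $\Gamma$, and use $r_v>0$ from Proposition~\ref{propn:no_trapping} to propagate the lower bound on $r$; that is fine. For part (ii) you correctly isolate the crux -- that $1-\mu=-4r_ur_v/\Omega^2$ only pins down the \emph{product} -- but your proposed resolution has a genuine gap. Integrating $\del_v\log\rrr_u$ from \eqref{eqn:rtilde_problem} produces the exponent $\int \frac{4r_v}{r(1-\mu)}\left(-1+\tfrac13 R-2m\rrr^2\right)\dd v$, and $R=2\e^{-2B}-\tfrac12\e^{-8B}$ is unbounded below as $B\rightarrow-\infty$. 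At this stage no pointwise bound on $B$ is available -- such bounds are derived only in Section~\ref{sec:a_priori_no_horizon}, \emph{using} this very lemma -- so your claim that the integrand is ``controlled by $1-\mu$, $r$ and the mass'' is circular. Relatedly, your intermediate assertion that $1-\mu$ is bounded above ``in terms of $\sup r$'' fails on $\Delta^{d_0}_{u_0}$, which touches $\II$ where $1-\mu\sim r^2/\ell^2\rightarrow\infty$.

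The paper's mechanism, which your proposal misses, is to propagate $r_v/\Omega^2$ \emph{itself} via the constraint \eqref{eqn:constraint_v}, rewritten as
\begin{align*}
\del_v\left(\frac{r_v}{\Omega^2}\right)=\left(-\frac{4r^3r_u}{\Omega^2}(B_v)^2\right)\left(-\frac{2}{r^2(1-\mu)}\right)\frac{r_v}{\Omega^2}.
\end{align*}
Since $r_u<0$, the right-hand side is non-positive, so the upper bound $q_2$ comes for free (you had the difficulty backwards): $r_v/\Omega^2$ is non-increasing along constant-$u$ rays and is bounded on the anchor set $S=\Rr\cup\left(\Sigma\cap\{r\geq r_0\}\right)$ with $\Rr=\{r=r_0\}\cap\{v\leq u_0\}\cap\Qq^+$ compact -- note that anchoring on $\Sigma$-asymptotics alone, as you suggest, does not reach constant-$u$ rays that exit through $\{r=r_0\}$ rather than through $\Sigma$. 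For the lower bound $q_1$, the exponent is uniformly controlled because $\frac{2}{r^2(1-\mu)}\leq\frac{2}{r_0^2}$ (using $1-\mu>1$, which is where $M<0$ enters) and because $\int -\frac{4r^3r_u}{\Omega^2}(B_v)^2\,\dd v\leq m(u,v)-m(u',v')\leq M-m(u_0+d,u_0)$ by \eqref{eqn:mv_problem} after dropping the non-negative term $rr_v(1-\tfrac23 R)$; no pointwise control of $B$, $R$, or of $r_u$ alone is ever needed. Your route could in principle be repaired by the same device -- the dangerous $R$-contribution to your exponent is dominated by that other, also non-negative, term of $\del_vm$ and hence is likewise bounded by the mass difference -- but this mass-monotonicity input is precisely what your proposal omits, so as written the argument does not close.
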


\begin{proof}
	If $\Delta^{d_1}_{u_1}$ touches $\Gamma$, then by moving the initial slice of constant $v$ to the future -- as depicted in Figure~\ref{fig:globabl_geometry_moving_hypersurface} --, we achieve that $r\geq r_0>0$ since $r_v>0$ globally by Proposition~\ref{propn:no_trapping}.
	
	By assumption, the bound on $r_v/\Omega^2$ holds on $\Sigma$. Set
	\begin{align*}
	\Rr:=\{r=r_0\}\cap\{v\leq u_0\}\cap\Qq^+.
	\end{align*}
	The set $\Rr$ is closed and touches $\HH^+$ and $\Sigma$. The continuous function $r_v/\Omega^2$ is positive in $\Qq^+$ by Proposition~\ref{propn:no_trapping}. Therefore the bound on $r_v/\Omega^2$ holds in $\Rr$. We will show that (\ref{eqn:upper_lower_bound_rv_Omega2}) holds in the causal future of 
	\begin{align*}
	S:=\Rr\cup\left(\Sigma\cap\{r\geq r_0\}\right)
	\end{align*} 
	such that the constants $q_1$ and $q_2$ depend on the values of $r_v/\Omega^2$ on $S$. We rewrite the constraint equation (\ref{eqn:constraint_v}) as
	\begin{align}
	\label{eqn:bound_on_rv_Omega}
	\del_v\left(\frac{r_v}{\Omega^2}\right)=\left(-\frac{4r^3r_u}{\Omega^2}\left(B_v\right)^2\right)\left(-\frac{2}{r^2(1-\mu)}\right)\frac{r_v}{\Omega^2}.
	\end{align}
	Given $(u,v)\in J^+(S)$, there is a $(u',v')\in S$ such that $(u,v)$ and $(u',v')$ are connected by a null curve. We integrate (\ref{eqn:bound_on_rv_Omega}) along a ray of constant $u$ to find
	\begin{align*}
	\left(\frac{r_v}{\Omega^2}\right)(u,v)=&\exp\left(\int_{v'}^v\frac{4r^3r_u}{\Omega^2}\left(B_v\right)^2\frac{2}{r^2(1-\mu)}\,\dd v''\right)\cdot\left(\frac{r_v}{\Omega^2}\right)(u',v')\\
	\geq& \exp\left(-\frac{2}{r_0^2}\int_{u_0}^v\frac{-4r^3r_u}{\Omega^2}\left(B_v\right)^2\,\dd v''\right)\cdot\left(\frac{r_v}{\Omega^2}\right)(u',v')\\
	\geq& \exp\left(-\frac{2}{r_0^2}\left(m(u,v)-m(u',v')\right)\right)\cdot\left(\frac{r_v}{\Omega^2}\right)(u',v')
	\end{align*}
	For the first inequality, we have used $1-\mu>1$ and $r\geq r_0$. For the second inequality, we have used (\ref{eqn:mv_problem}) and have dropped a non-negative term.
	Therefore, we obtain
	\begin{align}
	\label{eqn:bound_rv_Omega2}
	\e^{-\frac{2}{r_0^2}\left[M-m(u_0+d,u_0)\right]}\frac{r_v}{\Omega^2}\bigg\lvert_{(u',v')}\leq \frac{r_v}{\Omega^2}\bigg\lvert_{(u,v)}\leq \frac{r_v}{\Omega^2}\bigg\lvert_{(u',v')}.
	\end{align}
	This yields (\ref{eqn:upper_lower_bound_rv_Omega2}).
\end{proof}

\begin{rk}
	A bound of the form \eqref{eqn:bound_rv_Omega2} can always be achieved, independently of the exact value of $M$.
\end{rk}

Now assume for the sake of contradiction that $\Qq^+$ possesses a horizon $\HH^+$. According to Lemma~\ref{lemma:start_a_priori}, we find a $\Delta:=\Delta_{u_0}^{d_0}$ such that $(u_0+d,u_0)\in\HH^+$ and such that the conclusions of the lemma hold. In particular, the constants and bounds will be fixed henceforth. Again, let $\IIa:=\ov{\Delta_{d,u_0}}\backslash\{\Delta_{d,u_0}\cup\{(u_0+d,u_0+d)\}\}$. Let $\Nn:=\Nn_{u_0}^{d_0}$. We always have that
\begin{align*}
\rrr\Big\lvert_{\II}=0,~m\Big\lvert_{\II}=M<0,~B\Big\lvert_{\II}=0,
\end{align*}
that
\begin{align*}
r_{u}<0~\mathrm{and~} r_v>0.
\end{align*}

We will show that all the assumptions of the extension principle hold. Let us first turn to estimating the norms of $B$.
The mass achieves its minimum at $(u_1,u_0)$ and its maximum on $\IIa$. Therefore, upon integration over constant $v$, we obtain
\begin{align*}
\int_{v}^{u_1}\left(-rr_u\left(1-\frac{2}{3}R\right)+\frac{4}{\Omega^2}r^3r_v(B_u)^2\right)\,\dd u=M-m(u_1,v)\leq M-m(u_1,u_0).
\end{align*}
Note that we have
	\begin{align*}
	1-\frac{2}{3}R\geq \min\{B^2/2,1\}.
	\end{align*}

We need to estimate the coefficients in the integral.
From
\begin{align*}
\rrr_u=\frac{1}{4r^2\frac{r_v}{\Omega^2}}(1-\mu)
\end{align*}
and (\ref{eqn:bound_on_rv_Omega}),
we obtain
\begin{align}
\label{eqn:inf_sup_ru}
\frac{1}{4\ell^2}\left(\max_{u_0\leq u\leq u_1}\frac{r_v}{\Omega^2}\bigg\lvert_{(u,u_0)}\right)^{-1}\leq \rrr_u\leq& \e^{\frac{2}{r_0^2}\left[M-m(u_0+d,u_0)\right]}\left(\min_{u_0\leq u\leq u_1}\frac{r_v}{\Omega^2}\bigg\lvert_{(u,u_0)}\right)^{-1}\times\\&~~~~~~~~~\times\frac{\rrr^2}{4}\left(1+\frac{2|M|}{r^2}+\frac{r^2}{\ell^2}\right)\notag
\end{align}
and see that $\rrr_u$ is uniformly bounded above and below by a constant depending only on data on $\Nn$. Therefore
\begin{align*}
C_1(u-v)\leq \rrr\leq C_2(u-v)
\end{align*}
and
\begin{align}
\label{eqn:limit_rrr}
\lim_{v\rightarrow u_0+d} \rrr(u_0+d,v)=0.
\end{align}
Furthermore,
\begin{align*}
-r_u\bigg\lvert_{(u,v)}=\frac{1+\frac{2|m|}{r^2}+\frac{r^2}{\ell^2}}{4\frac{r_v}{\Omega^2}}\geq \frac{r^2}{4\ell^2}\frac{1}{\max_{u_0\leq u\leq u_1}\frac{r_v}{\Omega^2}\bigg\lvert_{(u,u_0)}}.
\end{align*}
Therefore there is a constant $C_{u}$ depending only on values of $\rrr$, $\rrr_v$ and $m$ on $v=u_0$ such that
\begin{align}
\label{eqn:H1_u_unimproved}
\int_{v}^{u_1}r^3\left(\min\{B^2,1\}+(B_u)^2\right)\,\dd u<C_{u}
\end{align}
uniformly in $\Delta$.

Thus,
\begin{align*}
|B(u,v)|\leq \left(\int_v^u\frac{1}{r^3}\,\dd u'\right)^{1/2}\left(\int_v^ur^3B_u^2\,\dd u'\right)^{1/2}\leq \frac{C_2^{1/2}}{2}C_{u}^{1/2}(u-v)^{2}
\end{align*}
It follows that
\begin{align}
\label{eqn:bound_r-2B}
\rrr^{-2}|B|\leq C_{\mathrm{pointwise}}
\end{align}
uniformly.
Together with (\ref{eqn:H1_u_unimproved}), this yields
\begin{align}
\label{eqn:bound_H1u}
\int_{u}^{u_1}r^3\left(B^2+(B_u)^2\right)\,\dd u<C'_u.
\end{align}

In a similar way, one also obtains
\begin{align*}
\int_{u_0}^vr^3\left(B^2+(B_v)^2\right)\,\dd v<C'_v
\end{align*}
for $v\in[u_0,u_1)$ from integrating $\del_vm$ and then deriving a bound on $r_u/\Omega^2$ as (\ref{eqn:bound_rv_Omega2}). Here one uses that
\begin{align*}
\frac{r_u}{\Omega^2}\bigg\lvert_{\II}=-\frac{r_v}{\Omega^2}\bigg\lvert_{\II}.
\end{align*}

Using the wave equation for $B$ in the form
\begin{align}
\label{eqn:wave_transport_a_priori}
\del_v\left(r^{3/2}B_u\right)=-\frac{3}{2}r^{1/2}r_uB_v-\frac{\Omega^2}{3r^{1/2}}\left(\e^{-2B}-\e^{-8B}\right)
\end{align}
yields
\begin{align*}
\left\lvert r(u,v)^{3/2}B_u(u,v)\right\lvert\leq& r(u,u_0)^{3/2}|B_u(u,u_0)|+C\int_{u_0}^vr(u,v')^{5/2}|B_v(u,v')|\,\dd v'\\
&+\frac{1}{3}\left(\int_{u_0}^v\frac{\Omega^2}{r^2}\,\dd v'\right)^{1/2}\left(\int_{u_0}^v\Omega^2r\left\lvert\e^{-2B}-\e^{-8B}\right\lvert^2\,\dd v'\right)^{1/2}
\end{align*}
Since $\Omega^2/r^2$ is bounded by virtue of the bounds established above, the third term is easily seen to be bounded. The second term is estimates as
\begin{align*}
\int_{u_0}^vr^{5/2}|B_v|\,\dd v'\leq& \left(\int_{u_0}^v r^3(B_v)^2\,\dd v'\right)^{1/2}\left(\int_{u_0}^v r^2\dd v'\right)^{1/2}\\\leq& C\left(\int_{u_0}^v\frac{\del_v r}{(-\rrr_v)}\,\dd v'\right)^{1/2}\\\leq& Cr(u,v)^{1/2}.
\end{align*}
Therefore,
\begin{align*}
	|rB_u|\leq r(u,u_0)|B_u(u,u_0)|+C,
\end{align*}
where $C$ depends on values on $\Nn$ and on $C_1$, $C_2,$, $C'_v$. Since $B$ is a classical solution up to and including the horizon, there is an $\alpha>0$ such that
\begin{align}
\label{eqn:asymptotic_expansion}
B=\rho^{\alpha}\left(a_0(t)+a_1(t)\rho+o(\rho)\right)
\end{align}
for smoothly differentiable functions $a_0$ and $a_1$ of $t=(u+v)/2$. From above, the asymptotics of $r$, $r_u$, $r_v$ and $\Omega^2$ are known as we approach the boundary. Inserting \eqref{eqn:asymptotic_expansion} into \eqref{eqn:wave_transport_a_priori}, we obtain $\alpha=4$. Therefore, $rB_u$ is bounded on $\Nn$ and we have established the desired pointwise bound on $rB_u$ in $\Delta$.

An application of the extension principle (Theorem~\ref{thm:extension_principle}) yields Theorem~\ref{thm:no_horizons_intro} if it also holds true that
\begin{align*}
\inf_{\IIa}\,\rrr_u>0 ~\mathrm{and~}\sup_{\IIa}\,\rrr_u<0.
\end{align*}
This has been established already in (\ref{eqn:inf_sup_ru}), thus finishing the proof of Theorem~\ref{thm:no_horizons_intro}.

\section*{Acknowledgement}

I would like to thank Mihalis Dafermos and Gustav Holzegel for their guidance and support during the work on my PhD thesis, a part of which is the content of this paper. I gratefully acknowledge the financial support of EPSRC, the Cambridge Trust and the Studienstiftung des deutschen Volkes.

\begin{appendices}

	\section{Biaxial Bianchi IX symmetry}
	
	\subsection{Formulae related to the renormalised Hawking mass}
	
	In this section, we collect some useful calculations and identities related to the renormalised Hawking mass. Let us first start generally with an $n$-dimensional Lorentzian manifold $(\MM,g)$ with Levi-Civita connection $\nabla$ and a spacelike hypersurface $(\Nn,\ov{g})$ with induced second fundamental form $K$. Let $n$ be the timelike normal on $\Nn$. Let $(\Sigma,\gamma)$ be a compact three-dimensional submanifold of $\Nn$, separating $\Nn$ into an inside and an outside. Let $\nu$ be the unit normal pointing outside. Set $l_{\pm}:=n\pm\nu$. For $X$ and $Y$ tangent to $\Sigma$, we define the symmetric null second forms
	\begin{align*}
	\chi_{\pm}(X,Y):=g\left(\nabla_Xl_{\pm},Y\right).
	\end{align*}
	Clearly,
	\begin{align*}
	\chi_{\pm}=\frac{1}{2}\LL_{l_{\pm}}g.
	\end{align*}
	The associated null expansion scalars are defined by
	\begin{align*}
	\theta_{\pm}:=\spur_{\gamma}\chi_{\pm}=\gamma^{AB}\left(\chi_{\pm}\right)_{AB}=\gamma^{AB}\nabla_A\left(l_{\pm}\right)_B.
	\end{align*}
	Let $H$ be the mean curvature of $\Sigma$. Then we immediately obtain
	\begin{align*}
	\theta_{\pm}=\spur_{\gamma}K\pm H.
	\end{align*}
	
	Let us now assume that $\Nn$ is foliated by topological  3-spheres $\Sigma_r$, where
	\begin{align*}
	4\pi^2 r=\mathrm{Vol}\left(\Sigma_r\right)=\int_{\Sigma_r}\dd\mu_{\gamma}=\int_{S^3}\sqrt{\gamma}\,\dd\mu_{S^3}.
	\end{align*}
	We compute
	\begin{align*}
	\LL_{l_{\pm}}\left(\mathrm{Vol}\left(\Sigma_r\right)\right)=\int_{S^3}\frac{1}{2\sqrt{\gamma}}\LL_{l_{\pm}}\left(\det\gamma\right)\,\dd\mu_{S^3}=\int_{S^3}\frac{1}{2}\gamma^{AB}\left(\LL_{l_{\pm}}\gamma\right)_{AB}\sqrt{\gamma}\,\dd\mu_{S^3}=\int_{\Sigma_r}\theta_{\pm}\,\dd\mu_{\Sigma_r}.
	\end{align*}
	Assuming that $\theta_{\pm}$ is constant on $\Sigma_r$, we obtain
	\begin{align*}
	l_{\pm}r=\frac{r}{3}\spur_{\gamma}\chi_{\pm}.
	\end{align*}
	Therefore,
	\begin{align}
	\label{eqn:delr_hypersurface}
	g(\nabla r,\nabla r)=-\frac{r^2}{12}\left(\spur_{\gamma}\chi_+\right)\left(\spur_{\gamma}\chi_{-}\right)=-\frac{r^2}{12}\left(\left(\spur_{\gamma}K\right)^2-H^2\right).
	\end{align}
	
	\subsection{The Einstein vacuum equations reduced by biaxial Bianchi~IX symmetry}
	\label{sec:EVE_biaxial_Bianchi_IX_general}
	
	We quote the following result from \citep{Dafermos-Holzegel_EH} and do not present a derivation:
	
	\begin{thm}
		\label{thm:EVE_symmetry_full}
		Let $(\MM,g)$ exhibit a biaxial Bianchi IX symmetry with metric
		\begin{align*}
		g=-\frac{1}{2}\Omega^2\left(\dd u\otimes\dd v+\dd v\otimes\dd u\right)+\frac{1}{4}r^2\e^{2B}\left(\sigma_1^2+\sigma_2^2\right)+\frac{1}{4}r^2\e^{-4B}\,\sigma_3^2
		\end{align*}		
		Then the Einstein vacuum equations (\ref{eqn:EVE_five})
		for $\Lambda=-6/\ell^2<0$, understood as a classical system of partial differential equations, are equivalent to the system of two constraint equations
		\begin{align}
		\label{eqn:constraint_u}
		\del_u\left(\frac{r_u}{\Omega^2}\right)=&-\frac{2r}{\Omega^2}\left(B_u\right)^2\\
		\label{eqn:constraint_v}
		\del_v\left(\frac{r_v}{\Omega^2}\right)=&-\frac{2r}{\Omega^2}\left(B_v\right)^2
		\end{align}
		and four evolution equations
		\begin{align}
		\label{eqn:wave_r}
		r_{uv}=&-\frac{\Omega^2 R}{3r}-\frac{2r_ur_v}{r}-\frac{\Omega^2r}{\ell^2}\\
		\label{eqn:wave_omega}
		\left(\log\Omega\right)_{uv}=&\frac{\Omega^2 R}{2r^2}+\frac{3}{r^2}r_ur_v-3B_uB_v+\frac{\Omega^2}{2\ell^2}\\
		\label{eqn:wave_B}
		B_{uv}=&-\frac{3}{2}\frac{r_u}{r}B_v-\frac{3}{2}\frac{r_v}{r}B_u-\frac{\Omega^2}{3r^2}\left(\e^{-2B}-\e^{-8B}\right),
		\end{align}
		where
		\begin{align*}
		R=&2\e^{-2B}-\frac{1}{2}\e^{-8B}
		\end{align*}
		is the scalar curvature of the group orbits.
	\end{thm}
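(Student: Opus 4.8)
The plan is to carry out the symmetry reduction directly by Cartan's method of moving frames: compute the full five-dimensional Ricci tensor of the given metric in an adapted orthonormal coframe and then impose $\mathrm{Ric}(g)=\frac{2}{3}\Lambda g=-\frac{4}{\ell^2}g$. First I would introduce the coframe $\omega^0=\tfrac{\Omega}{2}(\dd u+\dd v)$, $\omega^4=\tfrac{\Omega}{2}(\dd u-\dd v)$ on the Lorentzian base together with $\omega^1=\tfrac12 r\e^{B}\sigma_1$, $\omega^2=\tfrac12 r\e^{B}\sigma_2$, $\omega^3=\tfrac12 r\e^{-2B}\sigma_3$ on the fibre, so that $g=-(\omega^0)^2+(\omega^1)^2+(\omega^2)^2+(\omega^3)^2+(\omega^4)^2$. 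The essential simplification is that $r$, $B$ and $\Omega$ are functions on the base only while the $\sigma_i$ live on the fibre, so every $\dd\omega^a$ splits into a \emph{warping} piece carrying the base derivatives of $r\e^{B}$ and $r\e^{-2B}$, and a \emph{twisting} piece read off from the Maurer-Cartan relations of Remark~\ref{rk:after_EH-type}, e.g.\ $\dd\sigma_1=-\sigma_2\wedge\sigma_3=-\tfrac{4\e^{B}}{r^2}\,\omega^2\wedge\omega^3$. Because the $SU(2)\times U(1)$ action forces the reduced Ricci tensor to be diagonal in this frame with $\mathrm{Ric}_{11}=\mathrm{Ric}_{22}$ and no base-fibre components, there are exactly five independent scalar equations, to be matched against the reduced system \eqref{eqn:constraint_u}--\eqref{eqn:wave_B}.

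I would organise the curvature computation so that its three distinct contributions stay separate. The twisting pieces alone reproduce the intrinsic geometry of the squashed (Berger) orbit $\gamma=\tfrac14 r^2\e^{2B}(\sigma_1^2+\sigma_2^2)+\tfrac14 r^2\e^{-4B}\sigma_3^2$; evaluating its structure equations yields the anisotropic sectional curvatures of the $(\sigma_1,\sigma_2)$- and $\sigma_3$-planes, whose fibre trace is the scalar curvature $R=2\e^{-2B}-\tfrac12\e^{-8B}$ and whose difference is exactly the potential $\e^{-2B}-\e^{-8B}$ appearing in \eqref{eqn:wave_B}. The warping pieces produce the connection forms mixing base and fibre, whose curvature supplies the second-order terms $r_{uv}$ and $B_{uv}$ together with the quadratic first-order terms $r_ur_v$ and, from the mismatch between the $\e^{B}$- and $\e^{-2B}$-warpings, the cross term $B_uB_v$ in \eqref{eqn:wave_omega}. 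Finally, the right-hand side $-\tfrac{4}{\ell^2}g$ contributes the explicit $\Omega^2/\ell^2$ terms.

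With all components of $\mathrm{Ric}(g)+\tfrac{4}{\ell^2}g$ in hand, the last step is purely algebraic. The two components $\mathrm{Ric}(\del_u,\del_u)$ and $\mathrm{Ric}(\del_v,\del_v)$, which cannot see the cosmological term since $g(\del_u,\del_u)=g(\del_v,\del_v)=0$, give precisely the constraints \eqref{eqn:constraint_u} and \eqref{eqn:constraint_v}; the $\sigma_3$-component rearranges into the wave equation \eqref{eqn:wave_B} once the $(\sigma_1,\sigma_2)$-component is subtracted; and the remaining two relations, namely $\mathrm{Ric}(\del_u,\del_v)$ and the fibre trace, are the independent combinations yielding \eqref{eqn:wave_r} and \eqref{eqn:wave_omega}. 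Since this recombination is an invertible linear change among the five scalar equations, no information is lost in either direction, which is exactly the asserted equivalence; I would add that the contracted second Bianchi identity guarantees that \eqref{eqn:constraint_u}--\eqref{eqn:constraint_v} propagate off an initial slice, justifying their treatment as constraints rather than evolution equations.

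The main obstacle is bookkeeping rather than anything conceptual. The delicate terms are the twisting contributions of the Hopf direction $\sigma_3$, where $\dd\sigma_3=-\sigma_1\wedge\sigma_2$ feeds back anisotropically through the $\e^{-2B}$ (as opposed to $\e^{B}$) warping; it is here that the exact exponents in $R$ and in the potential must be tracked without sign or factor-of-two slips, and where naively identifying the orbit with a round sphere would fail. Given the fully explicit metric, I would corroborate the final components by a computer-algebra evaluation of the Ricci tensor, check the degenerate case $B\equiv 0$ (round orbits, $R=\tfrac32$, vanishing potential in \eqref{eqn:wave_B}), and verify that the Eguchi-Hanson-AdS family of the introduction solves the reduced system identically.
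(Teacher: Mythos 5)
Your proposal is correct and follows essentially the same route as the paper, whose proof is precisely a direct computation of the Ricci components of the reduced metric, with the $uu$ and $vv$ components yielding the constraints \eqref{eqn:constraint_u}--\eqref{eqn:constraint_v} and the $uv$ and fibre components yielding \eqref{eqn:wave_r}--\eqref{eqn:wave_B} after trivial algebraic recombination. Your moving-frame organisation, the remark that the null components cannot see the cosmological term, and the Bianchi-identity comment on constraint propagation merely flesh out the paper's three-line sketch; the slight difference in which linear combination of Ricci components is assigned to which displayed equation is immaterial, as you note the recombination is invertible.
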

	
	\begin{proof}
		One needs to compute the components of the Ricci curvature. The constraints are the $uu$ and $vv$ components. The equation \eqref{eqn:wave_omega} comes from the $uv$ component. The equations \eqref{eqn:wave_r} and \eqref{eqn:wave_B} are the content of the remaining components.
	\end{proof}

	Now our aim is to reformulate the equations of Theorem~\ref{thm:EVE_symmetry_full} in terms of the renormalised Hawking mass.
	The derivatives of the mass are given by
	\begin{align}
	\label{eqn:m_u}
	\del_um=&rr_u\left(1-\frac{2}{3}R\right)-\frac{4}{\Omega^2}r^3r_v(B_u)^2\\
	\label{eqn:m_v}
	\del_vm=&rr_v\left(1-\frac{2}{3}R\right)-\frac{4}{\Omega^2}r^3r_u(B_v)^2.
	\end{align}
	Since $x\mapsto 2\e^{-2x}-\e^{-8x}/2$ is positive for $|x|\rightarrow\infty$ and has no extremum, we conclude that $\del_um\leq 0$ and $\del_vm\geq 0$.
	
	\begin{propn}
		\label{propn:weak_defn_EVE}
		Assume that (\ref{eqn:wave_r}), (\ref{eqn:wave_B}), (\ref{eqn:m_u}) and (\ref{eqn:m_v}) hold. Moreover, define $\Omega$ via (\ref{eqn:auxiliary}). Then the constraints (\ref{eqn:constraint_u}) and (\ref{eqn:constraint_v}) hold. If the right hand side of (\ref{eqn:wave_r}) can be differentiated in $u$, then also (\ref{eqn:wave_omega}) holds.
	\end{propn}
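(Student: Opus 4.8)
The plan is to read the definition of $\Omega$ in (\ref{eqn:auxiliary}) as an algebraic avatar of the Hawking-mass formula, and then to extract the two constraints by differentiating that formula and matching against the assumed mass equations. Writing $r=1/\rrr$ in (\ref{eqn:auxiliary}) turns the definition of $\Omega^2$ into $1-\mu=-4r_ur_v/\Omega^2$, which, on inserting $1-\mu=1-2m/r^2+r^2/\ell^2$, is exactly
\[
m=\frac{r^2}{2}\left(1+4\frac{r_ur_v}{\Omega^2}\right)+\frac{r^4}{2\ell^2}.
\]
Thus defining $\Omega$ via (\ref{eqn:auxiliary}) is the same as enforcing this closed expression for $m$ as an identity, and this identity is my starting point.

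For the first constraint (\ref{eqn:constraint_u}) I would differentiate the displayed mass formula in $u$, which produces a term $2r^2r_v\,\del_u(r_u/\Omega^2)$ together with a remainder containing $r_{uv}$. Substituting the assumed wave equation (\ref{eqn:wave_r}) for $r_{uv}$, the $r_u^2r_v/\Omega^2$ and the $\ell^{-2}$ contributions cancel in pairs and the remainder collapses to $rr_u(1-\tfrac{2}{3}R)$. Comparing with the assumed mass equation (\ref{eqn:m_u}) and cancelling the common factor leaves
\[
2r^2r_v\,\del_u\!\left(\frac{r_u}{\Omega^2}\right)=-\frac{4r^3r_v}{\Omega^2}(B_u)^2,
\]
from which (\ref{eqn:constraint_u}) follows after dividing by $2r^2r_v$; since $r_v\neq0$ in the region under consideration (no trapped surfaces), or by continuity on its zero set, this is legitimate. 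The constraint (\ref{eqn:constraint_v}) is obtained in the same way by differentiating in $v$ and using (\ref{eqn:m_v}); note that this first part uses only (\ref{eqn:wave_r}) and the mass equations, not (\ref{eqn:wave_B}).

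For the wave equation (\ref{eqn:wave_omega}) I would exploit the newly established constraint in the equivalent differentiated form
\[
r_{uu}=2r_u(\log\Omega)_u-2r(B_u)^2,
\]
which is just (\ref{eqn:constraint_u}) after expanding $\del_u(r_u/\Omega^2)$ and using $(\Omega^2)_u/\Omega^2=2(\log\Omega)_u$. Differentiating this relation in $v$ expresses $r_{uuv}$ in terms of $(\log\Omega)_{uv}$ (with coefficient $2r_u$), $(\log\Omega)_u$, $B_{uv}$ and lower-order data. Independently, the hypothesis that the right-hand side of (\ref{eqn:wave_r}) may be differentiated in $u$ lets me compute $r_{uuv}$ as $\del_u$ of that right-hand side. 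Equating the two expressions for $r_{uuv}$, solving for $(\log\Omega)_{uv}$, and eliminating $B_{uv}$ by the assumed wave equation (\ref{eqn:wave_B}) should reproduce (\ref{eqn:wave_omega}).

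The clean part is the pair of constraints: the cancellations there are forced and essentially automatic. The main obstacle is this final step: after substituting $r_{uuv}$, $B_{uv}$, the constraint-form of $r_{uu}$ (and symmetrically $r_{vv}$), together with $R=2\e^{-2B}-\tfrac{1}{2}\e^{-8B}$ and $R_u=-4(\e^{-2B}-\e^{-8B})B_u$, one is left with a sizeable collection of terms of type $\Omega^2R/r^2$, $r_ur_v/r^2$, $B_uB_v$ and $\Omega^2/\ell^2$, and the bookkeeping to see that everything reorganises into precisely the four terms of (\ref{eqn:wave_omega}) is delicate. The differentiability assumption on (\ref{eqn:wave_r}) is exactly what licenses forming $r_{uuv}$ in the first place, which is why it cannot be dispensed with.
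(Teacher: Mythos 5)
Your proposal is correct and takes essentially the same route as the paper: the constraints are obtained by differentiating the identity that (\ref{eqn:auxiliary}) imposes between $\Omega^2$, $m$, $r_u$ and $r_v$ (the paper solves for $4r_u/\Omega^2$ and differentiates in $u$, which is your mass-formula computation in trivially rearranged form, with the same implicit division by $r_v$), substituting (\ref{eqn:wave_r}) and matching against (\ref{eqn:m_u})--(\ref{eqn:m_v}). For (\ref{eqn:wave_omega}) the paper likewise multiplies (\ref{eqn:constraint_u}) by $\Omega^2$, differentiates in $v$, computes $r_{uuv}$ from the $u$-derivative of the right-hand side of (\ref{eqn:wave_r}) --- exactly where the differentiability hypothesis enters, as you note --- and then eliminates $B_{uv}$ via (\ref{eqn:wave_B}), precisely your plan.
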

	
	\begin{proof}
		The proof is a calculation. We obtain
		\begin{align*}
		\frac{4r_u}{\Omega^2}=-\frac{1}{r_v}+\frac{2m}{r^2r_v}-\frac{r^2}{\ell^2r_v}
		\end{align*}
		from (\ref{eqn:auxiliary}). This yields
		\begin{align*}
		\del_u\left(\frac{r_u}{\Omega^2}\right)=&-\frac{r_u}{r_v}\frac{r_{uv}}{\Omega^2}+\frac{1}{2}\frac{m_u}{r^2r_v}-\frac{mr_u}{r^3r_v}-\frac{1}{2}\frac{r}{\ell^2}\frac{r_u}{r_v}\\
		=&\frac{r_u}{r_v}\left(\frac{1}{2r}-\frac{R}{3r}+\frac{R}{3r}-\frac{1}{2r}+\frac{m}{r^3}-\frac{r}{2\ell^2}+\frac{r}{\ell^2}-\frac{1}{2}\frac{r}{\ell^2}-\frac{m}{r^3}\right)-\frac{2}{\Omega^2}r\left(B_u\right)^2\\
		=&-\frac{2}{\Omega^2}r\left(B_u\right)^2.
		\end{align*}
		The second constraint equation is obtained analogously. To obtain the equation for $\left(\log\Omega\right)_{uv}$, we multiply (\ref{eqn:constraint_u}) by $\Omega^2$ and differentiate with respect to $u$:
		\begin{align*}
		-2r_u\left(\log\Omega\right)_{uv}+r_{uuv}-2r_{uv}\frac{\Omega_u}{\Omega}=-2r_v\left(B_u\right)^2-4rB_uB_{uv}
		\end{align*}
		Using (\ref{eqn:wave_r}), we obtain
		\begin{align*}
		r_{uuv}-2r_{uv}\frac{\Omega_u}{\Omega}=\frac{\Omega^2}{r^2}r_uR+\frac{4\Omega^2}{3r}B_u\left(\e^{-2B}-\e^{-8B}\right)+\frac{\Omega^2r_u}{\ell^2}+6\frac{(r_u)^2}{r^2}r_v+4r_v(B_u)^2.
		\end{align*}
		Applying (\ref{eqn:wave_B}), the desired equation follows.
	\end{proof}

\end{appendices}

\bibliographystyle{alphadin}


\addcontentsline{toc}{section}{Bibliography}
{	\footnotesize \bibliography{literature_Eguchi-Hanson}}

\end{document}